\definecolor{ltblue}{rgb}{0,0.4,0.4}
\definecolor{dkblue}{rgb}{0,0.1,0.6}
\definecolor{dkgreen}{rgb}{0,0.35,0}
\definecolor{dkviolet}{rgb}{0.3,0,0.5}
\definecolor{dkred}{rgb}{0.5,0,0}
\tikzstyle{box}=[shape=rectangle, text height=1.5ex, text depth=0.25ex, yshift=0.5mm, fill=white, draw=black, minimum height=5mm, yshift=-0.5mm, minimum width=5mm, font={\small}]
\tikzstyle{Z dot}=[inner sep=0mm, minimum size=2mm, shape=circle, draw=black, fill={rgb,255: red,216; green,248; blue,216}]
\tikzstyle{Z phase dot}=[minimum size=5mm, font={\footnotesize\boldmath}, shape=rectangle, rounded corners=2mm, inner sep=0.2mm, outer sep=-2mm, scale=0.8, tikzit shape=circle, draw=black, fill={rgb,255: red,216; green,248; blue,216}, tikzit draw=blue]
\tikzstyle{X dot}=[Z dot, shape=circle, draw=black, fill={rgb,255: red,255; green,136; blue,136}]
\tikzstyle{X phase dot}=[Z phase dot, tikzit shape=circle, tikzit draw=blue, fill={rgb,255: red,255; green,136; blue,136}, font={\footnotesize\color{black}\boldmath}]
\tikzstyle{hadamard}=[fill=yellow, draw=black, shape=rectangle, inner sep=0.6mm, minimum height=1.5mm, minimum width=1.5mm]
\tikzstyle{vertex}=[inner sep=0mm, minimum size=1mm, shape=circle, draw=black, fill=black]
\tikzstyle{vertex set}=[inner sep=0mm, minimum size=1mm, shape=circle, draw=black, fill=white, font={\footnotesize\boldmath}]
\tikzstyle{hadamard edge}=[-, color=blue, dashed, dash pattern=on 3pt off 1.5pt, thick]
\tikzstyle{brace edge}=[-, tikzit draw=blue, decorate, decoration={brace,amplitude=1mm,raise=-1mm}]
\tikzstyle{diredge}=[->]
\def\>{\ensuremath{\rangle}}
\def\<{\ensuremath{\langle}}
\newcommand {\E } {{\mathcal{E}}}
\newcommand{\hs}{\mathcal{H}}
\newcommand {\tr} {{\mathit{tr}}}
\newcommand {\sem}[1] {{\llbracket #1 \rrbracket}}
\newcommand{\smeet}{{\:\&\:}}
\newcommand{\QVar}{{\mathit{QV}}}
\newcommand{\qvar}{{\mathit{qv}}}
\newcommand{\type}{{\mathit{type}}}
\newcommand{\gateX}{{\mathsf{X}}}
\newcommand{\gateZ}{{\mathsf{Z}}}
\newcommand{\gateH}{{\mathsf{H}}}
\newcommand{\gateSWAP}{{\mathsf{SWAP}}}
\newcommand{\gateCNOT}{{\mathsf{CNOT}}}
\newcommand{\gateDeutsch}{{\mathsf{Deutsch}}}
\newcommand{\gateToffoli}{{\mathsf{Toffoli}}}
\newcommand{\gateFredkin}{{\mathsf{Fredkin}}}
\newcommand{\reason}[1]{{({\textnormal{By #1}})}}
\newcommand{\qbar}{{\overline{q}}}
\newcommand{\rbar}{{\overline{r}}}
\newcommand{\iskip}{{\mathbf{skip}}}
\newcommand{\iabort}{{\mathbf{abort}}}
\newcommand{\kwhile}{{\mathbf{while}}}
\newcommand{\kif}{{\mathbf{if}}}
\newcommand{\iinit}[2]{%
  \if\relax\detokenize{#2}\relax
    {{#1} := |0\>}
  \else
    {{#1} := {#2}}
  \fi
}
\newcommand{\iqif}[3]{%
  \if\relax\detokenize{#3}\relax
    {\mathbf{qif}\,[{#1}]\left({#2}\right)\mathbf{fiq}}
  \else
    {\mathbf{qif}\,[{#1}]\left({#2}\rightarrow{#3}\right)\mathbf{fiq}}
  \fi
}
\newcommand{\iif}[4]{%
  \if\relax\detokenize{#1}\relax
    \if\relax\detokenize{#4}\relax
        {\mathbf{if} {#2}\left({#3}\right)\mathbf{fi}}
    \else
        {\mathbf{if}\,[{#2}]\left(\square{#3}\rightarrow{#4}\right)\mathbf{fi}}
    \fi
  \else
    \if\relax\detokenize{#4}\relax
        {\mathbf{if}\,{#1}[{#2}]\left({#3}\right)\mathbf{fi}}
    \else
        {\mathbf{if}\,{#1}[{#2}]\left(\square{#3}\rightarrow{#4}\right)\mathbf{fi}}
    \fi
  \fi
}
\newcommand{\iwhile}[3]{
    {\mathbf{while}\,{#1}[{#2}]=1\,\mathbf{do}\ {#3}\ \mathbf{od}}
}
\newcommand{\sqif}[5]{%
    \if\relax\detokenize{#1}\relax
    \else
        \if\relax\detokenize{#4}\relax
            {#1}{\:\leftarrow\:}
        \else
            {#1}\stackrel{#4}{\leftarrow}
        \fi
    \fi
    {#2}
    \if\relax\detokenize{#3}\relax
    \else
        \if\relax\detokenize{#5}\relax
            {\:\rightarrow\:}{#3}
        \else
            \stackrel{#5}{\rightarrow}{#3}
        \fi
    \fi
}
\newcommand{\sif}[3]{
    \if\relax\detokenize{#1}\relax
    \else
        {#1}{\:\vartriangleleft\:}
    \fi
    {#2}
    \if\relax\detokenize{#3}\relax
    \else
        {\:\vartriangleright\:}{#3}
    \fi
}
\newcommand{\swhile}[2]{{#1}{\:\ast\:}{#2}}
\newtheorem{thm}{Theorem}[section]
\newtheorem{lem}{Lemma}[section]
\newtheorem{defn}{Definition}[section]
\newtheorem{prop}{Proposition}[section]
\newtheorem{exam}{Example}[section]
\newcommand{\bt}[1]{{{\color{blue}#1}}}
\newcommand{\rt}[1]{{{\color{red}#1}}}
\newcommand{\qbricks}{{\textsc{Qbricks}}}
\newcommand{\qwire}{{$\mathcal{Q}$\textsc{wire}}}
\newcommand{\reqwire}{{$Re\mathcal{Q}$\textsc{wire}}}
\begin{document}
 
\title{Laws of Quantum Programming}

\author{Mingsheng Ying}
\authornote{The first two authors contribute equally.}
\orcid{0000-0003-4847-702X}
 \affiliation{    
  \institution{Centre for Quantum Software and Information, University of Technology Sydney}          
    \city{Ultimo}
   \postcode{}
  \country{Australia}
}
\email{Mingsheng.Ying@uts.edu.au}       
 
\author{Li Zhou}
\authornotemark[1]
\orcid{0000-0002-9868-8477}
 \affiliation{
  \department{Key Laboratory of System Software (Chinese Academy of Sciences) and State Key Laboratory of Computer Science} 
  \institution{Institute of Software, Chinese Academy of Sciences}   
  \city{Beijing}       
  \country{China}
}
\email{zhouli@ios.ac.cn}
\email{zhou31416@gmail.com}

\author{Gilles Barthe}
\orcid{0000-0002-3853-1777}
 \affiliation{
  \institution{Max Planck Institute for Security and Privacy}   
  \city{Bochum}       
  \country{Germany}
}
\affiliation{
  \institution{IMDEA Software Institute}            
  \city{Madrid}
  \country{Spain}
}
\email{gilles.barthe@mpi-sp.org}


\begin{abstract} In this paper, we investigate the fundamental laws of quantum programming. We extend a comprehensive set of Hoare et al.'s basic laws of classical programming to the quantum setting. These laws characterise the algebraic properties of quantum programs, such as the distributivity of sequential composition over (quantum) $\kif$-statements and the unfolding of nested (quantum) $\kif$-statements. At the same time, we clarify some subtle differences between certain laws of classical programming and their quantum counterparts. Additionally, we derive a fixpoint characterisation of quantum $\kwhile$-loops and a loop-based realisation of tail recursion in quantum programming. Furthermore, we establish two normal form theorems: one for quantum circuits and one for finite quantum programs. The theory in which these laws are established is formalised in the Coq proof assistant, and all of these laws are mechanically verified. As an application case of our laws, we present a formal derivation of the principle of deferred measurements in dynamic quantum circuits. 

We expect that these laws can be utilised in correctness-preserving transformation, compilation, and automatic code optimisation in quantum programming. In particular, because these laws are formally verified in Coq, they can be confidently applied in quantum program development.
\end{abstract}

\begin{CCSXML}
<ccs2012>
   <concept>
       <concept_id>10011007.10010940.10010992.10010993.10010994</concept_id>
       <concept_desc>Software and its engineering~Functionality</concept_desc>
       <concept_significance>500</concept_significance>
       </concept>
   <concept>
       <concept_id>10011007.10010940.10010992.10010998.10011000</concept_id>
       <concept_desc>Software and its engineering~Automated static analysis</concept_desc>
       <concept_significance>500</concept_significance>
       </concept>
   <concept>
       <concept_id>10003752.10010124.10010131.10010132</concept_id>
       <concept_desc>Theory of computation~Algebraic semantics</concept_desc>
       <concept_significance>500</concept_significance>
       </concept>
 </ccs2012>
\end{CCSXML}

\ccsdesc[500]{Software and its engineering~Functionality}
\ccsdesc[500]{Software and its engineering~Automated static analysis}
\ccsdesc[500]{Theory of computation~Algebraic semantics}

\keywords{Quantum programming, algebraic laws, normal forms, formal verification in Coq.} 

\maketitle
 
\section{Introduction}
It is widely accepted that scientists and mathematicians uncover the fundamental laws of nature and mathematical structures, respectively, while engineers leverage these discoveries to address practical problems. 
This fundamental concept spurred Hoare et al.~\cite{Hoare87, Hoare88, Hoare93}  to pose a crucial question: \begin{itemize}
    \item \textit{What are the laws governing computer programming that establish the formal foundation for the field of software engineering?}
\end{itemize}
Their argument, rooted in the perspective that computer programs are essentially mathematical expressions, led them to establish a series of algebraic laws for programming, ranging from sequential to concurrent processes. 
The foundational laws articulated in~\cite{Hoare87, Hoare88, Hoare93} serve as a cornerstone in the establishment of the science of computer programming. 
Simultaneously, they prove invaluable in the realm of software engineering, encompassing tasks such as compilation, optimisation, and program transformation. 
For instance, during program development, a programmer may create a program aligning with end-user requirements. 
Utilising these laws, an optimising compiler can then transform it into an efficient program that effectively exploits the underlying architecture of the target machine (refer to~\cite{Hoare93} for a detailed example). 
This strategic approach has the advantage of ensuring correctness by construction during compilation. 
Each transformation within the process is mathematically proven based on these laws, providing a robust framework for the development and optimisation of computer programs.

The past decade has witnessed remarkable advancements in quantum computer hardware, leading to a surge in research on quantum programming languages and platforms~\cite{Qiskit, Cirq, ProjectQ, Omer05, LanQ, GKMW10, GP13, Qsharp, Quipper, Scaffold, TKet, isQ}. 
The research on quantum programming also spans from theoretical aspects, such as formal semantics, program logic, static analysis, and verification, to more practical considerations like compilation, optimisation, testing, and debugging. 
Despite this progress, the knowledge base of quantum programming as a scientific discipline remains fragmented and disjointed. 
Consequently, two fundamental questions naturally arise: 
\begin{itemize}
    \item \textit{What are the foundational laws that underpin the methodologies and techniques of quantum programming?}
    \item \textit{Are there discernible differences between these laws and those governing classical programming, as established in works like~\cite{Hoare87,Hoare88,Hoare93}?}
\end{itemize}
Addressing these questions is crucial for consolidating the understanding of quantum programming as a coherent and well-defined field, bridging the gap between theoretical insights and practical applications.
\begin{figure}
    \centering
    $\begin{array}{ccccc}\mbox{Classical-Quantum\ Hybrid\ Programs}\\ \Uparrow\\ {\rm Purely\ Quantum\ Programs}\\ \Uparrow\\ {\rm Quantum\ Circuits}
    \end{array}
    $
\caption{Three-Layer Framework for Quantum Programming.}
\label{fig 0}
\end{figure} 

The objective of this paper is to delve into fundamental inquiries and identify key laws that can establish an indispensable scientific foundation for quantum programming. To accomplish this, we present a comprehensive three-layer framework for quantum programming in Figure~\ref{fig 0}, 
with each layer intricately embedded into the one above: 
\begin{itemize}
    \item \textit{Quantum Circuits Description Layer}: 
    At the bottom is a description language for quantum circuits. 
    Assuming a set of basic gates, any quantum circuit can be defined using sequential composition and a quantum $\kif$-statement~\cite{Alt05, Ying12, Yuan, Bich, Voi23}.
    \item \textit{Purely Quantum Programs Layer}: 
    The middle layer introduces a language for writing purely quantum programs (without classical variables). 
    Building upon quantum circuits, an $\kif$-statement with the outcome of a quantum measurement as its condition is introduced. 
    This layer also defines the loop and recursion of quantum programs (with classical control flow). 
    Nondeterministic choice is added for the specification and refinement of quantum programs. 
    \item \textit{Classical-Quantum Hybrid Programming Layer}: 
    At the top is the classical-quantum hybrid programming layer, where purely quantum programs are embedded into a classical programming language. 
    This integration enables the programming of classical and quantum computation together, a necessity for most practical quantum algorithms.
\end{itemize}
While the components of this framework are not novel and have been extensively studied in previous literature, we organise them into a unified framework that clearly delineates the application of each law of quantum programming.

Within this three-layer framework, we establish a comprehensive set of basic laws of quantum programming. 
This includes two subsets of laws:
\begin{itemize}
    \item The first subset is proven for quantum circuits at the bottom layer of Figure~\ref{fig 0}. 
    Among them, several laws about the distributivity of quantum $\kif$-statements are particularly useful. 
    \item The second subset is established for purely quantum programs at the middle layer of Figure~\ref{fig 0}. 
    A congruence property is introduced, facilitating the lifting of laws for quantum circuits to quantum programs. 
    Then a diverse range of laws governing $\kif$-statements, loops, recursion, and nondeterministic choice are derived.   
\end{itemize}
These quantum programming laws can be further integrated with existing laws for classical (and probabilistic) programs~\cite{Hoare87, Hoare88, Hoare93, He97}. This integration enhances the framework's utility for reasoning about classical-quantum hybrid programs at the top layer of Figure~\ref{fig 0}.

Noteworthy technical contributions of this paper encompass:
\begin{itemize}
    \item \textit{\textbf{Normal forms}}: 
    We prove that every quantum circuit is equivalent to a normal form featuring a sequential composition of flat (but large) quantum $\kif$-statements. 
    A similar normal form (with an $\kif$-statement) is presented for finite quantum programs. 
    The significance of the normal forms lies in their applications, including providing a compilation strategy of quantum programs as a generalisation of that given in~\cite{Hoare93} and offering a method for checking the equivalence of two quantum programs by reducing them to a normal form.
    \item \textit{\textbf{Tail Recursion}}: 
    It is well-known in classical programming that tail recursion can be realised using loops. 
    We prove that this result holds true for recursive quantum programs with classical control flow. 
    A practical implication of this finding is that quantum programs with tail recursion can be easily implemented and optimised during compilation.
    \item \textbf{\textit{Verification in Coq}}: 
    We formalise the theory in which our laws are established in the Coq proof assistant, by leveraging CoqQ~\cite{Zhou23}, a general-purpose project for quantum program verification. 
    The formalisation from the first principle ensures the soundness of the laws presented in this paper. 
    We expect that they will be confidently and frequently used in quantum development. As observed in~\cite{BGZ10, Barb21}, algebraic laws like these are also very helpful in the security verification of quantum cryptography.
\end{itemize}

\subsection{An Illustrative Example}\label{illu-exam}

The laws proposed in this paper are anticipated to play a role in quantum programming, akin to that played by the laws presented in~\cite{Hoare87, Hoare88, Hoare93} in classical programming. 
To explore this prospect, this subsection demonstrates a simple example applying our proposed laws.

Quantum error correction (QEC) techniques are the key to large-scale and practical quantum computing. 
Usually, QEC codes can be conveniently written as quantum programs. 
Let us consider the simplest QEC code---the \textit{three qubit bit-flip code} (see~\cite{NC00}, Section 10.1.1). 
The following example shows how the correctness of this code can be derived directly using our laws.  

\begin{exam}[QEC Code]
    Suppose that we want to send a qubit $q$ in state $|\psi\>=\alpha|0\>+\beta|1\>$ through a channel that flips from $|0\>$ to $|1\>$ (and vice versa) with a certain probability, called a \textit{bit flip channel}. The basic idea of QEC is: 
    \begin{enumerate}
        \item (\textbf{Encoding}) Encode the state $|\psi\>$ in three qubits $q,q_1$ and $q_2$ as $\alpha|000\>+\beta|111\>$. This can be done by initialising $q_1,q_2$ both in basis state $|0\>$, and then performing the controlled-\textsf{NOT} gate twice, both with $q$ as the control qubit, but the first with $q_1$ and the second with $q_2$ as the target qubit.
        \item (\textbf{Sending via noisy channel}) Each of the three qubits is sent through an independent bit flip channel. Assume that a bit flip happens only on one or none of the qubits.
        \item (\textbf{Decoding}) Perform the controlled-\textsf{NOT} gate twice, both with $q$ as the control qubit, but the first with $q_2$ and the second with $q_1$ as the target qubit.
        \item (\textbf{Error correction})  Measure qubit $q_2$ in the computational basis: if the outcome is $0$, then the procedure terminates. Otherwise, we measure qubit $q_1$ also in the computational basis: if the outcome is $0$, then terminate; otherwise, perform the \textsf{NOT} gate on qubit $q$. Additional \textsf{NOT} gates are applied to $q_1$ and $q_2$ conditioned on the measurement outcomes to recover them to their initial state.
    \end{enumerate}
    As suggested in~\cite{Feng23}, the above entire process can be appropriately modelled as a nondeterministic quantum program:    
    \begin{align*}
        \mathit{QEC} \triangleq \ 
        {\color{gray}
        \begin{aligned}
            1 \\
            2 \\
            3
        \end{aligned}
        \left|\ \
        {\color{black}
        \begin{aligned}
            &\iinit{q_1}{}; \iinit{q_2}{}; \gateCNOT[q,q_1]; \gateCNOT[q,q_2]; \\
            &(\iskip\sqcup \gateX[q]\sqcup \gateX[q_1]\sqcup \gateX[q_2]); \\
            &\gateCNOT[q,q_2]; \gateCNOT[q,q_1]; 
            \sif{}{q_2}{(\sif{}{q_1}{(\gateX[q]; \gateX[q_1])}; \gateX[q_2])}.
        \end{aligned}
        }
        \right.
        }
    \end{align*} 
    Line 1 represents the encoding part where \(\gateCNOT\) denotes the controlled-\textsf{NOT} gate. 
    Line 2 models the noise where $\gateX$ denotes the \textsf{NOT} gate and $\sqcup$ stands for nondeterministic choice; thus, a bit flip occurs in at most one of the three channels. 
    Line 3 corresponds to the decoding and error correction part, 
    where $\sif{}{r}{P}$ denotes the conditional statement that if the outcome of the computational measurement on a qubit $r$ is $1$, then execute $P$.
\end{exam}
It was verified in~\cite{Feng23} using quantum Hoare logic that the program $\mathit{QEC}$ is correct: if $q$ in state $|\psi\>$ is inputted to it, then $q$ will be in the same state $|\psi\>$ at the output of $\mathit{QEC}$ (hence, error was corrected). 
Here, we prove the correctness of $\mathit{QEC}$ by \textit{program transformation} using only the laws established in this paper. 

Let $\sqif{C_0}{q}{C_1}{}{}$ be a quantum $\kif$-statement (see Eq. (\ref{q-control}) for its definition). 
First, we have:
\begin{equation}
\label{illu-exam-1}
    \begin{aligned}
        &\gateCNOT[q,q_2]; \gateX[q]; \gateCNOT[q,q_2]\\
        =\ &\sqif{\iskip}{q}{\gateX[q_2]}{}{}; \gateX[q]; \sqif{\iskip}{q}{\gateX[q_2]}{}{}\hspace{-3cm} & \\
        \equiv\ &\gateX[q]; \sqif{\gateX[q_2]}{q}{\iskip}{}{}; \sqif{\iskip}{q}{\gateX[q_2]}{}{} 
            & \reason{Prop.~\ref{choice-laws}(\ref{choice-laws-symmetry})} \\
        \equiv\ &\gateX[q]; \sqif{(\gateX[q_2]; \iskip)}{q}{(\iskip; \gateX[q_2])}{}{}
            & \reason{Prop.~\ref{circuit-seq-law}(\ref{circuit-seq-law-sequentiality})}\\
        \equiv\ &\gateX[q]; \sqif{\gateX[q_2]}{q}{\gateX[q_2]}{}{} 
            & \reason{Prop.~\ref{circuit-seq-law}(\ref{circuit-seq-law-unit})}\\
        \equiv\ &\gateX[q]; \gateX[q_2]. 
            & \reason{Prop.~\ref{qif-law}(\ref{qif-law-idempotence})}
    \end{aligned}
\end{equation}
Then we obtain: 
\begin{equation}
\label{illu-exam-2}
    \begin{aligned}
        &\gateCNOT[q,q_1]; \gateCNOT[q,q_2]; \gateX[q]; \gateCNOT[q,q_2]; \gateCNOT[q,q_1]\hspace{-3cm} &\\ 
        \equiv\ &\gateCNOT[q,q_1]; \gateX[q]; \gateX[q_2]; \gateCNOT[q,q_1]&\\
        \equiv\ &\gateCNOT[q,q_1]; \gateX[q]; \gateCNOT[q,q_1]; \gateX[q_2]
            & \reason{Prop.~\ref{circuit-seq-law}(\ref{circuit-seq-law-commutativity})}\\
        \equiv\ &\gateX[q]; \gateX[q_1]; \gateX[q_2].
            & \reason{an argument similar to (\ref{illu-exam-1})}
    \end{aligned}
\end{equation}
Likewise, we can prove:
\begin{equation}
\label{illu-exam-3}
    \begin{cases}
        \gateCNOT[q,q_1]; \gateCNOT[q,q_2]; \iskip; \gateCNOT[q,q_2]; \gateCNOT[q,q_1] \equiv\iskip,\\
        \gateCNOT[q,q_1]; \gateCNOT[q,q_2]; \gateX[q_1]; \gateCNOT[q,q_2]; \gateCNOT[q,q_1]\equiv \gateX[q_1],\\
        \gateCNOT[q,q_1]; \gateCNOT[q,q_2]; \gateX[q_2]; \gateCNOT[q,q_2]; \gateCNOT[q,q_1]\equiv \gateX[q_2].
    \end{cases}
\end{equation}
On the other hand, we have:
\begin{equation}
\label{illu-exam4}
    \begin{aligned}
        & \iinit{q_1}{}; \iinit{q_2}{}; \gateX[q]; \gateX[q_1]; \gateX[q_2]; \sif{}{q_2}{(\sif{}{q_1}{(\gateX[q]; \gateX[q_1])}; \gateX[q_2])}\hspace{-3cm} &\\ 
        \equiv\ &\gateX[q]; (\iinit{q_1}{}; \gateX[q_1]); (\iinit{q_2}{}; \gateX[q_2]); &\\
            & \sif{}{q_2}{(\sif{}{q_1}{(\gateX[q]; \gateX[q_1])}; \gateX[q_2])} 
            & \reason{Prop.~\ref{sequential-laws}(\ref{sequential-laws-commutativity})}\\
        \equiv\ &\gateX[q]; \iinit{q_1}{|1\>}; \iinit{q_2}{|1\>}; \sif{}{q_2}{(\sif{}{q_1}{(\gateX[q]; \gateX[q_1])}; \gateX[q_2])} 
            & \reason{Prop.~\ref{init-laws}(\ref{init-laws-unitary-elimination})}\\
        \equiv\ &\gateX[q]; \iinit{q_1}{|1\>}; \iinit{q_2}{|1\>}; (\sif{}{q_1}{(\gateX[q]; \gateX[q_1])}; \gateX[q_2])
            & \reason{Prop.~\ref{init-laws}(\ref{init-laws-if-elimination})}\\
        \equiv\ &\gateX[q]; \iinit{q_2}{|1\>}; \iinit{q_1}{|1\>}; \sif{}{q_1}{(\gateX[q]; \gateX[q_1])}; \gateX[q_2]
            & \reason{Prop.~\ref{sequential-laws}(\ref{sequential-laws-commutativity})}\\
        \equiv\ &\gateX[q]; \iinit{q_2}{|1\>}; \iinit{q_1}{|1\>}; (\gateX[q]; \gateX[q_1]); \gateX[q_2]
            & \reason{Prop.~\ref{init-laws}(\ref{init-laws-if-elimination})}\\
        \equiv\ &(\iinit{q_1}{|1\>}; \gateX[q_1]); (\iinit{q_2}{|1\>}; \gateX[q_2]); (\gateX[q]; \gateX[q])
            & \reason{Prop.~\ref{sequential-laws}(\ref{sequential-laws-commutativity})}\\
        \equiv\ & \iinit{q_1}{}; \iinit{q_2}{}.
            & \hspace{-3cm} \reason{Props.~\ref{sequential-laws}(\ref{sequential-laws-commutativity}),~\ref{circuit-seq-law}(\ref{circuit-seq-law-unit}--\ref{circuit-seq-law-composition})}
    \end{aligned}
\end{equation}
Similarly, we obtain: 
\begin{equation}
\label{illu-exam-5}
    \!\:
    \begin{cases}
        \iinit{q_1}{}; \iinit{q_2}{}; \iskip; \sif{}{q_2}{(\sif{}{q_1}{(\gateX[q]; \gateX[q_1])}; \gateX[q_2])}\equiv \iinit{q_1}{}; \iinit{q_2}{},\\
        \iinit{q_1}{}; \iinit{q_2}{}; \gateX[q_1]; \sif{}{q_2}{(\sif{}{q_1}{(\gateX[q]; \gateX[q_1])}; \gateX[q_2])}\equiv \iinit{q_1}{}; \iinit{q_2}{},\\
        \iinit{q_1}{}; \iinit{q_2}{}; \gateX[q_2]; \sif{}{q_2}{(\sif{}{q_1}{(\gateX[q]; \gateX[q_1])}; \gateX[q_2])}\equiv \iinit{q_1}{}; \iinit{q_2}{}.
    \end{cases}
\end{equation}
Finally, by combining Eqs. (\ref{illu-exam-2}) to (\ref{illu-exam-5}) and using Proposition~\ref{laws-nd}(\ref{laws-nd-distributivity-seq}), we assert:
\begin{equation}
\label{illu-exam-6}
    \begin{split}
        \mathit{QEC}
        &\equiv \iinit{q_1}{}; \iinit{q_2}{}; \big(\iskip\sqcup (\gateX[q]; \gateX[q_1]; \gateX[q_2])\sqcup \gateX[q_1]\sqcup \gateX[q_2]\big); \\ 
        &\quad \sif{}{q_2}{(\sif{}{q_1}{(\gateX[q]; \gateX[q_1])}; \gateX[q_2])}\\ 
        &\equiv (\iinit{q_1}{}; \iinit{q_2}{})\sqcup (\iinit{q_1}{}; \iinit{q_2}{}) \\
        &\quad \sqcup (\iinit{q_1}{}; \iinit{q_2}{})\sqcup (\iinit{q_1}{}; \iinit{q_2}{}) \\
        &\equiv \iinit{q_1}{}; \iinit{q_2}{}.
    \end{split}
\end{equation} 
It is clear from the last line of Eq. (\ref{illu-exam-6}) that $\mathit{QEC}$ does not change the state of $q$, and thus $q$ is still in state $|\psi\>=\alpha|0\>+\beta|1\>$ at the output of $\mathit{QEC}$. 

\subsection{Organisation of the Paper} 
The paper is organised as follows. 
For convenience of comparing the laws of quantum programming with their classical counterparts, we briefly review some basic laws of classical programming in Section~\ref{sec-classical-laws}. 
The three-layer framework of quantum programming is described in Section~\ref{sec-framework}. 
Our laws of quantum programming are developed from Sections~\ref{sec-laws-circ} through~\ref{sec-nondet}. 
In Section~\ref{sec-laws-circ}, various algebraic laws for quantum circuits at the bottom layer of our quantum programming framework are presented. 
The normal form of quantum circuits is also shown there. 
The laws for purely quantum programs at the second layer of our programming framework are introduced in Section~\ref{sec-laws-prog}. 
In particular, the normal form of finite quantum programs is proved there. 
In Section~\ref{sec-recur}, several laws for quantum loops and recursive quantum programs are established, including the fixpoint characterisation of quantum loops and the loop-based realisation of tail recursion in quantum programming. 
Some laws for deterministic quantum programs given in Sections~\ref{sec-laws-prog} are generalised to nondeterministic quantum programs in Section~\ref{sec-nondet}. 
Several new laws about nondeterministic quantum programs are also given there. In Section~\ref{Sec-Refine}, we present a series of laws for refinement of quantum programs. 
To show the effect of the laws developed in this paper, a more sophisticated example is given in Section~\ref{sec-app} where the principle of deferred measurements is formally derived using our laws. 
The formalisation of all these laws in the Coq proof assistant is presented in Section~\ref{sec-coq}. The related work is discussed in Section~\ref{sec-related}. 
A brief conclusion is drawn in Section~\ref{sec-concl}, where we also point out some topics for future research.

\section{Laws of Classical Programming}\label{sec-classical-laws}

In this section, we briefly recall the laws of classical programming established in~\cite{Hoare87, Hoare88, Hoare93}. 
On the one hand, these laws motivate our laws of quantum programming. 
On the other hand, a careful comparison between these laws and our laws of quantum programming presented in the subsequent sections will help us to understand better some basic (but subtle) differences between classical and quantum program constructs. 

The language employed in~\cite{Hoare87} to formulate the laws is Dijkstra's nondeterministic sequential programming language~\cite{Dij76}, which has the $\iskip$, $\iabort$ commands, assignment $x:=E$, sequential composition $P; Q$, conditional $\sif{P}{b}{Q}$ (standing for $\mathbf{if}\ b\ \mathbf{then}\ P\ \mathbf{else}\ Q$), nondeterministic choice $P\sqcup Q$, iteration $\swhile{b}{P}$ (standing for $\mathbf{while}\ b\ \mathbf{do}\ P$), and recursion $\mu X.F(X)$. 

The main laws established in~\cite{Hoare87, Hoare88, Hoare93} are presented in Figure~\ref{fig 1}. 
Most of these laws will be generalised in this paper to both the layers of quantum circuits and purely quantum programs (see Figure~\ref{fig 0}). 
The connections between these laws and our laws of quantum programming will be discussed whenever the latter are presented. 
At the third layer in Figure~\ref{fig 0}, the classical laws in Figure~\ref{fig 1} and our quantum laws can be combined together to reason about classical-quantum hybrid programs. 
\begin{figure}\centering
    \begin{align*}
        &(\mbox{If-1})\ \sif{P}{\mathit{true}}{Q} \equiv P \qquad 
         (\mbox{If-2})\ \sif{P}{\mathit{false}}{Q} \equiv Q \qquad 
         (\mbox{If-3})\ \sif{P}{b}{P} \equiv P\\
        &(\mbox{If-4})\ \sif{P}{b}{(\sif{Q}{b}{R})}\equiv \sif{(\sif{P}{b}{Q})}{b}{R}\\ 
        &(\mbox{If-5})\ \sif{P}{b}{Q} \equiv \sif{Q}{\neg b}{P}\\ 
        &(\mbox{If-6})\ \sif{P}{(\sif{c}{b}{d})}{Q} \equiv \sif{(\sif{P}{c}{Q})}{b}{(\sif{P}{d}{Q})}\\
        &(\mbox{If-7})\ \sif{P}{b}{(\sif{Q}{b}{R})} \equiv \sif{P}{b}{R}\\ 
        &(\mbox{If-8})\ \sif{(\sif{P}{b}{Q})}{c}{R} \equiv \sif{(\sif{P}{c}{R})}{b}{(\sif{Q}{c}{R})}\\
        &(\mbox{Sq-1})\ \iskip; P\equiv P; \iskip\equiv P\qquad\ \ \ 
         (\mbox{Sq-2})\ \iabort; P\equiv P; \iabort\equiv \iabort\\ 
        &(\mbox{Sq-3})\ P; (Q; R)\equiv (P; Q); R\qquad\ \ \ \ 
         (\mbox{Sq-4})\ (\sif{P}{b}{Q}); R\equiv \sif{(P; R)}{b}{(Q; R)} \\
        &(\mbox{Nd-1})\ P\sqcup Q\equiv Q\sqcup P\qquad\qquad\ \ \ \ 
         (\mbox{Nd-2})\ P\sqcup(Q\sqcup R)\equiv (P\sqcup Q)\sqcup R\\ 
        &(\mbox{Nd-3})\ \sif{(P\sqcup Q)}{b}{R} \equiv (\sif{P}{b}{R})\sqcup (\sif{Q}{b}{R})\\
        &(\mbox{Nd-4})\ \sif{R}{b}{(P\sqcup Q)} \equiv (\sif{R}{b}{P})\sqcup (\sif{R}{b}{Q})\\ 
        &(\mbox{Nd-5})\ (\sif{P}{b}{Q})\sqcup R \equiv \sif{(P\sqcup R)}{b}{(Q\sqcup R)}\\ 
        &(\mbox{Nd-6})\ (P\sqcup Q); R\equiv (P; R)\sqcup (Q; R)\qquad 
         (\mbox{Nd-7})\ R; (P\sqcup Q)\equiv (R; P)\sqcup (R; Q)\\
        &(\mbox{Re-1})\ \swhile{b}{P} \equiv \mu X.(\sif{(P; X)}{b}{\iskip})\\
        &(\mbox{Re-2})\ \mu X.F(X)\equiv F(\mu X.F(X))\qquad 
         (\mbox{Re-3})\ Y=F(Y)\Rightarrow \mu X.F(X)\subseteq Y
    \end{align*}
\caption{Laws of Classical Programming. 
    In law (If-6), $\sif{c}{b}{d}$ is a boolean expression that assumes value $c$ if $b$ is true and value $d$ if $b$ is false. 
    In law (Re-3), the inclusion relation $\subseteq$ represents the refinement of programs.}
\label{fig 1}
\end{figure}

\section{A Framework of Quantum Programming}\label{sec-framework}

In this section, we define a three-layer framework of quantum programming in which our laws can be properly accommodated. 
As pointed out in the Introduction, all ingredients of this framework are not new and have been widely used by the quantum programming community. 

We assume a set $\QVar$ of quantum variables and a set $\mathcal{U}$ of \textit{unitary matrix constants}. 
The unitary matrix constants in $\mathcal{U}$ will be instantiated in applications. 
For convenience, it is always assumed that the $2\times 2$ \textit{unit matrix} is in $\mathcal{U}$. We fix the following notations:
\begin{itemize}
    \item[-] Each quantum variable $q$ stands for a quantum system with a Hilbert space denoted by $\hs_q$ as its state space.
    \item[-] A sequence $\qbar=q_1,...,q_n$ of distinct quantum variables is called a quantum register. 
    It denotes a composite quantum system consisting of subsystems $q_1,...,q_n$. 
    Its state space, or type, denoted by $\type(\qbar)$, is given by the tensor product $\bigotimes_{i=1}^n\hs_{q_i}$.
    We write $\qbar_1,\qbar_2$ for the concatenation of two distinct registers $\qbar_1$ and $\qbar_2$.
    For simplicity of the presentation, we often identify $\qbar$ with the set $\{q_1,...,q_n\}$ of quantum variables occurring in $\qbar$. 
    \item[-] Each unitary matrix constant $U\in\mathcal{U}$ assumes a type of the form $\type(U)=\hs_1\otimes ...\otimes\hs_n$. 
    This means that the unitary transformation denoted by $U$ can be performed on a composite quantum system consisting of $n$ subsystems with state spaces $\hs_1,...,\hs_n$, respectively. 
    Thus, if $\qbar=q_1,...,q_n$ is a quantum register with $\hs_{q_i}=\hs_i$ for $i=1,...,n$ (that is, the types of $U$ matches that of the register $\qbar$), then $U[\qbar]$ can be thought of as a basic quantum gate with quantum wires $q_1,...,q_n$.  
\end{itemize}

\subsection{Quantum circuits}
The first layer of our framework (see Figure~\ref{fig 0}) is a description language of quantum circuits. 
Formally, we have: 
\begin{defn}
\label{def-circ}
    Quantum circuits $C\in\mathbf{QC}$ are defined by the syntax:
    \begin{align}
    \label{def-QC}
        C::= \iskip \mid U[\qbar] \mid C_1; C_2 \mid 
        \iqif{\qbar}{\square_{i=1}^d |\psi_i\>}{C_i}
    \end{align}
    More precisely, quantum circuits $C$ are inductively defined by the following clauses, where $\qvar(C)$ denotes the quantum variables in $C$:
    \begin{enumerate} 
        \item\label{circ-skip} \textbf{Skip}: $\qvar(\iskip) = \emptyset$;
        \item\label{circ-gate} \textbf{Basic gates}: If $U\in\mathcal{U}$ is a unitary matrix constant and $\qbar$ is a quantum register such that their types match, then quantum gate $U[\qbar]$ is a circuit, and $\qvar(U[\qbar])=\qbar$;  
        \item\label{circ-seq} \textbf{Sequential composition}: If $C_1$ and $C_2$ are circuits, then $C\equiv C_1; C_2$ is a circuit too, and $\qvar(C)=\qvar(C_1)\cup\qvar(C_2)$;
        \item\label{circ-qif} \textbf{Quantum $\kif$-statement}: If $\qbar$ is a quantum register, $\left\{|\psi_i\>\right\}_{i=1}^d$ is an orthonormal basis of the Hilbert space $\type(\qbar)$, and $C_i$ $(i=1,...,d)$ are circuits with
        \begin{equation}
        \label{coin-cond}
            \qbar\cap\left(\bigcup_{i=1}^d\qvar(C_i)\right)=\emptyset,
        \end{equation}
        then
        \begin{equation}
        \label{q-mux}
            C\equiv \iqif{\qbar}{\square_{i=1}^d |\psi_i\>}{C_i}
        \end{equation}
        is a circuit, and $\qvar(C)=\qbar\cup\left(\bigcup_{i=1}^d\qvar(C_i)\right)$. $\qbar$ is called the guard register, and $\{|\psi_i\>\}$ is called the guard basis of the quantum $\kif$-statement (\ref{q-mux}). 
    \end{enumerate}
\end{defn}

Intuitively, each $C\in\mathbf{QC}$ represents a circuit with quantum wires $\qvar(C)$. 
In other words, the type of $C$ is $\type(C)=\bigotimes_{q\in\qvar(C)}\hs_q$, and as will be seen in Definition~\ref{def-c-sem}, the semantics of $C$ is a unitary operator on $\type(C)$. 
The circuit constructs introduced in the above definition are explained as follows. 
\begin{enumerate}
    \item[(i)] The circuit $C_1; C_2$ in Clause (\ref{circ-seq}) stands for the sequential composition of circuits $C_1$ and $C_2$. 
    Indeed, if $C_1$ and $C_2$ do not share quantum variables; that is, $\qvar(C_1)\cap \qvar(C_2)=\emptyset$, we can also define their parallel composition $C_1\otimes C_2$. But $C_1; C_2$ and $C_1\otimes C_2$ are semantically equivalent whenever $\qvar(C_1)\cap \qvar(C_2)=\emptyset$. 
    So, the parallel composition is not included in the above definition. 
    \item [(ii)] The quantum $\kif$-statement defined in Eq. (\ref{q-mux}) has been extensively discussed in the previous literature~\cite{Alt05, Ying12, Sab18, Yuan, Bich, Voi23}. 
    The condition (\ref{coin-cond}) means that $\qbar$ is a system external to all $C_i$ $(i=1,...,d)$. Semantically, a quantum $\kif$-statement (\ref{q-mux}) is a quantum multiplexor (i.e., a multi-way generalisation of conditional)~\cite{Markov}. 
    A multiplexer can be understood as a switch that passes one of its data inputs through to the output, as a function of a set of select inputs. 
    Here, $\qbar$ is the select register, and if $\qbar$ is in state $|\psi_i\>$, then a quantum datum $|\varphi\>$ is inputted to the corresponding circuit $C_i$ and an output $|\varphi_i\>$ is obtained at the end of $C_i$. 
    A basic difference between classical and quantum multiplexors is that the quantum select register $\qbar$ can be in a superposition of $|\psi_i\>$ $(i=1,...,d)$, say $\sum_{i=1}^d\alpha_i|\psi_i\>$. 
    In this case, the output is then $\sum_{i=1}^d\alpha_i|\varphi_i\>$, a superposition of the outputs of different circuits $C_i$ $(i=1,...,d)$. Therefore, the control flow of a quantum $\kif$-statement is inherently quantum. 
\end{enumerate}

For simplicity of presentation, let us introduce a syntactic sugar for quantum $\kif$-statement when some of the branches are $\iskip$: 
\[
    \iqif{\qbar}{\square_{i\in I} |\psi_i\>}{U_i}
    \triangleq
    \iqif{\qbar}{\left(\square_{i\in I} |\psi_i\>\rightarrow U_i\right)\ \square\ \left(\square_{i\notin I} |\psi_i\>\rightarrow \iskip\right)}{}
\]

The following example illustrates how some frequently used (multi-qubit) quantum gates can be defined in terms of quantum $\kif$-statements.  
\begin{exam}
    Assume that the single-qubit \textsf{NOT} matrix $\gateX$, and $\rbar_x(\theta)=iR_x(2\theta)$ are unitary matrix constants in $\mathcal{U}$, where $R_x(\theta)$ is the rotation about the $X$ axis:  
    \[
        R_x(\theta)=\left(
        \begin{array}{cc}
            \cos\frac{\theta}{2} & -i\sin\frac{\theta}{2}\\ 
            -i\sin\frac{\theta}{2} &\cos\frac{\theta}{2}
        \end{array}
        \right).
    \]
    Then:
    \begin{enumerate}
        \item The \(\gateCNOT\) (controlled-\textsf{NOT}) gate with $q_1$ as its control qubit can be defined by
        \begin{align*}
            \gateCNOT[q_1,q_2] \triangleq \iqif{q_1}{|1\>}{\gateX[q_2]}.
        \end{align*} 
        \item The \(\gateToffoli\) gate with $q_1,q_2$ as its control qubits is defined by 
        \begin{align*}
            \gateToffoli[q_1,q_2,q_3] \triangleq \iqif{q_1,q_2}{|11\>}{\gateX[q_3]}.
        \end{align*} 
        \item The \(\gateDeutsch\) gate with $q_1,q_2$ as its control qubits is defined by 
        \begin{align*}
            \gateDeutsch(\theta)[q_1,q_2,q_3] \triangleq \iqif{q_1,q_2}{|11\>}{\rbar_x(\theta)[q_3]}.
        \end{align*}
        Note that $\gateDeutsch(\frac{\pi}{2})=\gateToffoli$. 
        \item The \(\gateFredkin\) gate with $q_1$ as its control qubit is defined by 
        \begin{align*}
            \gateFredkin[q_1,q_2,q_3] \triangleq \iqif{q_1}{|1\>}{\gateSWAP[q_2,q_3]}
        \end{align*}
        where the $\gateSWAP$ gate is defined by
        $$\gateSWAP[q_2,q_3] \triangleq \gateCNOT[q_2,q_3]; \gateCNOT[q_3,q_2]; \gateCNOT[q_2,q_3].$$
    \end{enumerate}
\end{exam}

\subsection{Quantum $\kwhile$-programs}\label{sec-qprog}

Upon the layer of quantum circuits described in the previous subsection, the second layer consists of purely quantum programs, namely, quantum programs without classical variables. 
They are written in the quantum $\kwhile$-language as follows.
\begin{defn}
\label{def-prog}
    Quantum programs $P\in\mathbf{QProg}$ are defined by the syntax\footnote{Note that we use the same notations $\iskip$ and $; $ (sequential composition) at both the circuit and program layers. This does not cause ambiguity in semantics.}:
    \begin{equation}
    \label{prog-syntax}
        \begin{split}
            P::=&\ \iskip \mid \iabort \mid \iinit{\qbar}{|\psi\>} \mid C \mid P_1; P_2\\ 
            & \mid \iif{M}{\qbar}{m}{P_m} \mid \iwhile{M}{\qbar}{P}.
        \end{split}
    \end{equation}
\end{defn}

Here, $\iinit{\qbar}{|\psi\>}$ means that the quantum register (i.e., a sequence of quantum variables) $\qbar$ is initialised in state $|\psi\>$. 
$C$ stands for an arbitrary quantum circuit defined in Eq. (\ref{def-QC}). $\iskip$, $\iabort$ and sequential composition $P_1; P_2$ are the same as in classical programming. 
Quantum measurement $M$, represented by a set of linear operators $M = \{M_i\}_{i\in\mathcal{J}}$ satisfying the completeness equation $\sum_{i\in\mathcal{J}}M_i^\dag M_i = I$, is employed as the guard of branch statements.
The $\kif$-statement means that a measurement $M$ is performed on quantum register $\qbar$, and if the outcome is $m$, then the corresponding subprogram $P_m$ is executed next. 
It is fundamentally different from the quantum $\kif$-statement defined in the above subsection. 
The control flow of this $\kif$-statement is classical and determined by classical information; that is, the outcomes of measurement $M$. 
In sharp contrast, the control flow of a quantum $\kif$-statement is quantum, as discussed in the above subsection. 
We often abbreviate $\iif{M}{\qbar}{m}{P_m}$ to $\iif{}{\qbar}{m}{P_m}$ whenever $M$ is the measurement in the computational basis of qubits.  
The $\kwhile$-loop performs a measurement (with only two possible outcomes $0$ and $1$) on $\qbar$; if the outcome is $1$ then the loop body $P$ is executed, and if the outcome is $0$ then the loop terminates. 
So, its control flow is also classical.   

This simple quantum programming language will be expanded by adding recursion and nondeterministic choice in Sections~\ref{sec-recur} and~\ref{sec-nondet}, respectively. 

\subsection{Semantics}

In this subsection, we briefly describe the denotational semantics of quantum circuits $\mathbf{QC}$ and purely quantum programs $\mathbf{QProg}$ (for more details, we refer to Chapter 3 of~\cite{Ying16}). 
The correctness of our laws of quantum programming will be proved based on them. 
In Section~\ref{sec-coq}, these semantics will be formalised in the Coq proof assistant so that the laws of quantum programming developed in this paper can be mechanically proved in Coq. 
\begin{defn}
\label{def-c-sem}
    The denotational semantics $\sem{C}$ of a quantum circuit $C\in\mathbf{QC}$ is a unitary operator on Hilbert space $\hs_{\qvar(C)}$ inductively defined as follows: 
    \begin{enumerate}
        \item If $C = \iskip$, then $\sem{C}=I$ (the identity operator);
        \item If $C=U[\qbar]$, then $\sem{C}=U_{\qbar}$;
        \item If $C=C_1; C_2$, then $\sem{C}=\left(\sem{C_2}\otimes I_{\qvar(C_1)\setminus\qvar(C_2)}\right)\left(\sem{C_1}\otimes I_{\qvar(C_2)\setminus\qvar(C_1)}\right)$;
        \item If $C=\iqif{\qbar}{\square_{i=1}^d |\psi_i\>}{C_i}$, then $\sem{C}=\sum_{i=1}^d|\psi_i\>_{\qbar}\<\psi_i|\otimes\left (\sem{C_i}\otimes I_{\qvar(C)\setminus\qbar\setminus\qvar(C_i)}\right)$.
    \end{enumerate}
\end{defn}

Let $\mathcal{D}(\hs_\QVar)$ denote the set of partial density operators (i.e., positive operators with traces $\leq 1$) on $\hs_\QVar$. 
Then the denotational semantics of a quantum program $P\in\mathbf{QProg}$ was defined in~\cite{Ying16} as a mapping $\sem{P}:\mathcal{D}(\hs_\QVar)\rightarrow \mathcal{D}(\hs_\QVar).$ 
The L\"{o}wner order $\sqsubseteq$ between operators on $\hs_\QVar$, defined as $A\sqsubseteq B$ if $B-A$ is positive semi-definite, can be lifted pointwise to an order between quantum operations (i.e., super-operators), also denoted $\sqsubseteq$. 
We write $\mathcal{QO}(\hs_\QVar)$ for the set of quantum operations on $\hs_\QVar$. 
It can be proved that $\left(\mathcal{QO}(\hs_\QVar),\sqsubseteq\right)$ is a CPO (Complete Partial Order)~\cite{Ying16}. 
Furthermore, the denotational semantics of quantum programs enjoys the following:  

\begin{prop}[Structural Representation]
\label{sempro} 
    For any input $\rho$, we have: 
    \begin{enumerate}  
        \item\label{sempro-skip} $\sem{\iskip} (\rho)=\rho$; $\sem{\iabort} (\rho)=0$. 
        \item\label{sempro-init} $\sem{ \iinit{q}{|\psi\>}} (\rho)=\sum_n |\psi\>_q\<n|\rho|n\>_q\< \psi|$, where $\{|n\>\}$ is a given orthonormal basis of $\hs_q$. 
        \item\label{sempro-unitary} $\sem{C} (\rho)=\sem{C}\rho \sem{C}^{\dag}$\footnote{We slightly abuse notation here: $\sem{C}$ on the LHS denotes the program semantics, which is a quantum operation, whereas $\sem{C}$ on the RHS denotes the circuit semantics, which is a unitary operator.}.
        \item\label{sempro-seq} $\sem{P_1; P_2} (\rho)=\sem{P_2} (\sem{P_1} (\rho))$.
        \item\label{sempro-if}  $\sem{\iif{M}{\qbar}{m}{P_m}} (\rho)=\sum_m\sem{P_m} \left((M_m)_\qbar \rho (M_m)_\qbar^{\dag}\right)$.
        \item\label{sempro-while}  Let $\kwhile$ stand for quantum loop $\iwhile{M}{\qbar}{P}$. 
        Then
        \begin{equation}
        \label{loop-sem-function-1}
            \sem{\kwhile}(\rho)=\sum_{n=0}^\infty\left[\E_0\circ (\sem{P}\circ \E_1)^n\right]=\bigsqcup_{k=0}^{\infty}\sem{\kwhile^{(k)}}(\rho),
        \end{equation}
        where for every $k\geq 0$, $\kwhile^{(k)}$ is the first $k$ iterations of the loop; that is, $\sem{\kwhile^{(k)}}=\sum_{n=0}^{k-1}\left[\E_0\circ (\sem{P}\circ \E_1)^n\right]$ with $\E_i(\rho)=(M_i)_\qbar\rho (M_i)_\qbar^\dag$ for $i=0,1$, and the symbol $\bigsqcup$ stands for the supremum of quantum operations; i.e., the least upper bound in the CPO $\left(\mathcal{QO}(\hs_\QVar),\sqsubseteq\right)$.
    \end{enumerate}
\end{prop}

\subsection{Further Expansion}

As outlined in the Introduction, at the third layer of our framework, purely quantum programming language $\mathbf{QProg}$ defined in Subsection~\ref{sec-qprog} is further embedded into a classical programming language (e.g., Dijkstra's nondeterministic sequential programming language~\cite{Dij76} employed in Section~\ref{sec-classical-laws} for formulating the laws of classical programming), and a program construct of the form $x:=M[\qbar]$ is used to link classical and quantum computing, where quantum measurement $M$ is performed on quantum register $\qbar$ and the outcome is stored in classical variable $x$. 
Thus, the laws of classical programming, like those presented in Figure~\ref{fig 1}, can be combined with our laws to be established in the sequent sections for reasoning about classical-quantum hybrid programs at this layer. Due to the limitation of space, we are not going to elaborate on such a combination in this paper.    

\section{Algebraic Laws for Quantum Circuits}\label{sec-laws-circ}

Now we are ready to introduce our laws of quantum programming. 
As the first step, in this section, we prove a set of algebraic laws for quantum circuits in $\mathbf{QC}$, constructed from basic gates given by $\mathcal{U}$ using sequential composition and quantum $\kif$-statement. 
To describe these laws, we need the notion of equivalence between quantum circuits:

\begin{defn}
    Let $C_1,C_2\in\mathbf{QC}$ be two quantum circuits. 
    They are called equivalent, written $C_1\equiv C_2$, if $\sem{C_1}\otimes I_{\qvar(C_2)\setminus\qvar(C_1)}=\sem{C_2}\otimes I_{\qvar(C_1)\setminus\qvar(C_2)}.$
\end{defn}

A general form of quantum $\kif$-statement defined by a quantum register is allowed in Definition~\ref{def-circ}. 
But for simplicity of presentation, we often only consider quantum $\kif$-statements defined by a single qubit. 
Following an idea in~\cite{Hoare87}, we choose to use an infix notation for them. 
Let $|\varphi_0\>, |\varphi_1\>$ be a basis of the Hilbert space of qubit $q$. 
Then
\begin{equation}
\label{q-control}
    \sqif{C_0}{q}{C_1}{|\varphi_0\>}{|\varphi_1\>} \triangleq 
    \iqif{q}{|\varphi_0\>\rightarrow C_0\ \square\ |\varphi_1\>\rightarrow C_1}{}.
\end{equation}
In particular, whenever the basis $\{|\varphi_0\>, |\varphi_1\>\}$ is fixed (e.g., the computational basis $\{|0\>,|1\>\}$) or known from the context, then $\sqif{C_0}{q}{C_1}{|\varphi_0\>}{|\varphi_1\>}$ is abbreviated to $\sqif{C_0}{q}{C_1}{}{}$. 
Furthermore, $\sqif{\iskip}{q}{C_1}{}{}$ is abbreviated to $\sqif{}{q}{C_1}{}{}$. 
We also write $\sqif{}{\qbar}{C}{}{|\varphi\>}$ for $\iqif{\qbar}{|\varphi\>}{C}$, where only the $|\varphi\>$ branch is not $\iskip$.
These notations will significantly simplify the presentation of our laws for quantum circuits.

\subsection{Basic Laws}

Let us first present some basic laws for quantum circuits. 
For readability, the proofs of them are postponed to the Appendix.
Our first proposition shows a group of laws for quantum $\kif$-statements. 

\begin{prop}[Laws of quantum $\kif$-statement]
\label{qif-law}
    \quad
    \begin{enumerate}
        \item\label{qif-law-changing-basis} (Changing Basis) If unitary $U$ transforms basis $|\varphi_0\>, |\varphi_1\>$ to basis $|\psi_0\>, |\psi_1\>$ up to a phase; that is, $U|\varphi_i\>= \lambda_i|\psi_i\>$ $(i=0,1)$, then 
        \[
            \sqif{C_0}{q}{C_1}{|\psi_0\>}{|\psi_1\>} \equiv
            U^\dag[q]; \sqif{C_0}{q}{C_1}{|\varphi_0\>}{|\varphi_1\>}; U[q].
        \]
        \item\label{qif-law-symmetry}  (Symmetry) $\sqif{C_0}{q}{C_1}{|\varphi_0\>}{|\varphi_1\>} \equiv \sqif{C_1}{q}{C_0}{|\varphi_1\>}{|\varphi_0\>}$.
        \item\label{qif-law-idempotence} (Idempotence) $\sqif{C}{q}{C}{}{} \equiv C$. 
        \item\label{qif-law-distributivity} (Distributivity) If $q_2\notin\qvar(C)$, then 
        \[
            \sqif{C}{q_1}{(\sqif{C_0}{q_2}{C_1}{}{})}{}{} \equiv 
            \sqif{(\sqif{C}{q_1}{C_0}{}{})}{q_2}{(\sqif{C}{q_1}{C_1}{}{})}{}{}.
        \]
        \item\label{qif-law-nested} (Nested-\textbf{qif})
        $\sqif{(\sqif{C_{00}}{q_2}{C_{01}}{}{})}{q_1}{(\sqif{C_{10}}{q_2}{C_{11}}{}{})}{}{}
        \equiv \iqif{q_1, q_2}{|i,j\>}{C_{ij}}$.
    \end{enumerate}
\end{prop}

Clause (\ref{qif-law-changing-basis}) in the above proposition is quantum-specific and has no classical counterpart. 
Clauses (\ref{qif-law-symmetry}) to (\ref{qif-law-nested}) are generalisations of classical laws (If-3), (If-5), (If-6) and (If-8) in Figure~\ref{fig 1}.  

The next proposition gives several laws for the sequential composition of quantum circuits. 

\begin{prop}[Laws of sequential composition]
\label{circuit-seq-law}
    \quad
    \begin{enumerate}
        \item\label{circuit-seq-law-unit} (Unit) $\iskip; C\equiv C; \iskip\equiv C$.
        \item\label{circuit-seq-law-identity} (Identity) $I[\qbar] \equiv \iskip$, where $I$ is the identity operator.
        \item\label{circuit-seq-law-composition} (Sequential Composition) $U_1[\qbar]; U_2[\qbar]\equiv (U_2U_1)[\qbar]$ if $U_2U_1\in\mathcal{U}$.
        \item\label{circuit-seq-law-disjoint-composition} (Parallel  Composition) $U_1[\qbar]; U_2[\rbar]\equiv U_2[\rbar]; U_1[\qbar]\equiv (U_1\otimes U_2)[\qbar,\rbar]$ if $\qbar\cap\rbar=\emptyset$ and $U_1\otimes U_2\in\mathcal{U}$. 
        \item\label{circuit-seq-law-commutativity} (Commutativity) If $\qvar(C_1)\cap\qvar(C_2)=\emptyset$, then $C_1; C_2\equiv C_2; C_1.$ 
        \item\label{circuit-seq-law-associativity} (Associativity) $(C_1; C_2); C_3\equiv C_1; (C_2; C_3).$
        \item\label{circuit-seq-law-sequentiality} (Sequentiality) 
        $(\sqif{C_0}{q}{C_1}{}{}); (\sqif{D_0}{q}{D_1}{}{}) \equiv \sqif{(C_0; D_0)}{q}{(C_1; D_1)}{}{}.$
        \item\label{circuit-seq-law-distributivity} (Distributivity over \textbf{qif}) If $q\notin\qvar(C)$, then
        $$\qquad C; (\sqif{C_0}{q}{C_1}{}{})= \sqif{(C; C_0)}{q}{(C; C_1)}{}{},\quad 
        (\sqif{C_0}{q}{C_1}{}{}); C = \sqif{(C_0; C)}{q}{(C_1; C)}{}{}.
        $$
    \end{enumerate}
\end{prop}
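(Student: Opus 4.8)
The plan is to prove every clause by the same recipe: unfold the denotational semantics of Definition \ref{def-c-sem} on both sides, lift each resulting operator to the common Hilbert space $\hs_W$ where $W$ is the union of all quantum variables occurring on either side (this is exactly what the padding $\otimes I_{\mathit{qv}(\cdot)\setminus\mathit{qv}(\cdot)}$ in the definition of circuit equivalence demands), and then verify equality of the two padded unitaries as an identity of operators on $\hs_W$. Three elementary facts do all the work: operator multiplication is associative; operators acting on disjoint tensor factors commute, i.e. $(A\otimes I)(I\otimes B)=A\otimes B=(I\otimes B)(A\otimes I)$; and the guard basis $\{|\varphi_i\>\}$ is orthonormal, so $\<\varphi_i|\varphi_j\>=\delta_{ij}$.

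First I would dispatch the routine clauses. For (\ref{circuit-seq-law-unit}) and (\ref{circuit-seq-law-identity}), unfolding gives $\llbracket\mathbf{skip};C\rrbracket=\llbracket C\rrbracket\cdot I=\llbracket C\rrbracket$ and $\llbracket I[\overline q]\rrbracket=I_{\overline q}=\llbracket\mathbf{skip}\rrbracket\otimes I_{\overline q}$ at once. For (\ref{circuit-seq-law-composition}) the composition clause of the semantics collapses (with empty padding, since both sides act on $\overline q$) to the product $(U_2)_{\overline q}(U_1)_{\overline q}=(U_2U_1)_{\overline q}$. Clauses (\ref{circuit-seq-law-disjoint-composition}) and (\ref{circuit-seq-law-commutativity}) are the same statement at two levels of generality: when $\mathit{qv}(C_1)\cap\mathit{qv}(C_2)=\emptyset$ the paddings make $\llbracket C_1\rrbracket$ and $\llbracket C_2\rrbracket$ act on disjoint factors, so the tensor-factor commutation fact yields $\llbracket C_1;C_2\rrbracket=\llbracket C_1\rrbracket\otimes\llbracket C_2\rrbracket=\llbracket C_2;C_1\rrbracket$, and for (\ref{circuit-seq-law-disjoint-composition}) this common value is precisely $(U_1\otimes U_2)_{\overline q,\overline r}$.

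The three remaining clauses carry the real bookkeeping. For associativity (\ref{circuit-seq-law-associativity}) I would expand both $(C_1;C_2);C_3$ and $C_1;(C_2;C_3)$ over $W=\mathit{qv}(C_1)\cup\mathit{qv}(C_2)\cup\mathit{qv}(C_3)$; although the intermediate paddings $I_{\mathit{qv}(\cdot)\setminus\mathit{qv}(\cdot)}$ differ syntactically between the two groupings, after extension to $\hs_W$ each side becomes the same ordered product $(\llbracket C_3\rrbracket\otimes I)(\llbracket C_2\rrbracket\otimes I)(\llbracket C_1\rrbracket\otimes I)$, and associativity of multiplication closes the gap. For sequentiality (\ref{circuit-seq-law-sequentiality}) and distributivity (\ref{circuit-seq-law-distributivity}) I would write each $\mathbf{qif}$ in the diagonal form $\sum_i|\varphi_i\>_q\<\varphi_i|\otimes(\llbracket\cdot\rrbracket\otimes I)$; composing two if-statements with the \emph{same} guard $q$ produces cross terms $|\varphi_i\>\<\varphi_i|\varphi_j\>\<\varphi_j|$ that vanish by orthonormality, leaving $\sum_i|\varphi_i\>\<\varphi_i|\otimes\llbracket C_i;D_i\rrbracket$, which is the right-hand side of (\ref{circuit-seq-law-sequentiality}); for (\ref{circuit-seq-law-distributivity}) the hypothesis $q\notin\mathit{qv}(C)$ makes $\llbracket C\rrbracket$ act as identity on $q$, hence commute with the guard projectors, so it may be pushed into each branch. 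The main obstacle is not conceptual but notational: keeping the identity paddings on the various disjoint factors consistent once everything is lifted to $\hs_W$, which is most delicate in (\ref{circuit-seq-law-associativity}), where three supports interact, and in (\ref{circuit-seq-law-sequentiality}), where the branches $C_i$ and $D_i$ may have different quantum variables and must each be re-padded to the common union before their product is recognised as $\llbracket C_i;D_i\rrbracket$.
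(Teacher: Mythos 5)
Your proposal is correct and follows essentially the same route as the paper's own proof: unfold the denotational semantics of Definition \ref{def-c-sem}, dispatch the routine clauses directly from the definition, and for (Sequentiality) and (Distributivity over \textbf{qif}) expand each quantum if-statement as $\sum_i|\varphi_i\>_q\<\varphi_i|\otimes(\cdot)$ so that orthonormality of the guard basis kills the cross terms (the paper does exactly this computation for Sequentiality and notes Distributivity is analogous). Your treatment is somewhat more explicit than the paper's about the padding bookkeeping on the common space $\hs_W$ (e.g.\ for Associativity and Commutativity, which the paper folds into ``immediate from the definition''), but this is a matter of detail, not of method.
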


It is obvious that Clauses (\ref{circuit-seq-law-unit}), (\ref{circuit-seq-law-associativity}) and (\ref{circuit-seq-law-distributivity}) in the above proposition are generalisations of classical laws (Sq-1), (Sq-3) and (Sq-4) in Figure~\ref{fig 1}, respectively. 
As a corollary of the law (Sequentiality), we have: 
\begin{align}
    &\textit{(Merging)}\qquad\quad 
    \sqif{}{\qbar}{(C_1; C_2)}{}{|\varphi\>} \equiv 
    (\sqif{}{\qbar}{C_1}{}{|\varphi\>}); (\sqif{}{\qbar}{C_2}{}{|\varphi\>})\\ 
    \label{qif-law-splitting}
    &\textit{(Splitting)}\qquad\quad 
    \iqif{\qbar}{\square_{i=1}^d|\varphi_i\>}{C_i} \equiv
    (\sqif{}{\qbar}{C_1}{}{|\varphi_1\>}); \cdots; (\sqif{}{\qbar}{C_d}{}{|\varphi_d\>}).
\end{align}

A particularly useful syntactic sugar is quantum choice defined in~\cite{Ying12} as follows: for any quantum circuit $C\in\mathbf{QC}$ with $\qvar(C)=\{q\}$, 
\begin{enumerate}
    \item (Pre-choice) $\sqif{C_0}{C[q]}{C_1}{}{} \triangleq C; (\sqif{C_0}{q}{C_1}{}{}).$
    \item (Post-choice) $\sqif{C_0}{[q]C}{C_1}{}{} \triangleq (\sqif{C_0}{q}{C_1}{}{}); C.$
\end{enumerate}
We often call $C$ the coin tossing circuit of the choice. 
Intuitively, in the pre-choice, $C$ tosses quantum choice $q$ so that $q$ is in a superposition, which then determines the control flow of quantum $\kif$-statement $\sqif{C_0}{q}{C_1}{}{}$. 
In the post-choice, the coin tossing operation $C$ mixes the execution paths represented by the quantum $\kif$-statement. 
Quantum choice can be seen as a quantum counterpart of probabilistic choice in the probabilistic programming language pGCL~\cite{He97}. 
The following example of a quantum random walk shows an application of quantum choice. 
Indeed, the notion of quantum choice introduced in~\cite{Ying12} was inspired by the construction of quantum random walks. 

\begin{exam}[Quantum random walk]
\label{exam-walk}
    The one-dimensional random walk can be thought of as a particle moving on a discrete line whose nodes are denoted by integers $\mathbb{Z}=\{...,-2,-1,0,1,2,...\}$. 
    At each step, the particle moves one position left or right, depending on the flip of a coin. 
    A quantum variant of it is the Hadamard walk, with a quantum coin operator given by the Hadamard gate $\gateH=\frac{1}{\sqrt{2}}\left(\begin{array}{cc} 1 & 1\\ 1 &-1\end{array}\right).$ 
    Formally, the Hilbert space of the Hadamard walk is $\hs_d\otimes\hs_p$, where $\hs_d=\mathit{span} \{|L\>,|R\>\}$ is a $2$-dimensional Hilbert space, called the direction space, $|L\>, |R\>$ are used to indicate the direction Left and Right, respectively, $\hs_p=\overline{\mathit{span} \{|n\>: n\in\mathbb{Z}\}}$ is an infinite-dimensional Hilbert space, $|n\>$ indicates the position marked by integer $n$, and $\overline{\mathit{span}\ X}$ is the Hilbert space spanned by a nonempty set $X$ of vectors. 
    One step of the Hadamard walk is modelled by the unitary operator $W=T(\gateH\otimes I_p),$ where the translation $T$ is a unitary operator on $\hs_d\otimes\hs_p$ defined by $T|L, n\>=|L, n-1\>$ and $T|R, n\>=|R, n+1\>$ for every $n\in\mathbb{Z}$, $\gateH$ is the Hadamard gate on $\hs_d$, and $I_p$ is the identity operator on $\hs_p$. 
    The Hadamard walk is then described by repeated applications of the operator $W$.

    Now we define the left and right translation operators $T_L$ and $T_R$ on the position space $\hs_p$ by $T_L|n\>=|n-1\>$ and $T_R|n\>=|n+1\>$ for each $n\in\mathbb{Z}$. 
    The translation can be written as quantum $\kif$-statement $T = \sqif{T_L}{d}{T_R}{}{}$, the walk operator is a quantum choice $W = \sqif{T_L[p]}{\gateH[d]}{T_R[p]}{}{}$, and an $n$-step quantum walk is the sequential composition of $n$ copies of $W$.
\end{exam}

Several laws for quantum choices are presented in the following: 

\begin{prop}[Laws of Quantum Choice]
\label{choice-laws}
    \quad
    \begin{enumerate}
        \item\label{choice-laws-symmetry} (Symmetry) Let $\gateX$ be the \textsf{NOT} gate. 
        Then 
        \[
            \sqif{C_0}{\gateX[q]}{C_1}{}{} \equiv \sqif{C_1}{[q]\gateX}{C_0}{}{}.
        \]
        More generally, if $U|\varphi_i\>=\lambda_i|\psi_i\>$ for $(i=0,1)$, then 
        \[
            \sqif{C_0}{U[q]}{C_1}{|\psi_0\>}{|\psi_1\>} \equiv 
            \sqif{C_0}{[q]U}{C_1}{|\varphi_0\>}{|\varphi_1\>}.
        \]
        \item\label{choice-laws-sequentiality} (Sequentiality)
        $\ (\sqif{C_0}{B[q]}{C_1}{}{}); (\sqif{D_0}{q}{D_1}{}{})\equiv \sqif{(C_0; D_0)}{B[q]}{(C_1; D_1)}{}{}$,
        \[
            \hspace{1.83cm}(\sqif{C_0}{q}{C_1}{}{}); (\sqif{D_0}{[q]B}{D_1}{}{}) \equiv 
            \sqif{(C_0; D_0)}{[q]B}{(C_1; D_1)}{}{}.
        \]
        \item\label{choice-laws-distributivity} (Distributivity) If $q\notin\qvar(C)$, then 
        \begin{align*}
            &C; (\sqif{C_0}{B[q]}{C_1}{}{}) \equiv \sqif{(C; C_0)}{B[q]}{(C; C_1)}{}{}, \\
            &(\sqif{C_0}{B[q]}{C_1}{}{}); C\equiv \sqif{(C_0; C)}{B[q]}{(C_1; C)}{}{}.
        \end{align*}
        The same holds for post-choice. 
    \end{enumerate}
\end{prop}

It is easy to see the correspondence between the laws (Symmetry), (Sequentiality) and (Distributivity) in the above proposition and the ones in Propositions~\ref{qif-law} and~\ref{circuit-seq-law}. 
On the other hand, the laws in the above proposition can be regarded as quantum counterparts of the corresponding laws of probabilistic programming given in~\cite{He97}. 

To show that the algebraic laws proposed above are actually useful for circuit optimisation, let us see the following: 

\begin{exam}[Circuit Optimisation]
\label{exam:circuit opt}
    We take an example from an earlier speech by Ross Duncan\footnote{\url{https://www.youtube.com/watch?v=QbtSSqeYuFM\#t=01h35m30s}}:
    \[
        \Qcircuit @C=1em @R=1em {
        p & & \ctrl{2} & \gate{Z} & \ctrl{1} &
        \gate{Z} & \qw & & & \ctrl{1} & \qw & \qw & \qw \\
        q & & \qw & \ctrl{1} & \targ &
        \gate{X} & \qw & = & & \targ & \gate{X} & \ctrl{1} & \qw \\
        r & & \targ & \targ & \qw & \gate{X} &
        \qw & & & \qw & \qw & \targ & \qw
        }.
    \]
    It is used to motivate using ZX-calculus for circuit optimisation in~\cite{staudacher2021optimization}. 
    The equivalence is not obvious at first glance.
    Let us write the two circuits in our language as:
    \[
        \sqif{}{p}{\gateX[r]}{}{}; \gateZ[p]; \sqif{}{q}{\gateX[r]}{}{}; \sqif{}{p}{\gateX[q]}{}{}; \gateZ[p]; \gateX[q]; \gateX[r]
        \equiv \sqif{}{p}{\gateX[q]}{}{}; \gateX[q]; \sqif{}{q}{\gateX[r]}{}{}.
    \]

    We first check that:
    \begin{align*}
        & \sqif{}{p}{\gateX[r]}{}{}; \gateZ[p]; \sqif{}{q}{\gateX[r]}{}{}; \sqif{}{p}{\gateX[q]}{}{}; \gateZ[p]; \gateX[q]; \gateX[r] & \\
        \equiv\ & \sqif{}{p}{\gateX[r]}{}{}; \sqif{}{q}{\gateX[r]}{}{}; (\gateZ^\dag[p]; \sqif{}{p}{\gateX[q]}{}{}; \gateZ[p]); \gateX[q]; \gateX[r] & \hspace{-1cm}
        \reason{Prop.~\ref{circuit-seq-law}(\ref{circuit-seq-law-commutativity},~\ref{circuit-seq-law-associativity}); \(\gateZ^\dag = \gateZ\)}\\
        \equiv\ & \sqif{}{p}{\gateX[r]}{}{}; \sqif{}{q}{\gateX[r]}{}{}; \sqif{}{p}{\gateX[q]}{}{}; \gateX[q]; \gateX[r] &
        \reason{Prop.~\ref{qif-law}(\ref{qif-law-changing-basis})}\\
        \equiv\ & \sqif{(\sqif{}{q}{\gateX[r]}{}{}; \gateX[q]; \gateX[r])}{p}{(\gateX[r]; \sqif{}{q}{\gateX[r]}{}{}; \gateX[q]; \gateX[q]; \gateX[r])}{}{}. &
        \reason{Props.~\ref{qif-law}(\ref{qif-law-idempotence}),~\ref{circuit-seq-law}(\ref{circuit-seq-law-sequentiality})}
    \end{align*}
    Next, we simplify the sub-circuits as follows:
    \begin{align*}
        & \sqif{}{q}{\gateX[r]}{}{}; \gateX[q]; \gateX[r] & \\
        \equiv\ & \gateX[q]; \sqif{\gateX[r]}{q}{}{}{}; \gateX[r] &
        \reason{Prop.~\ref{choice-laws}(\ref{choice-laws-symmetry})}\\
        \equiv\ & \gateX[q]; \sqif{(\gateX[r]; \gateX[r])}{q}{\gateX[r]}{}{} &
        \reason{Props.~\ref{qif-law}(\ref{qif-law-idempotence}),~\ref{circuit-seq-law}(\ref{circuit-seq-law-sequentiality})} \\
        \equiv\ & \gateX[q]; \sqif{}{q}{\gateX[r]}{}{}, &
        \reason{Prop.~\ref{circuit-seq-law}(\ref{circuit-seq-law-composition},~\ref{circuit-seq-law-identity}); \(\gateX\gateX = I\)}
    \end{align*}
    and
    \begin{align*}
        &\gateX[r]; \sqif{}{q}{\gateX[r]}{}{}; \gateX[q]; \gateX[q]; \gateX[r] & \\
        \equiv\ & \gateX[r]; \sqif{}{q}{\gateX[r]}{}{}; \gateX[r] &
        \reason{Prop.~\ref{circuit-seq-law}(\ref{circuit-seq-law-composition},~\ref{circuit-seq-law-identity}); \(\gateX\gateX = I\)} \\
        \equiv\ & \sqif{(\gateX[r]; \gateX[r])}{q}{(\gateX[r]; \gateX[r]; \gateX[r])}{}{} &
        \reason{Props.~\ref{qif-law}(\ref{qif-law-idempotence}),~\ref{circuit-seq-law}(\ref{circuit-seq-law-sequentiality})} \\
        \equiv\ & \sqif{}{q}{\gateX[r]}{}{} &
        \reason{Prop.~\ref{circuit-seq-law}(\ref{circuit-seq-law-composition},~\ref{circuit-seq-law-identity}); \(\gateX\gateX = I\)} \\
        \equiv\ & \gateX[q]; \gateX[q]; \sqif{}{q}{\gateX[r]}{}{}. &
        \reason{Prop.~\ref{circuit-seq-law}(\ref{circuit-seq-law-composition},~\ref{circuit-seq-law-identity}); \(\gateX\gateX = I\)}
    \end{align*}
    Plug them into the previous formula:
    \begin{align*}
        & \sqif{}{p}{\gateX[r]}{}{}; \gateZ[p]; \sqif{}{q}{\gateX[r]}{}{}; \sqif{}{p}{\gateX[q]}{}{}; \gateZ[p]; \gateX[q]; \gateX[r] & \\
        \equiv\ & \sqif{(\gateX[q]; \sqif{}{q}{\gateX[r]}{}{})}{p}{(\gateX[q]; \gateX[q]; \sqif{}{q}{\gateX[r]}{}{})}{}{} & \\
        \equiv\ & \sqif{}{p}{\gateX[q]}{}{}; \sqif{(\gateX[q]; \sqif{}{q}{\gateX[r]}{}{})}{p}{(\gateX[q]; \sqif{}{q}{\gateX[r]}{}{})}{}{} & 
        \reason{Prop.~\ref{circuit-seq-law}(\ref{circuit-seq-law-unit},\ref{circuit-seq-law-sequentiality})} \\
        \equiv\ & \sqif{}{p}{\gateX[q]}{}{}; \gateX[q]; \sqif{}{q}{\gateX[r]}{}{} & 
        \reason{Prop.~\ref{qif-law}(\ref{qif-law-idempotence})}
    \end{align*}
    which completes the proof.
\end{exam}

\subsection{Normal Forms}

In this subsection, we establish a normal form for a large class of quantum circuits using the laws presented in the previous subsection. 
For each quantum variable $q\in\QVar$, we choose a fixed orthonormal basis of $\hs_q$ as the standard basis of $q$. 
The standard basis of a quantum register $\qbar$ is then defined as the set of tensor products of the standard basis states of variables in $\qbar$; for example, the computational basis $\left\{|i\>: i\in\{0,1\}^n\right\}$ of $n$ qubits. 

\begin{defn}
    \begin{enumerate}
        \item A flat quantum $\kif$-statement is a quantum circuit without nested quantum $\kif$-statements, i.e., of the form $\iqif{\qbar}{\square_i|i\>}{G_i}$, where $\{|i\>\}$ is the standard basis of $\qbar$, and each $G_i$ is a sequential composition of basic quantum gates or $\iskip$. 
        \item A normal form of the quantum circuit is a sequential composition of the form $C_1; \ldots; C_n$, where $C_i$ is a flat quantum $\kif$-statement for each $1\leq i\leq n$. 
    \end{enumerate}
\end{defn}

\begin{defn}
    A quantum circuit $C\in\mathbf{QC}$ is called regular if for every guard basis $\{|\psi_i\>\}$ of a quantum $\kif$-statement in $C$, the unitary transformation from $\{|\psi_i\>\}$ to the standard basis of the guard register $\qbar$ and its inverse, i.e., unitary transformations $\sum_i|\psi_i\>\<i|$ and $\sum_i|i\>\<\psi_i|$, can be implemented by a sequence of basic gates or $\iskip$.
\end{defn}

The following normal form theorem eliminates nested quantum $\kif$-statements in a quantum circuit and reduces them to flat (but large) quantum $\kif$-statements.
We use the term ``normal form'' in the sense of~\cite{Hoare87,Hoare93}, without requiring uniqueness---similar to conjunctive or disjunctive normal forms in propositional logic. This differs from the notion of normal forms in rewriting theory.

\begin{thm}[Normal form]
\label{normal-thm-cir} 
    Each regular quantum circuit $C\in\mathbf{QC}$ is equivalent to a normal form. 
\end{thm}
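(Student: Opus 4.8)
The plan is to proceed by structural induction on the quantum circuit $C$, using the laws of Proposition \ref{qif-law} and Proposition \ref{circuit-seq-law} to push all nested quantum if-statements into a single flat layer while accumulating the results as a sequential composition. The base cases are immediate: $\mathbf{skip}$ and a basic gate $U[\overline{q}]$ are already (trivially) in normal form, since a single gate is a flat quantum if-statement with one branch (or can be wrapped as one), and by regularity the guard-to-standard-basis transformations are themselves realisable by basic gates.

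For the inductive step there are two constructors to handle. The sequential composition case $C = C_1;C_2$ is the easy one: by induction each $C_i$ reduces to a normal form $C_{i,1};\cdots;C_{i,n_i}$, and by associativity (Proposition \ref{circuit-seq-law}(\ref{circuit-seq-law-associativity})) the concatenation is again a sequential composition of flat quantum if-statements, hence a normal form. The genuinely hard case is the quantum if-statement $C = \mathbf{qif}[\overline{q}]\ (\square_{i} |\psi_i\>\rightarrow C_i)\ \mathbf{fiq}$. First I would use regularity together with the Changing-Basis law (Proposition \ref{qif-law}(\ref{qif-law-changing-basis})) to replace the guard basis $\{|\psi_i\>\}$ by the standard basis of $\overline{q}$, conjugating by the basic-gate implementations of $\sum_i|i\>\<\psi_i|$ and its inverse; this contributes only basic gates (themselves normal forms) before and after. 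Then I would apply the inductive hypothesis to each branch $C_i$ to put it in normal form, and use the Distributivity-over-\textbf{qif} law (Proposition \ref{circuit-seq-law}(\ref{circuit-seq-law-distributivity})), the Sequentiality law (Proposition \ref{circuit-seq-law}(\ref{circuit-seq-law-sequentiality})), and the Splitting identity (\ref{qif-law-splitting}) to distribute the outer guard across the sequential layers of the branch bodies.

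The main obstacle, which I expect to require the most care, is the book-keeping that turns a \textbf{qif} whose branches are each a \emph{sequence} of flat if-statements back into a single \emph{flat} if-statement. The Sequentiality law lets one combine two branches sharing the same guard register stepwise, but here the difficulty is that the different branches $C_i$ may have normal forms of different lengths and with different internal guard registers, so the branches must first be aligned (padding shorter branches with $\mathbf{skip}$ via the Unit law, Proposition \ref{circuit-seq-law}(\ref{circuit-seq-law-unit})) to a common length. Once aligned, one repeatedly applies Sequentiality in the reverse direction to pull the sequential composition outside the \textbf{qif}, and then the Nested-\textbf{qif} law (Proposition \ref{qif-law}(\ref{qif-law-nested})) to merge the outer guard $\overline{q}$ with each branch's inner guard into a single composite guard register, producing a flat quantum if-statement over the enlarged guard register $\overline{q}\cup(\text{inner guards})$. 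The coherence condition (\ref{coin-cond}), namely that guard registers are disjoint from the controlled bodies, must be checked to persist through these manipulations; verifying that the commutativity and distributivity side-conditions ($q\notin\mathit{qv}(C)$, disjointness of quantum variables) are met at each rewrite is the delicate part, and I would argue it follows from the syntactic constraint (\ref{coin-cond}) imposed in Definition \ref{def-circ} being preserved under the rewrites. The final normal form is the sequential composition, over the aligned time-steps, of these merged flat quantum if-statements.
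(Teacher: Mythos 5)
Your overall skeleton is the paper's: the same structural induction, the same use of regularity with the Changing-Basis law to pass to the standard guard basis, and the same trivial handling of sequential composition. Two points diverge, one minor and one serious. The minor one: a lone gate $U[\overline{q}]$ is \emph{not} itself a flat quantum if-statement (the guard must be disjoint from the bodies, by (\ref{coin-cond})), nor can it be a one-branch qif; the paper wraps it as $\mathbf{qif}[r]\ (\square\,|i\>\rightarrow U[\overline{q}])\ \mathbf{fiq}$ over a \emph{fresh external} variable $r$ with \emph{every} branch equal to $U[\overline{q}]$, via (a generalisation of) the idempotence law, so your ``with one branch (or can be wrapped as one)'' needs to be made precise in exactly this way.

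The serious gap is your align-and-merge step for $C=\mathbf{qif}[\overline{q}]\ (\square_i\,|\psi_i\>\rightarrow C_i)\ \mathbf{fiq}$. After padding the branch normal forms to a common length and factoring over time steps with Sequentiality, you are left at each step $t$ with $\mathbf{qif}[\overline{q}]\ (\square_i\,|i\>\rightarrow C_{it})\ \mathbf{fiq}$ whose branches are flat qifs over possibly \emph{different} inner guard registers, and you propose to merge these by the Nested-\textbf{qif} law into one flat qif over $\overline{q}\cup(\text{all inner guards})$. That law only applies when all branches are qifs over the \emph{same} inner register, and worse, the merged circuit need not exist at all: condition (\ref{coin-cond}) constrains a guard against its \emph{own} branches only, so distinct branches of $C$ may freely share variables, and one branch's guard variable may be another branch's gate target. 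Concretely, let one branch's normal form be $\mathbf{qif}[r_0]\ (\square\,|i\>\rightarrow X[x])\ \mathbf{fiq}$ (the Case-1 wrapping of $X[x]$) and another branch be the flat qif $\mathbf{qif}[x]\ (|1\>\rightarrow X[z])\ \mathbf{fiq}$: the would-be guard register $\{\overline{q},r_0,x\}$ intersects the body $X[x]$, so no well-formed flat qif over it exists, and your hoped-for invariant that (\ref{coin-cond}) ``is preserved under the rewrites'' cannot be established at this step. The paper sidesteps this by splitting \emph{before} flattening: using Eq. (\ref{qif-law-splitting}) and Merging, $C$ is rewritten as a sequential composition of single-branch-selective factors $\overline{q}\rightarrow_i C_{ij}$, each involving only \emph{one} inner guard $\overline{r}$, and each such factor is shown by a direct semantic computation (not by the Nested-\textbf{qif} law) to equal the flat qif $\mathbf{qif}[\overline{q},\overline{r}]\ (\square_k\,|i,k\>\rightarrow D_k)\ \mathbf{fiq}$. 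Cross-branch variable overlaps then land in \emph{different sequential factors}, where they are harmless. Replacing your per-time-step merge with this split-then-flatten order repairs the argument.
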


\begin{proof} 
We proceed by induction on the structure of $C$.

\textbf{Case 1}. 
$C=U[\qbar]$ is a basic gate. 
Then we introduce an external quantum variable $r\notin\qbar$. 
It follows from a simple generalisation of the idempotence law given in Proposition~\ref{qif-law}(\ref{qif-law-idempotence}) that $C\equiv \iqif{r}{\square|i\>}{G_i}$ where $\{|i\>\}$ is the standard basis of $r$, and $G_i=U[\qbar]$ for all $i$. 

\textbf{Case 2}. 
$C=C_1; C_2$, where both $C_1\equiv C_{11}; ...; C_{im}$ and $C_2\equiv C_{21}; ...; C_{2n}$ are in a normal form; that is, all $C_{1i}$ $(1\leq i\leq m)$ and $C_{2j}$ $(1\leq j\leq n)$ are flat quantum $\kif$-statements. 
Then it is obvious that $C\equiv C_{11}; ...; C_{1m}; C_{21}; ...; C_{2n}$ is a normal form. 

\textbf{Case 3}. 
$C= \iqif{\qbar}{\square_{i=1}^d |\psi_i\>}{C_i}$, where each $C_i$ is in a normal form, i.e., $C_i\equiv C_{i1}; \cdots; C_{im_i}$ with each $C_{ij}$ being a flat quantum $\kif$-statement.
First, let $\{|i\>\}$ be the standard basis of $\qbar$, and let $U$ be the unitary transformation such that $|\psi_i\>=U|i\>$ for all $i$.
Since $C$ is regular, there exist two sequential compositions $G_1,G_2$ of basic gates such that $\sem{G_1}=U$ and $\sem{G_2}=U^\dag$.
Then we have:
\begin{equation}
\label{norm-proof1}
    C\equiv G_2; \iqif{\qbar}{\square_{i=1}^d |i\>}{C_i}; G_1.
\end{equation}
The equivalence (\ref{norm-proof1}) is indeed a generalisation of the law of changing basis given in Proposition~\ref{qif-law}(1).
To prove (\ref{norm-proof1}), we note that $\qvar(G_1)\subseteq\qbar$, $\qvar(G_2)\subseteq\qbar$ and $\qbar\cap\left(\bigcup_i\qvar(C_i)\right)=\emptyset$.
Thus, it holds that
\begin{align*}
    \sem{C} &=\sum_i|\psi_i\>\<\psi_i|\otimes \sem{C_i} = \sum_i(U|i\>\< i|U^\dag\otimes \sem{C_i})\\
    &=U\Big(\sum_i|i\>\< i|\otimes \sem{C_i}\Big)U^\dag =\sem{G_2}\sem{\iqif{\qbar}{\square_{i=1}^d |i\>}{C_i}}\sem{G_1} =\sem{RHS}.
\end{align*}

Next, by applying Proposition~\ref{qif-law}(\ref{qif-law-splitting}) and~\ref{circuit-seq-law}(\ref{circuit-seq-law-sequentiality}), we rewrite $C$ as
\[
    C\equiv G_2; \big((\sqif{}{\qbar}{C_{11}}{}{|1\>}) ; \cdots ; (\sqif{}{\qbar}{C_{1m_1}}{}{|1\>}) \big); 
    \cdots; \big((\sqif{}{\qbar}{C_{d1}}{}{|d\>}) ; \cdots ; (\sqif{}{\qbar}{C_{dm_d}}{}{|d\>}) \big); G_1.
\]
Now, by the conclusions of Case 2, it is sufficient to show that each $\sqif{}{\qbar}{C_{ij}}{}{|i\>}$ is equivalent to a normal form.
Since $C_{ij}$ is a flat quantum $\kif$-statement, we can assume that $C_{ij} \equiv \iqif{\rbar}{\square_{k=1}^{d_r} |k\>}{D_k}$,  
where $D_k$ is a sequence of basic gates or $\iskip$. 
Then we claim
\[
    \sqif{}{\qbar}{C_{ij}}{}{|i\>} \equiv
    \iqif{\qbar,\rbar}{\square_{k=1}^{d_r} |i, k\>}{D_k},
\]
where the RHS is a normal form.
To show this, note that $\qbar\cap\rbar=\emptyset$ and $(\qbar,\rbar)\cap \qvar(D_k) = \emptyset$.
This can be rigorously proved by showing that the normal form we constructed for each sub-program shares the same variable sets with the original sub-program, except the external variable $r_0$ we used in Case 1, which can be chosen disjoint from the original program. We omit the tedious details here, but the reader can see them from the Coq implementation discussed in Section~\ref{sec-coq}. Thus, we have:
\begin{align*}
    \sem{\sqif{}{\qbar}{C_{ij}}{}{|i\>}} &= |i\>_{\qbar}\<i|\otimes\sem{C_{ij}} + \sum_{i^\prime\neq i}|i^\prime\>_{\qbar}\<i^\prime|\otimes I \\
    &= |i\>_{\qbar}\<i|\otimes\sum_k|k\>_{\rbar}\<k|\otimes\sem{D_k} + \sum_{i^\prime\neq i}|i^\prime\>_{\qbar}\<i^\prime|\otimes I\\
    &= \sum_k|i,k\>_{\qbar,\rbar}\<i,k|\otimes\sem{D_k} + \sum_{i^\prime\neq i}\sum_k|i^\prime,k\>_{\qbar}\<i^\prime,k|\otimes I \\
    &= \sem{\iqif{\qbar,\rbar}{\square_{k=1}^{d_r} |i, k\>}{D_k}}.
\end{align*}
\end{proof}

An application of the above normal form is equivalence checking of two quantum circuits $C_1$ and $C_2$, which is a key issue in the design and verification of quantum circuits and has been widely studied in the literature.
We can transform them to two normal forms $C_1^\prime$ and $C_2^\prime$, respectively.
Obviously, it is much easier to check the equivalence of $C_1^\prime$ and $C_2^\prime$ than that of $C_1$ and $C_2$. This approach is different from the existing ones in the literature (see for example~\cite{Wille, Markov1}).
A general exposition of this application is out of the scope of the present paper. Here, we only give a very simple example:

\begin{exam}[Equivalent checking via normal form]
    Consider the following two quantum circuits $C_1$ and $C_2$:
    \begin{align*}
        &C_1 \triangleq \sqif{\gateH[y]}{x}{(\sqif{\iskip}{y}{\gateX[z]}{|-\>}{|+\>}; \gateH[y])}{|-\>}{|+\>}, \\
        &C_2 \triangleq \gateH[x]; \gateH[y]; \sqif{}{y}{(\sqif{}{x}{\gateX[z]}{}{})}{}{}; \gateH[x].
    \end{align*}
    Since $C_1$ and $C_2$ contain nested quantum $\kif$-statements and are not in the standard basis, the equivalence is not obvious.
    We translate them into normal forms.
    Note that the proof of Theorem~\ref{normal-thm-cir} is not the unique way to rewrite the circuits; here we proceed in a slightly different and simpler manner:
    \begin{align*}
        C_1 &\equiv \sqif{\gateH[y]}{x}{(\gateH[y]; \sqif{}{y}{\gateX[z]}{}{}; \gateH[y]; \gateH[y])}{|-\>}{|+\>}
        & \reason{Prop.~\ref{qif-law}(\ref{qif-law-changing-basis}); \(\gateH^\dag = \gateH\)}\\
        &\equiv \gateH[x]; \sqif{\gateH[y]}{x}{(\gateH[y]; \sqif{}{y}{\gateX[z]}{}{}; \gateH[y]; \gateH[y])}{}{}; \gateH[x] 
        & \reason{Prop.~\ref{qif-law}(\ref{qif-law-changing-basis}); \(\gateH^\dag = \gateH\)}\\ 
        &\equiv \gateH[x]; \sqif{\gateH[y]}{x}{(\gateH[y]; \sqif{}{y}{\gateX[z]}{}{})}{}{}; \gateH[x] 
        & \reason{Prop.~\ref{circuit-seq-law}(\ref{circuit-seq-law-composition},~\ref{if-laws-associativity}); \(\gateH\gateH = I\)}\\ 
        &\equiv \gateH[x]; \gateH[y]; \sqif{}{x}{(\sqif{}{y}{\gateX[z]}{}{})}{}{}; \gateH[x] 
        & \reason{Prop.~\ref{circuit-seq-law}(\ref{circuit-seq-law-sequentiality},~\ref{circuit-seq-law-distributivity})} \\
        &\equiv \gateH[x]; \gateH[y]; 
        \sqif{}{y}{(\sqif{}{x}{\gateX[z]}{}{})}{}{}; \gateH[x] 
        & \reason{Props.~\ref{qif-law}(\ref{qif-law-distributivity},~\ref{qif-law-idempotence})} \\
        &\equiv \gateH[x]; \gateH[y]; \iqif{y,x}{|11\>}{\gateX[z]}; \gateH[x],
        & \reason{Prop.~\ref{qif-law}(\ref{qif-law-nested})} \\
        C_2 &\equiv \gateH[x]; \gateH[y]; \iqif{y,x}{|11\>}{\gateX[z]}; \gateH[x].
        & \reason{Prop.~\ref{qif-law}(\ref{qif-law-nested})}
    \end{align*}
    Now their equivalence is trivial, since they share the same normal form.
\end{exam}

An approach to classical compiler design based on normal forms of programs was proposed in~\cite{Hoare93} and extended to quantum programs in~\cite{Zu05}.
Thus, another possible application of Theorem~\ref{normal-thm-cir} is to combine it with the normal form of quantum programs (Theorem~\ref{thm-normal-prog}) for quantum compiler design.

\section{Algebraic Laws for Quantum Programs}\label{sec-laws-prog}

In this section, we turn to investigate laws for purely quantum programs in $\mathbf{QProg}$.
To present them, we need the notion of equivalence of two quantum programs: 

\begin{defn}
    Two quantum programs $P$ and $Q$ are called equivalent, written $P\equiv Q$, if they have the same denotational semantics; that is, $\sem{P}(\rho)=\sem{Q}(\rho)$ for all inputs $\rho\in\mathcal{D}(\hs_\QVar).$
\end{defn}

First of all, by induction on the structure of $P$, it is routine to prove the following lifting lemma, which gives a link from the equivalence of quantum programs to the equivalence of quantum circuits.   

\begin{lem}[Congruence - Lifting law]
    Let quantum program $P=P(C_1,...,C_n)$ contain quantum circuits $C_1,...,C_n$, and let $Q=P(D_1,...,D_n)$ be the quantum program obtained through substituting $C_1,...,C_n$ in $P$ by quantum circuits $D_1,...,D_n$, respectively.
    If $C_i\equiv D_i$ for $i=1,...,n$, then $P\equiv Q.$
\end{lem}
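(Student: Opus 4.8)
The plan is to proceed by structural induction on the program $P$, showing that its denotational semantics $\sem{P}$, viewed as a super-operator on $\mathcal{D}(\hs_{\mathit{QV}})$, depends on the embedded circuits only through their semantics. Since every program semantics lives on the single global space $\hs_{\mathit{QV}}$, both $\sem{P}$ and $\sem{Q}$ are maps with the same domain and codomain, so it suffices to prove the two super-operators are equal as functions. I would split the tuple $C_1,\dots,C_n$ according to which immediate subprogram each circuit occurs in, so that the induction hypothesis applies to each subprogram separately.

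The real content sits in the base case where $P$ is a single circuit leaf, say $C_i$ (with $D_i$ in $Q$). Here I would first record that embedding a circuit $C$ as a program acts by conjugation with the unitary $\sem{C}$ padded by the identity up to $\hs_{\mathit{QV}}$, generalising the basic-gate clause $\sem{\overline{q}:=U[\overline{q}]}(\rho)=U_{\overline{q}}\rho U_{\overline{q}}^{\dagger}$ of Proposition \ref{sempro}(3). From the circuit equivalence $C_i\equiv D_i$ we have $\sem{C_i}\otimes I_{\mathit{qv}(D_i)\setminus\mathit{qv}(C_i)}=\sem{D_i}\otimes I_{\mathit{qv}(C_i)\setminus\mathit{qv}(D_i)}$; padding both sides with the identity on the remaining variables of $\mathit{QV}$ yields literally the same unitary operator on $\hs_{\mathit{QV}}$, hence the same conjugation super-operator. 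This reconciliation of the two circuits' possibly different variable sets is the only place needing care, and is the main (though mild) obstacle. The other base cases $\mathbf{skip}$, $\mathbf{abort}$ and $\overline{q}:=|\psi\>$ contain no circuit holes, so there $P=Q$ syntactically.

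For the inductive step I would simply invoke the structural representation of Proposition \ref{sempro}, plugging the induction hypothesis $\sem{P_j}=\sem{Q_j}$ into each clause. For sequential composition, $\sem{P_1;P_2}(\rho)=\sem{P_2}(\sem{P_1}(\rho))$ is preserved; for the measurement-guarded if, $\sem{\mathbf{if}(\square m\cdot M[\overline{q}]=m\rightarrow P_m)\ \mathbf{fi}}(\rho)=\sum_m\sem{P_m}(M_m\rho M_m^{\dagger})$ is preserved because the measurement operators $M_m$ are fixed and are not holes; and for the loop each finite approximation $\sem{\mathbf{while}^{(k)}}=\sum_{n=0}^{k-1}\E_0\circ(\sem{P}\circ\E_1)^n$ is the same expression in $\sem{P}$ and the fixed super-operators $\E_0,\E_1$, so the two approximating sequences coincide term by term and therefore have the same supremum $\bigsqcup_{k}\sem{\mathbf{while}^{(k)}}$ in the CPO $(\mathcal{QO}(\hs_{\mathit{QV}}),\sqsubseteq)$ by Proposition \ref{sempro}(6). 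No genuine continuity argument is needed, only that equal approximants have equal suprema. This yields $\sem{P}=\sem{Q}$ in every case and closes the induction.
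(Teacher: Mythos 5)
Your proposal is correct and follows essentially the approach the paper intends: the paper states only that the lemma is "routine to prove by induction on the structure of $P$," and your structural induction — with the base case reconciling the two circuits' variable sets by padding both sides of the circuit-equivalence identity up to $\hs_{\mathit{QV}}$ before conjugating, and the loop case handled by termwise equality of the finite approximants and hence of their suprema — is precisely that routine argument carried out in full. No gaps.
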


This lemma enables us to employ the laws of quantum circuits established in the last section in reasoning about the equivalence of quantum programs. 

The $\kif$-statement in Eq. (\ref{prog-syntax}) is defined for a general quantum measurement.
For simplicity of presentation, we mainly consider a special class of measurements in our laws for quantum programs.
By the term \textit{binary measurement}, we mean a measurement $M$ with only two possible outcomes, say $0$ and $1$, which can then be written as $M=\{M_0, M_1\}$; for instance, the measurement on a qubit in the computational basis.   
Following~\cite{Hoare87} again, we introduce an infix notation for an $\kif$-statement defined by a binary measurement $M=\{M_0, M_1\}$: 
\begin{align*}
    \sif{P_0}{M[q]}{P_1} \triangleq \iif{M}{q}{0 \rightarrow P_0\ \square\ 1\rightarrow P_1}{}.
\end{align*}
We often drop the measured quantum variable $q$ and simply write $\sif{P_0}{M}{P_1}$ for $\sif{P_0}{M[q]}{P_1}$ when the exact name $q$ is not essential.
In particular,
\begin{itemize}
    \item $\sif{\iskip}{M}{P}$ is often abbreviated to $\sif{}{M}{P}$;
    \item $\sif{}{M}{\iskip}$ is further abbreviated to $[M]$;
    \item $\sif{\iabort}{M}{\iskip}$ is abbreviated to $(M]$; 
    \item $\sif{\iskip}{M}{\iabort}$ is abbreviated to $[M)$. 
\end{itemize}
Intuitively, $[M]$ can be understood as a test or assertion using a measurement $M$ where we care about both outcomes $0$ and $1$.
But in the test $(M]$ (respectively, $[M)$), we only care the case of measurement outcome $1$ (respectively, outcome $0$).
The three tests $[M], (M]$, and $[M)$ are useful in testing, debugging, and runtime assertion of quantum programs.  

\subsection{Laws of Initialisation}

Let us now present several laws of initialisation.
A special kind of quantum measurement will be used in one of these laws: 

\begin{defn}
We write $M=[|\psi\>]$ for a binary measurement $M=\{M_0,M_1\}$ if it is defined by a pure quantum state $|\psi\>$ (normalized state vector), i.e., if $M_0=I-|\psi\>\<\psi|$ and $M_1=|\psi\>\<\psi|$.  
\end{defn}

Obviously, the measurement $[|\psi\>]$ tests whether a quantum system is in state $|\psi\>$ or not.
The laws given in the following proposition describe how initialisation interacts with other program constructs.

\begin{prop}[Laws of Initialization]
\label{init-laws}
    \quad
    \begin{enumerate}
        \item\label{init-laws-cancellation} (Cancellation)
        $\iinit{\qbar}{|\varphi\>}; \iinit{\qbar}{|\psi\>}\equiv \iinit{\qbar}{|\psi\>}.$ 
        If $\qbar_1\cap\qbar_2=\emptyset$, then 
        $$\iinit{\qbar_1}{|\phi\>}; \iinit{\qbar_2}{|\psi\>} \equiv \qbar_1, \iinit{\qbar_2}{|\phi\>}|\psi\>.$$
        In particular, if $\qvar(\qbar) = \qvar(\qbar_1)\cup \qvar(\qbar_2)$, then $\iinit{\qbar_1}{}; \iinit{\qbar_2}{} \equiv \iinit{\qbar}{}.$
        \item\label{init-laws-unitary-elimination} (Unitary-Elimination)
        If $\sem{C} = U_\qbar$ for some unitary operator $U$, then
        $\iinit{\qbar}{|\psi\>}; C\equiv \iinit{\qbar}{U|\psi\>}.$
        In particular, $\iinit{\qbar}{|\psi\>}; U[\qbar] \equiv \iinit{\qbar}{U|\psi\>}$. 
        \item\label{init-laws-qif-elimination} (qif-Elimination)
        We write $|\alpha\>\bot|\beta\>$ if two states $|\alpha\>, |\beta\>$ are orthogonal, i.e., $\<\alpha|\beta\>=0$.
        \[
            \hspace{1cm}\frac{|\psi\>\bot |\varphi_0\>}{ \iinit{q}{|\psi\>}; \sqif{C_0}{q}{C_1}{|\varphi_0\>}{|\varphi_1\>}
            \equiv \iinit{q}{|\varphi_1\>}; C_1\equiv \iinit{q}{|\psi\>}; C_1}.
        \]
        More generally, if for each $i$, there exists $U_i$ such that $\sem{C_i} = (U_i)_{\rbar}$, then
        \[
            \hspace{1cm}\iinit{\qbar,\rbar}{\sum_i\lambda_i|\phi_i\>|\psi_i\>}; 
            \iqif{\qbar}{\square_{i=1}^d |\phi_i\>}{C_i}
            \equiv \iinit{\qbar,\rbar}{\sum_i\lambda_i|\phi_i\>(U_i|\psi_i\>)}.
        \]
        \item\label{init-laws-if-elimination} (if-Elimination)
        The left-elimination 
        $$[M[q]]; \iinit{q}{|\phi\>} \equiv \iinit{q}{|\phi\>},$$
        and the right-eliminations
        \[
            \hspace{1cm}\frac{M=[|\psi\>]}{ \iinit{q}{|\psi\>}; \sif{P_0}{M[q]}{P_1} \equiv \iinit{q}{|\psi\>}; P_1},\qquad \frac{M=[|\psi\>]\qquad |\varphi\>\bot |\psi\>}{ \iinit{q}{|\varphi\>}; \sif{P_0}{M[q]}{P_1}\equiv \iinit{q}{|\varphi\>}; P_0}.
        \]
        \item \label{init-laws-if-expansion} (if-Expansion)
        If $\qbar\subseteq \rbar$ and for all $m$, $\rbar\cap\qvar(P_m) = \emptyset$, then
        \[
            \iif{}{\qbar}{m}{P_m}; \iinit{\rbar}{}
            \equiv \iif{}{\rbar}{k}{P_{k{\downarrow}_{\qbar}}}; \iinit{\rbar}{}, 
        \]
        where $m$ and $k$ are bit strings, and $k{\downarrow}_\qbar$ denotes the restriction of $k$ on $\qbar$.
        Intuitively, a register can be additionally measured before its initialisation.
    \end{enumerate}
\end{prop}

\subsection{Laws of $\kif$-statement}

In this subsection, we give some laws of $\kif$-statements.
To this end, we need the following: 

\begin{defn}
    Given two linear operators $A$ and $B$ and non-negative real number $k\in \mathbb{R}$, we define:
    \begin{enumerate}
        \item ($k$-\textbf{proportion}) $A\lrtimes_k B$ if there exists $c\in\mathbb{C}$ such that  $|c|=k$ and $A = cB$.
        Whenever $k = 1$, we simply write $A\lrtimes B$.
        \item ($k$-\textbf{left-absorption}) $A\ltimes_k B \triangleq AB\lrtimes_k B$, i.e., if there exists $c\in\mathbb{C}$ such that $|c|=k$ and $AB = cB$.
        Whenever $k = 1$, we simply write $A\ltimes B$.
        \item ($k$-\textbf{right-absorption}) $A\rtimes_k B \triangleq AB\lrtimes_k A$, i.e., if there exists $c\in\mathbb{C}$ such that $|c|=k$ and $AB = cA$.
        Whenever $k = 1$, we simply write $A\rtimes B$.
        \item (\textbf{Orthogonal}) $A\perp B$ if $AB = 0$.
        It is obvious that $A\perp B \leftrightarrow A\ltimes_0 B\leftrightarrow A\rtimes_0 B$.
    \end{enumerate}
\end{defn}

\begin{defn}
\label{defn-measurement-relation}
    Let $M=\{M_0,M_1\}$ and $N=\{N_0,N_1\}$ be two binary measurements.
    Then:
    \begin{enumerate}
        \item $N$ is called the complement of $M$, written $N=M^\bot$, if $N_0=M_1$ and $N_1=M_0$. 
        \item We say that $M$ entails $N$, written $M\blacktriangleright N$, if $N_0 \perp M_1$. 
        \item We say $M$ is logically weaker than $N$, written $M\propto N$, if $M_1\ltimes N_1$.
        Whenever $M\propto N^\bot$, we simply write $M\gg N$, and say $M$ logically contradicts to $N$.
        \item A binary measurement $K=\{K_0,K_1\}$ is called a pseudo-meet of $M$ and $N$, written $K \cong M\smeet N$, if $K_1\lrtimes N_1M_1$ (or, equivalently, $N_1M_1 \lrtimes K_1$).
    \end{enumerate}
\end{defn}

The following example can help us understand the notions introduced in the above definition. 
\begin{exam}
\label{exam-measurement-relation}
    \begin{enumerate}
    \item The intuition behind entailment, logical weakness and contradiction becomes clear when we consider projective measurements. 
    Let two projective measurements $M=\{M_0,M_1\}$ and $N=\{N_0,N_1\}$ with $M_0=I-P_X$, $M_1=P_X$, $N_0=I-P_Y$ and $N_1=P_Y$, where $P_X,P_Y$ are the projections onto closed subspaces $X$ and $Y$, respectively, then:
    \begin{itemize}
        \item $M\blacktriangleright N$ if and only if $X\subseteq Y$;
        \item $M\propto N$ if and only if $Y\subseteq X$;
        \item $M\gg N$ if and only if $X^\bot\subseteq Y$.
    \end{itemize}
    
    Thus, the following properties hold for projective measurements $M$ and $N$:
    \[
        M\blacktriangleright N \Leftrightarrow 
        N^\bot\blacktriangleright M^\bot \Leftrightarrow 
        N\propto M \Leftrightarrow 
        M^\bot \propto N^\bot \Leftrightarrow 
        M^\bot \gg N \Leftrightarrow
        N \gg M^\bot.
    \]

    \item In the definition of pseudo-meet $K\cong M \smeet N$, many different choices of $K_0$ are possible; one of them is $K_0=\sqrt{M_0^\dag M_0+M_1^\dag N_0^\dag N_0M_1}$.
    
    \item $M\propto N$ ($M$ is logically weaker than $N$) if and only if $N\cong N\smeet M$.
    Also, we have: $M\propto N$ implies $N\blacktriangleright M$.
    \end{enumerate}
\end{exam}

Now we are able to present a set of simple laws for $\kif$-statements: 
\begin{prop}[Laws of $\kif$-statement]
\label{if-laws}
    \quad
    \begin{enumerate}
        \item\label{if-laws-truth-falsity} (Truth-Falsity)
        For the trivial measurements $M_T=\{M_{T0},M_{T1}\}$ and $M_F=\{M_{F0},M_{F1}\}$ with $M_{T0}=M_{F1}=0$ and $M_{T1}=M_{F0}=I$, we have: 
        \[
            \sif{P}{M_T}{Q}\equiv Q,\qquad
            \sif{P}{M_F}{Q}\equiv P.
        \]
        \item\label{if-laws-idempotence} (Idempotence)
        $\sif{P}{M}{P}\equiv [M]; P,$ where $[M]$ stands for test $\sif{}{M}{\iskip}$. 
        \item\label{if-laws-complementation} (Complementation)
        $\sif{P}{M^\bot}{Q} \equiv \sif{Q}{M}{P}.$
        \item\label{if-laws-associativity} (Associativity)
        If $M$ is a projective measurement, then 
        \[
            \sif{P}{M}{(\sif{Q}{M}{R})} \equiv
            \sif{(\sif{P}{M}{Q})}{M}{R} \equiv \sif{P}{M}{R}.
        \]
        \item\label{if-laws-if-elimination} (if-Elimination)
        \[
            \frac{M\blacktriangleright N\qquad K\cong M\smeet N}{(M]; (\sif{P}{N}{Q})\equiv (K]; Q}, \qquad 
            \frac{M\propto N}{\sif{P}{N}{(\sif{Q}{M}{R})}\equiv \sif{P}{N}{R}}.
        \]
    \end{enumerate}
\end{prop}

The laws given in the above proposition deserve some explanation.
In particular, it is interesting to compare them with the corresponding laws of classical programming given in Figure~\ref{fig 1}.
Clause (\ref{if-laws-truth-falsity}) mimics the truth-falsity laws (If-1) and (If-2) in Figure~\ref{fig 1}.
There is a basic difference between the idempotence law (If-3) in classical programming and that in quantum programming.
In the classical case, we have $\sif{P}{b}{P}\equiv P$.
In contrast, a testing $[M]$ is added in the right-hand side of Clause (\ref{if-laws-idempotence}).
This difference comes from the fact that checking condition $b$ does not change the state of classical program variables, but performing measurement $M$ changes the state of quantum variable $q$.
Clause (\ref{if-laws-complementation}) is a quantum generalisation of law (If-5).
The associativity law (\ref{if-laws-associativity}) is a quantum generalisation of laws (If-4) and (If-7).
Note that it only holds for a projective measurement $M$, and we can easily find a counter-example showing that it is not true for a general measurement $M$.
Clause (\ref{if-laws-if-elimination}) is a complement of Proposition~\ref{init-laws}(\ref{init-laws-if-elimination}).

To present quantum generalisations of laws (If-6) and (If-8) in Figure~\ref{fig 1}  for nested $\kif$-statements, we need the following: 
\begin{defn}
    Let $M=\{M_0,M_1\}$ and $N=\{N_0,N_1\}$ be two binary quantum measurements.
    Then we say that $M$ and $N$ left-commute (respectively, right-commute), written $M\diamond_L N$ (respectively, $M\diamond_R N$), if $M_0N_1=N_1M_0$ and $M_1N_1=N_1M_1$ (respectively, $M_0N_0=N_0M_0$ and $M_1N_0=N_0M_1$).
\end{defn}

It is obvious that $M\diamond_L N$ if and only if $M^\bot\diamond_R N$.
The following example can help us to understand the commutativity introduced in the above definition. 

\begin{exam}
    If both $M=\{M_0,M_1\}$ and $N=\{N_0,N_1\}$ are projective measurements, say, $M_0=I-P_X, M_1=P_X, N_0=I-P_Y$ and $N_1=P_Y$, where $P_X$ and $P_Y$ are the projection onto closed subspaces $X,Y$, respectively, then $M\diamond_L N$ and $M\diamond_R N$ if and only if $P_XP_Y$ is also a projection. In this case, $P_XP_Y=P_{X\cap Y}.$
    Furthermore, if $M\blacktriangleright N$ or $M\propto N$, then $M\diamond_L N$, $M\diamond_R N$, $M\diamond_L N^\bot$ and $M\diamond_R N^\bot$.
\end{exam}

Now our laws for nested $\kif$-statements in quantum programming can be given in the following:
\begin{prop}[Laws of nested $\kif$-statement]
\label{nested-if-law}
    \quad
    \begin{enumerate}
        \item\label{nested-if-law-reduction} (Reduction)
        \begin{equation*}
            \frac{(N]\equiv \sif{(K]}{M}{(L]} \qquad [N)\equiv \sif{[K)}{M}{[L)}}
            {\sif{(\sif{P}{K}{Q})}{M}{(\sif{P}{L}{Q})} \equiv \sif{P}{N}{Q}}.
        \end{equation*}

        For measurement in the computational basis, we further have: if $\qbar\cap\rbar = \emptyset$, then
        \begin{equation*}
            \iif{}{\qbar}{m}{\iif{}{\rbar}{n}{P_{mn}}} \equiv
            \iif{}{\qbar,\rbar}{(m,n)}{P_{mn}}.
        \end{equation*}
        
        \item\label{nested-if-law-left-distributivity} (Left-Distributivity)
        \begin{equation*}
            \frac{M\diamond_L N\qquad [N)\equiv [M]; [N)}{\sif{R}{N}{(\sif{P}{M}{Q})}\equiv 
            \sif{(\sif{R}{N}{P})}{M}{(\sif{R}{N}{Q})}}.
        \end{equation*}
        
        \item\label{nested-if-law-right-distributivity} (Right-Distributivity)
        \begin{equation*}
            \frac{M\diamond_R N\qquad (N]\equiv [M]; (N]}{\sif{(\sif{P}{M}{Q})}{N}{R} \equiv \sif{(\sif{P}{N}{R})}{M}{(\sif{Q}{N}{R})}}.
        \end{equation*}
        
        \item\label{nested-if-law-left-distributivity-projection} (Left-Distributivity for Projection)
        If $M,N$ are projective measurements,
        \begin{equation*}
            \frac{ M \blacktriangleright N \quad\mbox{ or }\quad  M^\bot \blacktriangleright N }{\sif{R}{N}{(\sif{P}{M}{Q})} \equiv \sif{(\sif{R}{N}{P})}{M}{(\sif{R}{N}{Q})}}.
        \end{equation*}
        
        \item\label{nested-if-law-right-distributivity-projection} (Right-Distributivity for Projection) If $M,N$ are projective measurements,
        \begin{equation*}
            \frac{ N \blacktriangleright M \quad\mbox{ or }\quad  N \blacktriangleright M^\bot }{\sif{(\sif{P}{M}{Q})}{N}{R} \equiv \sif{(\sif{P}{N}{R})}{M}{(\sif{Q}{N}{R})}}.
        \end{equation*}
    \end{enumerate}
\end{prop}

Obviously, Clause (\ref{nested-if-law-reduction}) in the above proposition is a quantum generalisation of law (If-6) in Figure~\ref{fig 1}, and Clauses (\ref{nested-if-law-left-distributivity}) and (\ref{nested-if-law-right-distributivity}) are quantum generalisations of law (If-8) and its dual.
But as one can expect, certain conditions about the involved measurements are needed in the quantum case.
These laws will be very useful for reducing the number of $\kif$-statements in quantum program optimisation.
For example, by applying equations in Clause (\ref{nested-if-law-reduction}) from left to right, and equations in Clauses (\ref{nested-if-law-left-distributivity}) - (\ref{nested-if-law-right-distributivity-projection}) from right to left, the number of $\kif$-statements can be reduced.

\subsection{Laws of sequential composition}

Several basic laws for the sequential composition of quantum programs are collected in the following: 

\begin{prop}[Laws of sequential composition]
\label{sequential-laws}
    \quad
    \begin{enumerate}
        \item\label{sequential-laws-unit-zero} (Unit and Zero)
        $\iskip; P\equiv P; \iskip; P\equiv P$ and $\iabort; P\equiv P; \iabort; P\equiv \iabort.$
        \item\label{sequential-laws-commutativity} (Commutativity)
        If $\qvar(P_1)\cap\qvar(P_2)=\emptyset$, then $P_1; P_2\equiv P_2; P_1$.  
        \item\label{sequential-laws-associativity} (Associativity)
        $(P; Q); R=P; (Q; R).$
        \item\label{sequential-laws-sequentiality} (Sequentiality)
        If $M$ is a projective measurement, and
        $\qbar\cap \qvar(P_0) = \qbar\cap \qvar(P_1) = \emptyset$, then
        \[
            (\sif{P_0}{M[\qbar]}{P_1}); 
            (\sif{Q_0}{M[\qbar]}{Q_1}) \equiv
            \sif{(P_0; Q_0)}{M[\qbar]}{(P_1; Q_1)}.
        \]
        \item\label{sequential-laws-right-distributivity} (Right-Distributivity)
        $(\sif{P_0}{M}{P_1}); P\equiv \sif{(P_0; P)}{M}{(P_1; P)}.$
        \item\label{sequential-laws-left-distributivity1} (Left-Distributivity-I)
        If $\qbar\cap \qvar(P) = \emptyset$, then
        $$P; (\sif{P_0}{M[\qbar]}{P_1})\equiv \sif{(P; P_0)}{M[\qbar]}{(P; P_1)}.$$
        \item\label{sequential-laws-left-distributivity2} (Left-Distributivity-II)
        Generally, we have:
        \[
            \frac{P; [M)\equiv [N); P'\qquad P; (M]\equiv (N]; P'}{P; (\sif{P_0}{M}{P_1})\equiv \sif{(P'; P_0)}{N}{(P'; P_1)}}.
        \]
        In particular, if $U$ is a unitary operator and $U(M)=\{M_0^\prime,M_1^\prime\}$ is the binary measurement with $M_i^\prime=M_i U$ for $i=0,1$, then $U[q]; (\sif{P_0}{M[q]}{P_1})\equiv \sif{P_0}{U(M)[q]}{P_1}.$ 
    \end{enumerate}
\end{prop}

Clauses (\ref{sequential-laws-unit-zero}), (\ref{sequential-laws-associativity}) and (\ref{sequential-laws-right-distributivity}) in the above proposition are quantum generalisations of classical laws (Sq-1),(Sq-2), (Sq-3) and (Sq-4) in Figure~\ref{fig 1}, respectively. 

\subsection{Interplay between if-Statements and Quantum if-Statements}

Up to now, in this section, all laws are concerned only with the second layer of purely quantum programs in our quantum programming framework.
In this subsection, we present the following law that shows how a quantum $\kif$-statement is absorbed into an $\kif$-statement, and thus connects the second layer with the first layer of quantum circuits. 
\begin{prop}
\label{prop:interplay-qif-if}
    Let $\mathcal{B} = \{|\psi_i\>\}_{i=1}^d$ be an orthonormal basis and $M = \{M_i\}$ be the measurement in basis $\mathcal{B}$; that is, $M_i=|\psi_i\>\<\psi_i|$ for every $i$.
    Then 
    \begin{align*}
        \iqif{\qbar}{\square_{i=1}^d |\psi_i\>}{C_i}; 
        \iif{M}{\qbar}{i}{P_i}
        \equiv \iif{M}{\qbar}{i}{C_i; P_i}.
    \end{align*}
\end{prop}
This law will play a crucial role in the application given in Section~\ref{sec-app}.

\newcommand {\tblue}[1]{{{\color{blue}{#1}}}}
\newcommand {\tred}[1]{{{\color{blue}{#1}}}}
\newcommand {\tgreen}[1]{{{\color{green}{#1}}}}
\newcommand {\eqnsmall}[1]{{{\mbox{$#1$}}}}

\begin{exam}[Correctness of Quantum Teleportation]
    Quantum teleportation is a simple yet important protocol that reveals how entanglement helps transmit quantum states without a quantum channel.
    It can be written in our language as follows:
    \begin{align*}
        \mathbf{Tel} \triangleq \ & \iinit{q}{}; \iinit{r}{}; \gateH[q]; \sqif{}{q}{\gateX[r]}{}{}; \\
        & \iinit{p}{a|0\> + b|1\>}; \sqif{}{p}{\gateX[q]}{}{}; \gateH[p]; \\
        & \sif{}{q}{\gateX[r]}; \sif{}{p}{\gateZ[r]}; \iinit{p}{}; \iinit{q}{}.
    \end{align*}
    The first line prepares the initial Bell state $|\Phi^+\> \triangleq \frac{1}{\sqrt{2}}(|00\> + |11\>)$ between qubits $q,r$.
    The second line prepares the state $a|0\>+b|1\>$ in qubit $p$ and performs the local computation.
    The third line sends the classical information and performs the recovery, and finally resets $p,q$ to the default state $|0\>$.
    The correctness of the teleportation protocol can be specified as the following equivalence: 
    $$\mathbf{Tel} \equiv \iinit{p}{}; \iinit{q}{}; \iinit{r}{a|0\> + b|1\>},$$
    which asserts that the final state of qubit $r$ is the original state we want to transmit.

    To prove it, we first show the functionality of the first line:
    \begin{align*}
        &\iinit{q}{}; \iinit{r}{}; \gateH[q]; \sqif{}{q}{\gateX[r]}{}{} & \\
        \equiv\ & \iinit{q}{\gateH|0\>}; \iinit{r}{}; \sqif{}{q}{\gateX[r]}{}{}
            & \reason{Prop.~\ref{sequential-laws}(\ref{sequential-laws-commutativity},~\ref{sequential-laws-associativity}),~\ref{init-laws}(\ref{init-laws-unitary-elimination})}\\
        \equiv\ & \iinit{q,r}{\eqnsmall{\frac{1}{\sqrt{2}}}(|00\> + |10\>)}; \sqif{}{q}{\gateX[r]}{}{}
            & \reason{Prop.~\ref{init-laws}(\ref{init-laws-cancellation})}\\
        \equiv\ & \iinit{q,r}{\eqnsmall{\frac{1}{\sqrt{2}}}(|00\> + |11\>)}.
            & \reason{Prop.~\ref{init-laws}(\ref{init-laws-qif-elimination},~\ref{init-laws-unitary-elimination})}
    \end{align*}

    Next, we observe that the third line is equivalent to:
    \begin{align*}
        &\sif{}{q}{\gateX[r]}; \sif{}{p}{\gateZ[r]}; \iinit{p}{}; \iinit{q}{} & \\
        \equiv\ &\sqif{}{q}{\gateX[r]}{}{}; [M[q]]; \sqif{}{p}{\gateZ[r]}{}{}; [M[p]]; \iinit{p}{}; \iinit{q}{} &
        \reason{Prop.~\ref{prop:interplay-qif-if}}\\
        \equiv\ &\sqif{}{q}{\gateX[r]}{}{}; \sqif{}{p}{\gateZ[r]}{}{}; ([M[p]]; \iinit{p}{}); ([M[q]]; \iinit{q}{}) &
        \reason{Prop.~\ref{sequential-laws}(\ref{sequential-laws-commutativity},~\ref{sequential-laws-associativity})} \\
        \equiv\ &\sqif{}{q}{\gateX[r]}{}{}; \sqif{}{p}{\gateZ[r]}{}{}; \iinit{p}{}; \iinit{q}{}. &
        \reason{Prop.~\ref{init-laws}(\ref{init-laws-if-elimination})}
    \end{align*}

    Finally, we combine these results to complete the proof:
    \begin{align*}
        & \iinit{q,r}{|\Phi^+\>}; \iinit{p}{a|0\> + b|1\>}; \sqif{}{p}{\gateX[q]}{}{}; \gateH[p];
        \sqif{}{q}{\gateX[r]}{}{}; \sqif{}{p}{\gateZ[r]}{}{}; \iinit{p}{}; \iinit{q}{} \hspace{-5cm} & \\
        \equiv\ & \iinit{q,r}{|\Phi^+\>}; \iinit{p}{a|0\> + b|1\>}; \sqif{}{p}{\gateX[q]}{}{}; 
        \sqif{}{q}{\gateX[r]}{}{}; \gateH[p]; \sqif{}{p}{\gateZ[r]}{}{}; \iinit{p}{}; \iinit{q}{} \hspace{-5cm} & \\
        \equiv\ & \iinit{q}{|+\>};  \iinit{p,r}{a|00\>+b|11\>}; \gateH[p]; \sqif{}{p}{\gateZ[r]}{}{}; \iinit{p}{}; \iinit{q}{} & \\ 
        \equiv\ & \iinit{q}{|+\>}; \iinit{p,r}{a|{+}0\>+b|{-}1\>}; \sqif{}{p}{\gateZ[r]}{}{}; \iinit{p}{}; \iinit{q}{} & 
        \reason{Prop.~\ref{init-laws}(\ref{init-laws-unitary-elimination})}\\
        \equiv\ & \iinit{q}{|+\>}; \iinit{p,r}{a|{+}0\>+b|{+}1\>}; \iinit{p}{}; \iinit{q}{} &
        \reason{Prop.~\ref{init-laws}(\ref{init-laws-qif-elimination},~\ref{init-laws-unitary-elimination})} \\
        \equiv\ & \iinit{q}{|+\>}; \iinit{p}{|+\>}; \iinit{r}{a|0\>+b|1\>}; \iinit{p}{}; \iinit{q}{} & 
        \reason{Prop.~\ref{init-laws}(\ref{init-laws-cancellation})} \\
        \equiv\ & \iinit{p}{|+\>}; \iinit{p}{}; \iinit{q}{|+\>}; \iinit{q}{}; \iinit{r}{a|0\>+b|1\>} & 
        \reason{Prop.~\ref{sequential-laws}(\ref{sequential-laws-commutativity},~\ref{sequential-laws-associativity})}\\
        \equiv\ & \iinit{p}{}; \iinit{q}{}; \iinit{r}{a|0\>+b|1\>}, &
        \reason{Prop.~\ref{init-laws}(\ref{init-laws-cancellation})}
    \end{align*}
    where the first equality is due to Proposition~\ref{sequential-laws}(\ref{sequential-laws-commutativity},~\ref{sequential-laws-associativity}) and the second equality can be derived by
    \begin{align*}
        &\iinit{q,r}{|\Phi^+\>}; \iinit{p}{a|0\> + b|1\>}; \sqif{}{p}{\gateX[q]}{}{}; \sqif{}{q}{\gateX[r]}{}{} & \\
        \equiv\ & \iinit{p,q,r}{\eqnsmall{\frac{a}{\sqrt{2}}}|0\>(|00\>+|11\>) + \eqnsmall{\frac{b}{\sqrt{2}}}|1\>(|00\>+|11\>)}; \sqif{}{p}{\gateX[q]}{}{}; \sqif{}{q}{\gateX[r]}{}{} &
        \reason{Prop.~\ref{init-laws}(\ref{init-laws-cancellation})} \\
        \equiv\ & \iinit{p,q,r}{\eqnsmall{\frac{a}{\sqrt{2}}}(|000\>+|011\>) + \eqnsmall{\frac{b}{\sqrt{2}}}(|110\>+|101\>)}; \sqif{}{q}{\gateX[r]}{}{} &
        \reason{Prop.~\ref{init-laws}(\ref{init-laws-qif-elimination},~\ref{init-laws-unitary-elimination})} \\
        \equiv\ & \iinit{p,q,r}{\eqnsmall{\frac{a}{\sqrt{2}}}(|000\>+|010\>) + \eqnsmall{\frac{b}{\sqrt{2}}}(|111\>+|101\>)} &
        \reason{Prop.~\ref{init-laws}(\ref{init-laws-qif-elimination},~\ref{init-laws-unitary-elimination})} \\
        \equiv\ & \iinit{q}{|+\>}; \iinit{p,r}{a|00\>+b|11\>}. &
        \reason{Prop.~\ref{init-laws}(\ref{init-laws-cancellation})}.
    \end{align*}
\end{exam}

\subsection{Normal Forms of Finite, Purely Quantum Programs}\label{sec-finite}

In this subsection, we establish a normal form of purely quantum programs in $\mathbf{QProg}$ using the laws introduced in the previous subsections. 
\begin{defn}
    A normal form of quantum programs is an $\kif$-statement 
    \begin{equation}
    \label{normal-prog}
        \iif{M}{\qbar}{m}{C_m}
    \end{equation}
    where each $C_m$ is either $\iabort$ or a quantum circuit. 
\end{defn}

\begin{thm}[Normal forms]
\label{thm-normal-prog}
    Each finite, purely quantum program (i.e., a program that does not contain iteration or recursion) $P\in\mathbf{Qprog}$ is equivalent to a normal form.
\end{thm}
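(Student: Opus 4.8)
The plan is to proceed by structural induction on the finite program $P$, maintaining the invariant that $P$ reduces, using the algebraic laws, to a single measurement-guarded if-statement whose branches are each either $\mathbf{abort}$ or a quantum circuit. Since $P$ is finite it is built only from $\mathbf{skip}$, $\mathbf{abort}$, initialisations $\overline{q}:=|\psi\>$, circuits $C$, sequential composition, and the measurement if-statement, so there are four base cases and two inductive cases. The base cases $\mathbf{skip}$, $\mathbf{abort}$ and a circuit $C$ are immediate: each is placed under a trivial one-outcome measurement $M=\{I\}$, e.g. $C\equiv \mathbf{if}(\square 0\cdot M[\overline{q}]=0\rightarrow C)\ \mathbf{fi}$, or equivalently one writes $C\equiv \mathbf{abort}\vartriangleleft M_T\vartriangleright C$ using the truth law of Proposition~\ref{if-laws}(\ref{if-laws-truth-falsity}).

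The only nontrivial base case is initialisation. Here I would realise $\overline{q}:=|\psi\>$ by \emph{measuring $\overline{q}$ in its standard basis $\{|n\>\}$ and correcting in each branch}: taking a unitary $U_n$ with $U_n|n\>=|\psi\>$, the program $\mathbf{if}(\square n\cdot M[\overline{q}]=n\rightarrow U_n[\overline{q}])\ \mathbf{fi}$ has semantics $\sum_n U_n|n\>\<n|\rho|n\>\<n|U_n^\dagger=\sum_n|\psi\>\<n|\rho|n\>\<\psi|$, which coincides with $\llbracket \overline{q}:=|\psi\>\rrbracket$ by Proposition~\ref{sempro}(2). Equivalently one first writes $\overline{q}:=|\psi\>\equiv \overline{q}:=|0\>;U[\overline{q}]$ via Proposition~\ref{init-laws}(\ref{init-laws-unitary-elimination}). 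As in the circuit normal form (Theorem~\ref{normal-thm-cir}), this step presupposes that the correcting unitaries $U_n$ are realisable as circuits; I would carry the same regularity-style hypothesis here.

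For the inductive cases, suppose $P_1\equiv\mathbf{if}(\square m\cdot M[\overline{q}]=m\rightarrow C_m)\ \mathbf{fi}$ and $P_2\equiv\mathbf{if}(\square n\cdot N[\overline{r}]=n\rightarrow D_n)\ \mathbf{fi}$ are normal forms obtained from the induction hypothesis. For a measurement if-statement $\mathbf{if}(\square m\cdot M[\overline{q}]=m\rightarrow P_m)\ \mathbf{fi}$, replacing each $P_m$ by its normal form yields a two-level nested if-statement, and it remains only to flatten it. For the sequential composition $P_1;P_2$ I would first apply the (multi-outcome) right-distributivity law of Proposition~\ref{sequential-laws}(\ref{sequential-laws-right-distributivity}) to get $\mathbf{if}(\square m\cdot M[\overline{q}]=m\rightarrow C_m;P_2)\ \mathbf{fi}$; in each branch, if $C_m=\mathbf{abort}$ then $C_m;P_2\equiv\mathbf{abort}$ by Proposition~\ref{sequential-laws}(\ref{sequential-laws-unit-zero}), while if $C_m$ is a circuit I would absorb it into the inner measurement using the unitary form of left-distributivity in Proposition~\ref{sequential-laws}(\ref{sequential-laws-left-distributivity2}), obtaining $\mathbf{if}(\square n\cdot N^{(m)}=n\rightarrow D_n)\ \mathbf{fi}$ with $N^{(m)}_n=N_n\llbracket C_m\rrbracket$. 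Either way each branch becomes a normal form, so again we are left with a nested if-statement.

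The main obstacle, and the final step in both inductive cases, is flattening a two-level nested if-statement $\mathbf{if}(\square m\cdot M[\overline{q}]=m\rightarrow \mathbf{if}(\square n\cdot N^{(m)}=n\rightarrow Q_{mn})\ \mathbf{fi})\ \mathbf{fi}$ into a single if-statement $\mathbf{if}(\square (m,n)\cdot K=(m,n)\rightarrow Q_{mn})\ \mathbf{fi}$ with combined measurement $K_{(m,n)}=N^{(m)}_n M_m$. The reduction law of Proposition~\ref{nested-if-law}(\ref{nested-if-law-reduction}) covers this only for binary or computational-basis measurements on disjoint registers, so for the arbitrary measurements produced above I expect to prove the general flattening directly from the denotational semantics of Proposition~\ref{sempro}(5): both sides send $\rho$ to $\sum_{m,n}\llbracket Q_{mn}\rrbracket\big(N^{(m)}_n M_m\,\rho\,M_m^\dagger (N^{(m)}_n)^\dagger\big)$, and $K$ is a legitimate measurement because $\sum_{m,n}K_{(m,n)}^\dagger K_{(m,n)}=\sum_m M_m^\dagger\big(\sum_n (N^{(m)}_n)^\dagger N^{(m)}_n\big)M_m=\sum_m M_m^\dagger M_m=I$. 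Verifying this completeness identity, and checking that the (pseudo-)meet and commutation side conditions are genuinely unnecessary once one argues semantically, is where I anticipate the real work; the syntactic bookkeeping of variable sets, so that the absorbed unitaries and combined measurements act on well-defined registers, is routine but must be tracked carefully, exactly as flagged in the footnote to Theorem~\ref{normal-thm-cir}.
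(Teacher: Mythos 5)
Your proof takes essentially the same route as the paper's: structural induction on $P$, with the crux being a direct semantic flattening of the two-level nested if-statement into a single if-statement guarded by the combined measurement $K_{(m,n)}$ ($=N_n\llbracket C_m\rrbracket M_m$ for sequential composition, $N_{m,n}M_m$ for the nested if-case), whose completeness is verified by exactly the unitarity computation you give; you are also right that Proposition~\ref{nested-if-law}(\ref{nested-if-law-reduction}) is too weak for this step and that a direct appeal to Proposition~\ref{sempro}(5) is required --- that is precisely what the paper does. Two points of comparison are worth recording. First, your treatment is more complete at the base cases: the paper's written proof only has cases for $\mathbf{skip}/\mathbf{abort}$, sequential composition and if-statement, silently omitting both bare circuits and initialisation $\overline{q}:=|\psi\rangle$, and in its sequential case it assumes every branch $C_m$ has unitary semantics $U_m$, glossing over $\mathbf{abort}$ branches, which you handle explicitly. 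Second, the regularity-style hypothesis you attach to the initialisation case is unnecessary and would weaken the theorem as stated: since a normal form permits an arbitrary measurement, you can fold the correcting unitaries into the measurement operators themselves, taking $M_n=|\psi\rangle\langle n|$ (these satisfy $\sum_n M_n^\dagger M_n=\sum_n|n\rangle\langle n|=I$) and making every branch $\mathbf{skip}$; then $\mathbf{if}\ (\square n\cdot M[\overline{q}]=n\rightarrow \mathbf{skip})\ \mathbf{fi}$ already has semantics $\rho\mapsto\sum_n|\psi\rangle\langle n|\rho|n\rangle\langle\psi|=\llbracket\overline{q}:=|\psi\rangle\rrbracket(\rho)$, with no assumption on the available gate set. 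With that one repair your argument proves the theorem in full generality and is, at the base cases, more careful than the paper's own proof.
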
 

\begin{proof}We proceed by induction on the structure of $P$. 

\textbf{Case 1}.
$P=\iskip$ or $\iabort$.
Then we can choose a trivial measurement $M=\{M_0,M_1\}$ with $M_0=I$ and $M_1=0$, where $I,0$ are the identity and zero operators, respectively, on the $2$-dimensional Hilbert space.
Let $q$ be a qubit. Then we have $\iskip\equiv \sif{\iskip}{M[q]}{\iabort}$ and $\iabort\equiv \sif{\iabort}{M[q]}{\iskip}.$
Both of them are in a normal form. 

\textbf{Case 2}.
$P=P_1; P_2$, and $P_1,P_2$ have the following normal forms: 
\begin{align*}
    P_1\equiv \iif{M}{\qbar}{m}{C_m},\qquad 
    P_2\equiv \iif{N}{\rbar}{n}{D_n}.
\end{align*}
For each $m$, let the semantics of quantum circuit $C_m$ be a unitary operator $U_m$.
Then for any density operator $\rho$, by the linearity and Proposition~\ref{sempro}(\ref{sempro-seq}) and (\ref{sempro-if}) we obtain:
\begin{align*}
    &\sem{P}(\rho)=\sem{P_2}(\sem{P_1}(\rho)) =\sem{P_2}\left(\sum_m\sem{C_m}\left(M_m\rho M_m^\dag\right)\right)
    =\sum_m\sem{P_2}\left(\sem{C_m}\left(M_m\rho M_m^\dag\right)\right)\\  
    &=\sum_m\left(\sum_n \sem{D_n}\left(N_n\sem{C_m}\left(M_m\rho M_m^\dag\right)N_n^\dag\right)\right)=\sum_{m,n}\sem{D_n}\left(\left(N_nU_mM_m\right)\rho\left(N_nU_mM_m\right)^\dag\right).
\end{align*}
For any $m,n$, we define operator $K_{m,n}=N_nU_mM_m$.
Then by the unitarity of $U_m$, it is easy to check that $\sum_{m,n}K_{m,n}^\dag K_{m,n}=I$ and $K=\left\{K_{m,n}\right\}$ is a quantum measurement.
Thus, we have: 
$$P\equiv \iif{K}{\qbar\cup\rbar\cup(\mbox{$\bigcup_m$}\qvar(C_m))}{(m,n)}{D_{m,n}},$$ which is in a normal form, where $D_{m,n}=D_n$ for every $m$. 

\textbf{Case 3}.
$P=\iif{M}{\qbar}{m}{P_m}$, and each $P_m$ has normal form: 
$$P_m\equiv \iif{}{}{\square n\in\Delta_m\cdot N_m[\rbar_m] = n \rightarrow C_{m,n}}{}.$$
Then for any density operator $\rho$, with Proposition~\ref{sempro}(6) we have:
\begin{align*}
    \sem{P}(\rho)&=\sum_m\sem{P_m}\left(M_m\rho M_m^\dag\right) 
    = \sum_m\left(\sum_{n\in\Delta_m}\sem{C_{m,n}}\left(N_{m,n}M_m\rho M_m^\dag N_{m,n}^\dag\right)\right)\\ 
    &=\sum_{(m,n)\in\Delta}\sem{C_{m,n}} \left(\left(N_{m,n}M_m\right)\rho\left(N_{m,n}M_m\right)^\dag\right),
\end{align*}
where $\Delta=\bigcup_m\left(\{m\}\times\Delta_m\right)$.
For each $(m,n)\in\Delta$, we define operator $K_{m,n}=N_{m,n}M_m$.
Then it is easy to verify that $\sum_{(m,n)\in\Delta}K_{m,n}^\dag K_{m,n}=I$ and $K=\left\{K_{m,n}\right\}_{(m,n)\in\Delta}$ is a quantum measurement.
Consequently, it holds that
$$P\equiv
\iif{K}{\qbar\cup(\mbox{$\bigcup_m$}\rbar_m)}{(m,n)}{C_{m,n}},$$
which is in a normal form. 
\end{proof}

\section{Laws for Recursion}\label{sec-recur}

In this section, we prove some laws for recursive quantum programs.
The syntax and semantics of general recursive quantum programs (with classical control flows) were defined in~\cite{Ying16, Xu-Ying21}.
For simplicity of presentation, we consider a simple variant of them here.
Let $X$ be a program identifier, and let $F(X)$ be a quantum program defined by adding $X$ into the syntax (\ref{prog-syntax}).
Then we write $\mu X.F(X)$ for the recursive quantum program declared by the recursive equation $X=F(X)$.
Intuitively, each occurrence of $X$ within $F(X)$ is a call on the recursive program. 

For any program $P$, we write $F(P)\triangleq F(X)[P/X]$ for the program obtained by replacing $X$ in $F(X)$ with $P$.
Furthermore, we define $F^{(n)}$ as the $n$th syntactic approximation of recursive quantum program $\mu X.F(X)$, i.e., the $n$-fold iteration of $F$:
\begin{align*}
    \begin{cases}
        F^{(0)}=\iabort,\\
        F^{(n+1)}=F\left(F^{(n)}\right)\ {\rm for}\ n\geq 0.
    \end{cases}
\end{align*}
From its definition, we see that $F^{(n)}$ behaves like $\mu X.F(X)$ up to the depth $n$ of recursive calls.
Then the following proposition, proved in~\cite{Ying16}, describes the denotational semantics of recursive quantum programs.  

\begin{prop}\label{rec-sem}
    $\sem{\mu X.F(X)}=\bigsqcup_{n=0}^\infty\sem{F^{(n)}}$, where $\bigsqcup$ stands for the least upper bound in the CPO $\left(\mathcal{QO}(\hs_\QVar),\sqsubseteq\right)$ of quantum operations on $\hs_\QVar$, as in Proposition~\ref{sempro}(\ref{sempro-while}).
\end{prop}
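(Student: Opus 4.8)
The plan is to recognise this as an instance of Kleene's fixed point theorem applied to the CPO $(\mathcal{QO}(\hs_\mathit{QV}),\sqsubseteq)$. Recall that the denotational semantics $\sem{\mu X.F(X)}$ is \emph{defined} as the least fixed point of the semantic functional induced by $F$. Concretely, because $X$ occurs as a program identifier inside $F(X)$, interpreting $X$ by an arbitrary quantum operation $\E\in\mathcal{QO}(\hs_\mathit{QV})$ turns $\sem{F(X)}$ into a function $\Phi_F:\mathcal{QO}(\hs_\mathit{QV})\to\mathcal{QO}(\hs_\mathit{QV})$ on the CPO, where $\Phi_F(\E)$ is the semantics of $F(X)$ with each occurrence of $X$ read as $\E$. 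The goal is then to show that the syntactic approximations $F^{(n)}$ compute exactly the iterates of $\Phi_F$ from the bottom element, and that the supremum of these iterates is the least fixed point.

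First I would establish a substitution lemma stating that $\sem{F(P)}=\Phi_F(\sem{P})$ for every program $P$; that is, substituting $P$ for $X$ syntactically coincides with feeding $\sem{P}$ to the functional $\Phi_F$. This is proved by a routine structural induction on $F$. Combined with the fact that $\sem{\mathbf{abort}}=0$ is exactly the bottom element $\bot$ of the CPO (Proposition~\ref{sempro}(1)), an easy induction on $n$ then yields $\sem{F^{(n)}}=\Phi_F^{\,n}(\bot)$ for all $n\geq 0$.

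The crux of the argument is to show that $\Phi_F$ is Scott-continuous, i.e.\ monotone and preserving the supremum of every increasing chain in $\mathcal{QO}(\hs_\mathit{QV})$. I would prove this by structural induction on $F$: the atomic constructs ($\mathbf{skip}$, $\mathbf{abort}$, initialisation, circuits) induce constant functions, sequential composition composes two continuous functionals, and the measurement-based if-statement is a finite sum of compositions of continuous maps (following the structural form in Proposition~\ref{sempro}(5)). Monotonicity must be handled first, since it guarantees that the chain $\{\Phi_F^{\,n}(\bot)\}_n$ is increasing and hence its supremum exists in the CPO. Given continuity, Kleene's fixed point theorem applies: the least fixed point of $\Phi_F$ equals $\bigsqcup_{n=0}^{\infty}\Phi_F^{\,n}(\bot)$, which by the previous paragraph is $\bigsqcup_{n=0}^{\infty}\sem{F^{(n)}}$, and since $\sem{\mu X.F(X)}$ is by definition this least fixed point, the claim follows.

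The main obstacle I anticipate is the continuity of $\Phi_F$ in the case where $F$ contains a $\mathbf{while}$-loop around an occurrence of $X$. The loop semantics in Proposition~\ref{sempro}(6) is itself given as a supremum $\bigsqcup_k\sem{\mathbf{while}^{(k)}}$ of its finite unfoldings, so establishing that the loop functional preserves the supremum of the outer chain forces an interchange of two suprema in the CPO of quantum operations. Justifying this exchange --- using that $\sqsubseteq$-directed suprema of quantum operations commute and that each finite unfolding depends continuously on the loop body --- is the delicate technical step; the remaining constructs are continuous by standard closure properties and pose no real difficulty.
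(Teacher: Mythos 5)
Your argument rests on a premise the paper does not adopt: you ``recall'' that $\sem{\mu X.F(X)}$ is \emph{defined} as the least fixed point of a semantic functional $\Phi_F$, and then recover the approximation formula via Kleene's theorem. In this paper the logical order is the opposite. Proposition \ref{rec-sem} is not proved here at all --- it is imported from \cite{Ying16}, where the denotational semantics of a recursive program is introduced through its syntactic approximations, so the displayed equation is the ground-level characterization of $\sem{\mu X.F(X)}$, not a consequence of a fixed-point definition. The least-fixed-point property is a separate result, Proposition \ref{prop-fix}, which the paper proves \emph{from} Proposition \ref{rec-sem} together with continuity of the semantic functional (see the appendix: both the first and last equalities in that computation invoke Proposition \ref{rec-sem}). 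Taken as a proof of Proposition \ref{rec-sem} inside this paper's framework, your argument is therefore circular: it assumes as a definition exactly the fixed-point property that the paper derives as a consequence of the statement you are trying to prove. (The machinery you deploy --- the substitution lemma, $\sem{F^{(n)}}=\Phi_F^{\,n}(\bot)$, monotonicity and Scott-continuity by structural induction, and the genuinely delicate while-loop case requiring an exchange of suprema --- is precisely the content of the paper's proof of Proposition \ref{prop-fix}, just run in the reverse direction.)

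There is also a more local technical gap: your functional $\Phi_F$ is obtained by ``interpreting $X$ by an arbitrary quantum operation $\E$,'' which presupposes a denotational semantics for open terms (programs containing a free identifier). The paper's semantics is defined only for closed programs, and its own proof of Proposition \ref{prop-fix} has to work around exactly this point: it invokes representability (every $\E\in\mathcal{QO}(\hs_\mathit{QV})$ equals $\sem{P_\E}$ for some program $P_\E$, Proposition 3.3.6 of \cite{Ying16}) and sets $\mathbb{F}(\E)=\sem{F(P_\E)}$, checking by induction on $F$ that this is independent of the choice of $P_\E$. Your substitution lemma $\sem{F(P)}=\Phi_F(\sem{P})$ is essentially that well-definedness claim, so this part of your proposal is repairable; but without either an environment semantics or the representability trick, $\Phi_F$ is not yet a defined object, and the Kleene argument has nothing to act on.
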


A fixed point characterisation of recursive quantum programs follows immediately: 
\begin{prop}[Fixed Point]
\label{prop-fix}
    $\mu X.F(X)$ is the least solution of equation $X=F(X)$; that is,
    \begin{enumerate}
        \item $\mu X.F(X)\equiv F(\mu X.F(X))$, and 
        \item For any quantum program $P$, $P\equiv F(P)\Rightarrow \mu X.F(X)\sqsubseteq P$, where $Q\sqsubseteq P$ means that $\sem{Q}(\rho)\sqsubseteq \sem{P}(\rho)$ for any density operator $\rho$. 
    \end{enumerate}  
\end{prop}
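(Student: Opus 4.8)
The plan is to establish this as an instance of the Kleene fixed-point theorem, using the explicit characterisation of $\sem{\mu X.F(X)}$ supplied by Proposition~\ref{rec-sem}. The two semantic ingredients I would isolate first are: (i) the functional $\mathcal{F}:\mathcal{QO}(\hs_{QV})\to\mathcal{QO}(\hs_{QV})$ induced by $F$---defined by letting $\mathcal{F}(\E)$ be $\sem{F(X)}$ with the denotation of the identifier $X$ set to $\E$---is \emph{monotone} with respect to the L\"{o}wner order $\sqsubseteq$; and (ii) $\mathcal{F}$ is \emph{Scott-continuous}, i.e. it preserves suprema of increasing chains, $\mathcal{F}(\bigsqcup_n \E_n)=\bigsqcup_n \mathcal{F}(\E_n)$. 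Both are proved by structural induction on $F(X)$: the only place $X$ occurs is through substitution, and each program construct's semantic operation (composition of super-operators in Proposition~\ref{sempro}(4), the measurement-indexed sum in Proposition~\ref{sempro}(5), and the loop supremum in Proposition~\ref{sempro}(6)) is monotone and continuous, so these properties propagate through the induction. This yields the substitution identity $\sem{F(P)}=\mathcal{F}(\sem{P})$; in particular $\sem{F^{(n+1)}}=\mathcal{F}(\sem{F^{(n)}})$, while $\sem{F^{(0)}}=\sem{\mathbf{abort}}=0$ is the bottom element of the CPO.

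For part (1), I would compute directly. Using Proposition~\ref{rec-sem} and then continuity of $\mathcal{F}$,
\begin{equation*}
\sem{F(\mu X.F(X))}=\mathcal{F}\Bigl(\bigsqcup_{n=0}^\infty\sem{F^{(n)}}\Bigr)=\bigsqcup_{n=0}^\infty\mathcal{F}\bigl(\sem{F^{(n)}}\bigr)=\bigsqcup_{n=0}^\infty\sem{F^{(n+1)}}=\bigsqcup_{n=1}^\infty\sem{F^{(n)}}.
\end{equation*}
Since $\sem{F^{(0)}}=0$ is the least element, adjoining the $n=0$ term does not change the supremum, so the right-hand side equals $\bigsqcup_{n=0}^\infty\sem{F^{(n)}}=\sem{\mu X.F(X)}$ by Proposition~\ref{rec-sem} again. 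Hence $\mu X.F(X)\equiv F(\mu X.F(X))$.

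For part (2), suppose $P\equiv F(P)$, i.e. $\sem{P}=\mathcal{F}(\sem{P})$, so $\sem{P}$ is a fixed point of $\mathcal{F}$. I would show by induction on $n$ that $\sem{F^{(n)}}\sqsubseteq\sem{P}$. The base case $\sem{F^{(0)}}=0\sqsubseteq\sem{P}$ holds because $0$ is the bottom of the CPO. For the step, assuming $\sem{F^{(n)}}\sqsubseteq\sem{P}$, monotonicity of $\mathcal{F}$ gives $\sem{F^{(n+1)}}=\mathcal{F}(\sem{F^{(n)}})\sqsubseteq\mathcal{F}(\sem{P})=\sem{P}$. Thus $\sem{P}$ is an upper bound of the chain $\{\sem{F^{(n)}}\}_n$, and since $\sem{\mu X.F(X)}=\bigsqcup_n\sem{F^{(n)}}$ is by Proposition~\ref{rec-sem} the \emph{least} upper bound, we conclude $\sem{\mu X.F(X)}\sqsubseteq\sem{P}$, i.e. $\mu X.F(X)\sqsubseteq P$.

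The main obstacle is the semantic lemma (ii): verifying that $\mathcal{F}$ is continuous, and in particular that the constructs whose semantics already involve an infinite supremum---most notably the while-loop in Proposition~\ref{sempro}(6)---commute with the suprema of the increasing chains fed into $X$. This amounts to a double-limit interchange that relies on the continuity of composition and of countable sums of quantum operations in the CPO $(\mathcal{QO}(\hs_{QV}),\sqsubseteq)$. Monotonicity, which is all that part (2) actually needs, is the comparatively routine half and can be dispatched by the same structural induction without the limit-exchange argument.
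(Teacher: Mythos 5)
Your proposal is correct and takes essentially the same route as the paper: both run the Kleene fixed-point argument on top of Proposition~\ref{rec-sem}, using Scott-continuity of the induced semantic functional for part (1) (with the shift $\bigsqcup_{n\geq 1}=\bigsqcup_{n\geq 0}$ justified by $\sem{F^{(0)}}=0$ being bottom) and monotonicity plus the least-upper-bound property for part (2), with the continuity claim in both cases deferred to a structural induction on $F$. The only difference is a technical device for defining the functional: the paper sets $\mathbb{F}(\E)=\llbracket F(P_\E)\rrbracket$ using the expressivity result of \cite{Ying16} that every quantum operation $\E$ is the denotation of some program $P_\E$ (avoiding a semantics for open terms at the cost of a well-definedness check), whereas you interpret the free identifier $X$ directly as $\E$; the two devices are interchangeable and yield the same proof.
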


Intuitively, the above proposition indicates that a recursive call is equivalent to the whole program.
The two laws in this proposition for recursive quantum programs look the same as laws (Re-2) and (Re-3) in Figure~\ref{fig 1} for classical recursive programs. 
Sharing a similar spirit of using the Kleene fixed-point theorem in both the classical and quantum cases, the proofs of these laws are nevertheless quite different, particularly in the Coq proof assistant, as will be described in Section~\ref{sec-coq}, because we need to handle limits of quantum operations (i.e., super-operators) on $\hs_\QVar$ in the quantum case.

\subsection{Laws of Loops}
As in classical programming, the quantum $\kwhile$-loop introduced in Definition~\ref{def-prog} is a special case of the notion of recursion defined above.
Following~\cite{Hoare87}, we introduce a convenient notation for quantum $\kwhile$-loop: 
$$\swhile{M[\qbar]}{P} \triangleq \iwhile{M}{\qbar}{P}.$$
If the name of measured quantum variables $\qbar$ is not essential or can be known from the context, we often write $\swhile{M}{P}$ for $\swhile{M[\qbar]}{P}$.
Then it is easy to see that 
\begin{equation}
\label{loop-to-rec}
    \swhile{M}{P}\equiv \mu X. (\sif{}{M}{P}; X),
\end{equation}
if $P$ does not contain any further recursive call.
Obviously, Eq. (\ref{loop-to-rec}) is a quantum generalisation of law (Re-1) in Figure~\ref{fig 1}.

As shown in Example~\ref{exam-walk}, quantum random walks (without absorbing boundaries) can be modelled as a fixed number of iterations of a unitary operator $W$.
Their variants with absorbing boundaries can be properly described as a quantum loop: 

\begin{exam}[Quantum random walk with absorbing boundary]
    \[
        \swhile{M[p]}{(\sqif{T_L[p]}{\gateH[d]}{T_R[p]}{}{})}
    \]
    where we adopt the notations from Example~\ref{exam-walk}, and $M$ represents a measurement that detects whether the particle has reached the left or right boundary.
    More precisely, if the left and right boundaries are set at position $n_l$ and $n_r$, respectively, then the measurement is modelled by $M=\{M_0,M_1\}$, where $M_0=|n_l\>\< n_l| + |n_r\>\< n_r|$ and $M_1=I_p-M_0$, and $I_p$ is the identity operator on the position space $\hs_p$. 
\end{exam}

The following proposition presents several useful laws for loops.
It generalises the laws established in~\cite{Hoare93} for classical loops to quantum loops. 

\begin{prop}[Laws for  loop]
\label{loop-law}
    \quad
    \begin{enumerate}
        \item\label{loop-law-fixpoint} (Loop as a Fixpoint)
        $ \swhile{M}{P} \equiv \sif{}{M}{(P; (\swhile{M}{P}))}.$
        \item\label{loop-law-unfolding} (Loop Unfolding)
        $$\frac{M\propto N}{\sif{}{N}{(\swhile{M}{P})}\equiv \sif{}{N}{(P; (\swhile{M}{P}))}}.$$
        \item\label{loop-law-elimination} (Loop Elimination)
        $$\frac{M^\bot \propto N}{\sif{}{N}{(\swhile{M}{P})}\equiv [N]}.$$
        \item\label{loop-law-postcondition} (Postcondition)
        $$\frac{N\gg M}{\swhile{M}{P}\equiv (\swhile{M}{P}); [N]}.$$
    \end{enumerate}
\end{prop}

Proposition~\ref{loop-law}(\ref{loop-law-unfolding}) indicates that if a measurement $N$ that is logically stronger than $M$ is performed initially, then quantum loop $\swhile{M}{P}$ executes the loop body $P$, and then proceeds like $\swhile{M}{P}$ itself, and Proposition~\ref{loop-law}(\ref{loop-law-elimination}) says that if a measurement $N$ that is logically stronger than $M^\bot$ is performed initially, then loop $\swhile{M}{P}$ terminates immediately.
Proposition~\ref{loop-law}(\ref{loop-law-postcondition}) shows that at its termination, loop $\swhile{M}{P}$ establishes a postcondition $N$ which logically contradicts its guard that the outcome of measurement $M$ is $1$.

Recall the notions of logical weakness and contradiction defined in Definition~\ref{defn-measurement-relation}.
Proposition~\ref{loop-law} shows that they are also useful for simplifying loops.

\subsection{Tail Recursion}

Tail recursion is a scheme of recursion in which the recursive call is the last statement to be executed by the program.
It is particularly useful in practice because it is often easy to implement and optimise in compilation.
It is well-known in classical programming that tail recursion can always be realised as a loop.
This result extends to quantum programming as shown in the following:

\begin{thm}[Tail recursion]
\label{thm-tail}
    For any quantum programs $P,Q$ (without recursive calls), it holds that $\mu X.(\sif{Q}{M}{(P; X)})\equiv (\swhile{M}{P}); Q.$
\end{thm}
\begin{proof}
    We write $F(X)\triangleq \sif{Q}{M}{(P; X)}$.
    Let measurement $M=\{M_0,M_1\}$.
    Define quantum operations $\E_i$ by $\E_i(\rho)=M_i\rho M_i^\dag$ for all density operators $\rho$ $(i=0,1)$.
    Then we assert
    \begin{equation}
    \label{tail-proof}
        \sem{F^{(n)}}(\rho)=\sem{Q}\left(\sum_{k=0}^{n-1}\left(\E_0\circ(\sem{P}\circ\E_1)^k\right)(\rho)\right),
    \end{equation}
    for $n\geq 0$.
    Indeed, (\ref{tail-proof}) can be proved by induction on $n$.
    The case of $n=0$ is trivial.
    Now assume (\ref{tail-proof}) is true for $n$.
    Then, for the case of $n+1$, it holds that
    \begin{align*}
        \sem{F^{(n+1)}}(\rho) 
        &= \sem{Q}(\E_0(\rho))+\sem{P; F^{(n)}}(\E_1(\rho))\\
        &=\sem{Q}(\E_0(\rho))+\sem{F^{(n)}}\left((\sem{P}\circ\E_1)(\rho)\right)\\
        &=\sem{Q}(\E_0(\rho))+\sem{Q}\left(\sum_{k=0}^{n-1}\left(\E_0\circ(\sem{P} \circ \E_1)^k\right)\left((\sem{P}\circ\E_1)(\rho)\right)\right)\\
        &=\sem{Q}\left(\E_0(\rho)+\sum_{k=0}^{n-1}\left(\E_0\circ(\sem{P} \circ \E_1)^{k+1}\right)(\rho)\right) \\
        &=\sem{Q}\left(\sum_{k=0}^{n}\left(\E_0\circ(\sem{P} \circ \E_1)^k\right)(\rho)\right).
    \end{align*}
    Here, the third equality comes from the induction hypothesis on $n$.
    Thus, Eq. (\ref{tail-proof}) holds for $n+1$, and we complete the proof of Eq. (\ref{tail-proof}).
    Consequently, using Propositions~\ref{sempro}(\ref{sempro-while}) and~\ref{rec-sem} we obtain \begin{align*}
        \sem{\mu X.(\sif{Q}{M}{(P; X)})}(\rho)
        &=\bigsqcup_{n=0}^\infty\sem{Q}\left(\sum_{k=0}^{n}\left(\E_0\circ(\sem{P} \circ \E_1)^k\right)(\rho)\right)\\
        &= \sem{Q}\left(\bigsqcup_{n=0}^\infty\sum_{k=0}^{n}\left(\E_0\circ(\sem{P} \circ \E_1)^k\right)(\rho)\right)\\
        &=\sem{Q}\left(\sum_{k=0}^\infty\left(\E_0\circ(\sem{P} \circ \E_1)^k\right)(\rho)\right)\\
        &=\sem{(\swhile{M}{P}); Q}(\rho).
    \end{align*}
    Here, the first and fourth equalities come from Propositions~\ref{rec-sem} and~\ref{sempro}, respectively, and the second from the linearity and continuity of $\sem{Q}$, which were proved in~\cite{Ying16}. 
\end{proof}

\section{Laws for Nondeterminism}\label{sec-nondet}

All quantum programs considered in the previous sections are deterministic.
The program construct of nondeterministic choice was introduced by Dijkstra~\cite{Dij75, Dij76} to support the notion of refinement in program development methodology, in which an implemented program is required to be more deterministic than its specification. 
Intuitively, nondeterministic choice $P\sqcup Q$ is executed by executing either $P$ or $Q$, with the choice between them being arbitrary.
To facilitate refinement technique in quantum programming, nondeterministic choice $P\sqcup Q$ was generalised in~\cite{Zu04} and more recently in~\cite{Feng23} to quantum programs $P, Q$.
In this section, we derive some algebraic laws for nondeterministic quantum programs. 

The syntax of nondeterministic quantum programs is given by adding nondeterministic choice $P\sqcup Q$ into Eq. (\ref{prog-syntax}).
Recall from the previous sections that the denotational semantics of a deterministic quantum program $P$ (without nondeterministic choice) is a quantum operation (i.e., trace-nonincreasing super-operators) $\sem{P}:\mathcal{D}(\hs_\QVar)\rightarrow \mathcal{D}(\hs_\QVar)$.
To accommodate nondeterminism, however, the denotational semantics of a quantum program $P$ that may contain nondeterministic choice should be defined as a subset $\sem{P}\subseteq \mathcal{D}(\hs_\QVar)\rightarrow \mathcal{D}(\hs_\QVar)$ of quantum operations.
Intuitively, for each input state $\rho$, $\sem{P}(\rho)=\{\E(\rho) \mid \E\in\sem{P}\}$ is the set of possible output states of $P$.
As shown in~\cite{Feng23}, the structural representation of the denotational semantics of deterministic quantum programs (see Proposition~\ref{sempro}) can be easily lifted to this case: 
\begin{enumerate}
    \item $\sem{\iskip}=\{I\}$ and $\sem{\iabort}=\{0\}$, where $I$ and $0$ are the identity and zero operators, respectively, on $\hs_\QVar$.  
    \item $\sem{ \iinit{q}{|\psi\>}}=\left\{\E_{|\psi\>}\right\}$, where quantum operation $\E_{|\psi\>}$ is defined by $\E_{|\psi\>}(\rho) = \sum_n |\psi\>\< n|\rho|n\>\<\psi|$ for any $\rho$, and $\{|n\>\}$ is a given orthonormal basis of $\hs_\QVar$. 
    \item $\sem{U[\qbar]}=\left\{\E_U\right\}$, where quantum operation $\E_U$ is defined by $\E_U(\rho)=U\rho U^\dag$ for any $\rho$. 
    \item $\sem{P; Q}=\{\mathcal{F}\circ\mathcal{E} \mid \E\in\sem{P}\ {\rm and}\ \mathcal{F}\in \sem{Q}\}.$
    \item $\sem{\sif{P}{M}{Q}}=\left\{\E\circ\E_0+\mathcal{F}\circ\E_1 \mid \E\in\sem{P}\ {\rm and}\ \mathcal{F}\in\sem{Q}\right\},$ where $M=\{M_0,M_1\}$, and quantum operations $\E_0,\E_1$ are defined by $\E_i(\rho)=M_i\rho M_i^\dag$ for $i=0,1$. 
    \item $\sem{\swhile{M}{P}}=\left\{\sum_{i=0}^\infty\E_0\circ(\mathcal{F}_i\circ\E_1)\circ\cdots\circ(\mathcal{F}_1\circ\E_1) \mid \mathcal{F}_1, \ldots, \mathcal{F}_i, \ldots\in\sem{P}\right\}$, where $\E_0,\E_1$ are the same as in (5). 
\end{enumerate}
The denotational semantics of nondeterministic choice $P\sqcup Q$ of quantum programs is then defined in~\cite{Feng23} as 
\begin{equation}
    \sem{P\sqcup Q} =\sem{P}\cup \sem{Q}.
\end{equation}
 
Now we are ready to present our laws for nondeterministic quantum programs.
Some of them were already applied in the example of quantum error correction code transformation in Subsection~\ref{illu-exam}.
We first point out that all laws presented in the previous sections, except those for recursion, can be easily generalised to nondeterministic quantum programs; but some of the laws need certain further assumptions on nondeterminism.
More precisely, we need to assume that the following programs do not contain non-deterministic choice $\sqcup$: $P$ in Proposition~\ref{if-laws}(\ref{if-laws-idempotence}), $P,Q$ in Proposition~\ref{nested-if-law}(\ref{nested-if-law-reduction}), $R$ in Proposition~\ref{nested-if-law}(\ref{nested-if-law-left-distributivity}--\ref{nested-if-law-right-distributivity-projection}), $P$ in Proposition~\ref{sequential-laws}(\ref{sequential-laws-right-distributivity}--\ref{sequential-laws-left-distributivity1}) and $C$ in Proposition~\ref{sequential-laws}(\ref{sequential-laws-left-distributivity2}).
The next proposition introduces some new laws about non-deterministic choice:

\begin{prop}[Laws for Nondeterminism]
\label{laws-nd}
    \quad
    \begin{enumerate}
        \item\label{laws-nd-commutativity} (Commutativity)
        $P\sqcup Q\equiv Q\sqcup P$.
        \item\label{laws-nd-associativity} (Associativity)
        $P\sqcup(Q\sqcup R)\equiv (P\sqcup Q)\sqcup R$.
        \item\label{laws-nd-idempotence} (Idempotence)
        $P\sqcup P\equiv P$.
        \item\label{laws-nd-distributivity-if} (Distributivity I)
        If-conditional distributes over nondeterministic choice: 
        \begin{align*}
            &\sif{(P\sqcup Q)}{M}{R}= (\sif{P}{M}{R})\sqcup (\sif{Q}{M}{R}),\\
            &\sif{R}{M}{(P\sqcup Q)}= (\sif{R}{M}{P})\sqcup (\sif{R}{M}{Q}).
        \end{align*}
        \item\label{laws-nd-distributivity-seq} (Distributivity II)
        Sequential composition distributes over nondeterministic choice:
        \begin{align*}
            P; (Q\sqcup R)\equiv (P; Q) \sqcup (P; R),\qquad (Q\sqcup R); P\equiv (Q; P)\sqcup (R; P).
        \end{align*}
    \end{enumerate}
\end{prop}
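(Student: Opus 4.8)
The plan is to reduce every law to an elementary set-theoretic identity, using the structural representation of the set-valued semantics $\llbracket\cdot\rrbracket$ listed just before the proposition together with the defining equation $\llbracket P\sqcup Q\rrbracket=\llbracket P\rrbracket\cup\llbracket Q\rrbracket$. Throughout I read $P\equiv Q$ as equality of the semantic sets $\llbracket P\rrbracket=\llbracket Q\rrbracket$, which in particular yields $\llbracket P\rrbracket(\rho)=\llbracket Q\rrbracket(\rho)$ for every input $\rho$.

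First, for laws (1)--(3) I would simply unfold nondeterministic choice into set union and invoke the corresponding properties of $\cup$: commutativity gives $\llbracket P\rrbracket\cup\llbracket Q\rrbracket=\llbracket Q\rrbracket\cup\llbracket P\rrbracket$, associativity gives $\llbracket P\rrbracket\cup(\llbracket Q\rrbracket\cup\llbracket R\rrbracket)=(\llbracket P\rrbracket\cup\llbracket Q\rrbracket)\cup\llbracket R\rrbracket$, and idempotence gives $\llbracket P\rrbracket\cup\llbracket P\rrbracket=\llbracket P\rrbracket$. Each is immediate once the syntactic construct is replaced by its semantic union.

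For the distributivity laws (4) and (5) the key observation is that every program combinator is defined as the image of a fixed binary operation applied elementwise to the argument semantics, and such an image always distributes over union in each argument: for any binary operation $\ast$ on super-operators and any sets $S,T,U$ one has $\{s\ast u : s\in S\cup T,\ u\in U\}=\{s\ast u : s\in S,\ u\in U\}\cup\{s\ast u : s\in T,\ u\in U\}$, and symmetrically in the second argument. I would record this as a small auxiliary lemma and then instantiate it. For (4) the relevant operation is the conditional combinator $(\E,\F)\mapsto\E\circ\E_0+\F\circ\E_1$ from clause (5) of the semantics; unfolding $\llbracket(P\sqcup Q)\vartriangleleft M\vartriangleright R\rrbracket$ and applying the lemma in the first (guard) argument yields exactly $\llbracket(P\vartriangleleft M\vartriangleright R)\sqcup(Q\vartriangleleft M\vartriangleright R)\rrbracket$, while the symmetric instantiation in the second argument handles $R\vartriangleleft M\vartriangleright(P\sqcup Q)$. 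For (5) the operation is composition $(\E,\F)\mapsto\F\circ\E$, and the lemma applied in the appropriate factor gives both $\llbracket P;(Q\sqcup R)\rrbracket=\llbracket(P;Q)\sqcup(P;R)\rrbracket$ and $\llbracket(Q\sqcup R);P\rrbracket=\llbracket(Q;P)\sqcup(R;P)\rrbracket$.

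I do not expect any real obstacle here: the whole proposition is a direct consequence of the algebraic properties of set union and of its interaction with elementwise-defined operations. The only point requiring a little care is the bookkeeping in the distributivity cases, namely ensuring that the union is taken in the branch that actually carries the nondeterministic choice (the guard branch of the conditional, or the correct factor of the composition) while the other argument's quantifier is left untouched, so that the auxiliary distributivity lemma is stated and applied in the right slot.
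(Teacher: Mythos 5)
Your proposal is correct and follows essentially the same route as the paper's proof: unfold $\sqcup$ into set union via $\llbracket P\sqcup Q\rrbracket=\llbracket P\rrbracket\cup\llbracket Q\rrbracket$, dispose of (1)--(3) by the corresponding properties of $\cup$, and prove (4)--(5) by splitting the set comprehensions defining the conditional semantics $\{\E\circ\E_0+\mathcal{F}\circ\E_1\}$ and the composition semantics $\{\mathcal{F}\circ\E\}$ over that union. The only difference is organisational---you factor the splitting step into a general auxiliary lemma about elementwise-defined operations distributing over union, whereas the paper carries out the identical calculation inline for each case.
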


These laws in the above proposition were already used in Subsection~\ref{illu-exam} for verification of the correctness of a simple quantum error correction code.
Clearly, Clauses (\ref{laws-nd-commutativity}), (\ref{laws-nd-associativity}), (\ref{laws-nd-distributivity-if}) and (\ref{laws-nd-distributivity-seq}) in this proposition are quantum generalisations of classical laws (Nd-1), (Nd-2), (Nd-3,4) and (Nd-6,7) in Figure~\ref{fig 1}, respectively.
However, it is worth pointing out that the quantum counterpart of law (Nd-5) for classical nondeterministic choice does not hold.
For example, if $P\equiv Q\equiv \iabort$, $M=\{M_0,M_1\}$ and $R$ is a deterministic quantum program (without choice $\sqcup$, then $\llbracket (\sif{P}{M}{Q})\sqcup R\rrbracket =\{0,\llbracket R \rrbracket\}\neq \{0,\llbracket R \rrbracket\circ\E_0, \llbracket R \rrbracket\circ \E_1, \llbracket R \rrbracket\circ (\E_0+\E_1)\}=\llbracket \sif{(P\sqcup R)}{M}{(Q\sqcup R)}\rrbracket$, where for $i=0,1$, quantum operations $\E_i$ is defined by $\E_i(\rho)=M_i\rho M_i^\dag$ for any $\rho$.   

As in classical programming, we are able to prove a series of laws for the refinement of quantum programs.

\section{Laws for Refinement}\label{Sec-Refine}

Refinement is one of the central ideas in program development methodology~\cite{Morgan98}.
As mentioned in Section~\ref{sec-nondet}, it has recently been introduced into quantum programming~\cite{Ped23, Feng23a}. 
Formally, let $P$ and $Q$ be two (nondeterministic) quantum programs.
Then the refinement relation between them is defined as follows: 
\[
    P\sqsubseteq Q\ {\rm if\ and\ only\ if}\ \sem{P}\subseteq\ \overline{\mathit{Con}(\sem{Q})},
\]
where $\mathit{Con}(\cdot)$ stands for the convex hull, and $\overline{\ \cdot\ }$ denotes topological closure.  
Intuitively, $P\sqsubseteq Q$ means that $P$ is a more deterministic program than $Q$.
The statement that $P$ is a refinement of $Q$, written $P\sqsubseteq Q$, should be understood in the following way~\cite{Hoare87}: whenever $Q$ \textit{reliably} serves some useful purpose, then $Q$ can be replaced by $P$, in the certainty that it will serve the same purpose, but not vice versa; that is, there may be some purposes for which $P$ is adequate, but $Q$ cannot be relied on.
Therefore, $P\sqsubseteq Q$ conveys the idea that program $P$ is better than, or at least as good as, program $Q$.  

Obviously, $\sqsubseteq$ is a partial order, $P\sqsubseteq P\sqcup Q$ and $Q\sqsubseteq P\sqcup Q$.
Furthermore, we can prove that the refinement relation is preserved by all quantum program constructs considered in this paper except loops and recursion:
\begin{prop}\label{refine-law1}
If $P\sqsubseteq Q$, then:
    \begin{enumerate}
        \item\label{refine-law1-seq} $P; R\sqsubseteq Q; R$ and $R; P\sqsubseteq R; Q$.
        \item\label{refine-law1-cond} $\sif{P}{M}{R}\sqsubseteq \sif{Q}{M}{R}$ and $\sif{R}{M}{P}\sqsubseteq \sif{R}{M}{Q}$. 
        \item\label{refine-law1-nchoice} $P\sqcup R\sqsubseteq Q\sqcup R$ and $R\sqcup P\sqsubseteq R\sqcup Q.$
    \end{enumerate}
\end{prop}

A discussion of preservation of refinement by loop is given in Appendix~\ref{ref-loop}.  

Now, let us further consider two different refinements of nondeterministic choice.
According to different strategies in the resolution of nondeterminism, we can define:
\begin{enumerate}
    \item \textit{Probabilistic implementation}:
    \begin{equation*}
        \sem{P\sqcup_p Q} =\left\{p\E+(1-p)\mathcal{F} \mid \E\in \sem{P},\mathcal{F}\in \sem{Q}\ {\rm and}\ p\in[0,1]\right\}.
    \end{equation*}
    \item \textit{Measurement-based implementation}:
    Suppose we have an extra qubit $q$ in a pure state $|\phi\>$.
    It is possible to refine $P\sqcup Q$ by: (i) measuring $q$ according to an arbitrary measurement $M$; (ii) for each branch, recovering the $q$ by initialising $\iinit{q}{|\phi\>}$ and then executing $P$ or $Q$, respectively.
\end{enumerate}
In $P\sqcup_p Q$, $P$ and $Q$ are selected to execute according to a probability distribution $p$, $1-p$. 
In contrast,  in the measurement-based implementation, the computational outcome can be seen as obtained by a post-selection according to a quantum measurement $M=\{M_0,M_1\}$. 
The following proposition clarifies the refinement relation between nondeterministic choice and its probabilistic and measurement-based resolutions:
\begin{prop}\label{refine-law2}
\begin{enumerate}
    \item\label{refine-law2-npchoice}
    $P\sqcup_p Q\sqsubseteq P\sqcup Q$. 
    \item\label{refine-law2-pchoice}
    If $P\sqsubseteq Q$ then $P\sqcup_p R\sqsubseteq Q\sqcup_p R$ and $R\sqcup_p P\sqsubseteq R\sqcup_p Q$.
    \item\label{refine-law2-meas}
    $[M]; P \sqsubseteq \sif{P}{M}{P}$. 
    \item\label{refine-law2-init-meas}
    $\iinit{q}{|\phi\>}; (\sif{(\iinit{q}{|\phi\>}; P)}{M[q]}{(\iinit{q}{|\phi\>}; Q)}) \sqsubseteq \iinit{q}{|\phi\>}; (P \sqcup Q)$. 
\end{enumerate}
\end{prop}

\section{An Application: Formal Derivation of the Principle of Deferred Measurements}\label{sec-app}

A basic principle of great utility in quantum computing is:
\begin{itemize}
    \item \textbf{Principle of deferred measurements}~\cite{NC00}: \textit{Measurements can always be moved from an intermediate stage of a quantum circuit to the end of the circuit; if the measurement results are used at any stage of the circuit, then the classically controlled operations can be replaced by conditional quantum operations.} 
\end{itemize} 
Quantum circuits with measurements in the middle, whose results are used to control later computations, are essentially finite programs, as we called them in Section~\ref{sec-finite} of this paper. They are also often called dynamic quantum circuits~\cite{Dynamic}. 
 
The above principle can be formally formulated in a programming language with both $\kif$-statements and quantum $\kif$-statements:
\begin{thm}
\label{thm-defer}
    If plenty of fresh auxiliary variables are assumed, then for any finite quantum program (i.e., a program that does not contain iteration or recursion) $P\in\mathbf{QProg}$, there exists a quantum circuit $C\in\mathbf{QC}$ such that:
    \begin{equation}
    \label{defer-2}
        \iinit{\qbar_a}{}; {\color{blue}P} \equiv \iinit{\qbar_a}{}; {\color{blue}C; \iif{}{\qbar_a}{m}{P_m}}; \iinit{\qbar_a}{}.
    \end{equation}
    where $P_m\in \{\iskip,\iabort\}$, $\qbar_a$ is the auxiliary register that initialised and reset to $|0\>$ through the program. 
    In particular, quantum circuit $C$ can be recursively generated using quantum $\kif$-statements according to the structure of $P$.
\end{thm}
Informally, the above theorem suggests that $P\equiv C; \iif{}{\qbar_a}{m}{P_m}$ if we ignore auxiliary variables, that is, a finite quantum program $P$ can be equivalently modelled as a circuit followed by a computational measurement.
Here, since classical variables are not included in our syntax, we implicitly discard any intermediate measurement outcome; thus, the irreversibility of $\iinit{\qbar_a}{}$ in the RHS of (\ref{defer-2}) might be interpreted as the irreversibility of intermediate measurement in $P$. On the other hand, if we are allowed to gather measurement outcomes, the measurement outcome of $\qbar_a$ is exactly the collection of all measurement outcomes in $P$. 

As an application of the laws we developed in the previous sections, we show how Theorem~\ref{thm-defer} can be formally derived using them. 

\begin{lem}[Implementation of $\kif$-statement with auxiliary variable, c.f.~\cite{NC00}]
\label{lem:implement-if-statement}
    For any if statement $\iif{M}{\qbar}{m}{P_m}$ where $M = \{M_i\}_{i\in I}$, let $q_a$ be an arbitrary fresh auxiliary variable with Hilbert space of dimension $|I|$ and initialised in basis state $|0\>$. 
    Then there exists a unitary operator $U_M$, which is regarded as necessary information for implementing the measurement, such that:
    \[
        \iinit{q_a}{}; {\color{blue}\iif{M}{\qbar}{m}{P_m}}\equiv 
        \iinit{q_a}{}; {\color{blue}U_M[\qbar,q_a]; \iif{}{q_a}{m}{P_m}}; \iinit{q_a}{}.
    \]
\end{lem}

\begin{proof}[Proof of Theorem~\ref{thm-defer}]
    We proceed by induction on the structure of $P$.
    Here, we consider only the case $P = \iif{M}{\qbar}{i}{P_i}$; the proofs for the other cases are similar and postponed to Appendix~\ref{proof-app}.
    By the induction hypothesis, we assume that each $P_i$ has the following normal form such that $\qvar(C_i)\subseteq\qvar(P_i)\cup \rbar_i$ and
    \begin{align}
    \label{eqn:thm8.1_hyp}
        & \iinit{\rbar_{i}}{}; \bt{P_i} \equiv \iinit{\rbar_{i}}{}; \bt{C_i; \iif{}{\rbar_{i}}{j}{P_{ij}}}; \iinit{\rbar_{i}}{}.
    \end{align}
    Let $a$ be a fresh variable that is not in  all $\rbar_{i}$, and let $\qbar_a\triangleq a\cup\bigcup_{i}\rbar_{i}$.
    Then we claim:
    \begin{align*}
        \iinit{\qbar_{a}}{}; \bt{\iif{M}{\qbar}{i}{P_i}}
        \equiv\ 
        & \iinit{\qbar_{a}}{}; \bt{(U_M[\qbar, a] ; \iqif{a}{\square |i\>}{C_i}); } \\   
        & \bt{\iif{}{\qbar_{a}}{k}{P_{k{\downarrow}_{a} k{\downarrow}_{\rbar_{i}}}}}; \iinit{\qbar_{a}}{},
    \end{align*}
    for some unitary $U_M$.
    For simplicity, we set $\rbar\triangleq \bigcup_i\rbar_i$ and $\rbar_i^\prime\triangleq \rbar \backslash \rbar_i$.
    Then
    \begin{align*}
        & \bt{ \iinit{\qbar_{a}}{}}; \iif{M}{\qbar}{i}{P_i} 
        & \\
        \equiv\ & \iinit{\qbar_{a}}{}; \iinit{\rbar}{}; \bt{ \iinit{a}{}; \iif{M}{\qbar}{i}{P_i}} 
        & \reason{Prop.~\ref{init-laws}(\ref{init-laws-cancellation})}\\
        \equiv\ & \rt{ \iinit{\qbar_{a}}{}}; \bt{ \iinit{\rbar}{}}; \rt{ \iinit{a}{}}; U_M[\qbar,a]; \bt{\iif{}{a}{i}{P_i}}; \iinit{a}{} \hspace{-0.3cm}
        & \reason{Lemma~\ref{lem:implement-if-statement}}\\
        \equiv\ & \iinit{\qbar_{a}}{}; U_M[\qbar,a]; \iif{}{a}{i}{\bt{ \iinit{\rbar}{}}; P_i}; \iinit{a}{} 
        & \reason{Props.~\ref{init-laws}(\ref{init-laws-cancellation}),~\ref{sequential-laws}(\ref{sequential-laws-commutativity},~\ref{sequential-laws-left-distributivity1})}\\
        \equiv\ & \iinit{\qbar_{a}}{}; U_M[\qbar,a]; \iif{}{a}{i}{ \iinit{\rbar}{}; \iinit{\rbar_i}{}; \bt{ \iinit{\rbar_i^\prime}{}; P_i}}; \iinit{a}{} \hspace{-2cm}
        & \reason{Prop.~\ref{init-laws}(\ref{init-laws-cancellation})}\\
        \equiv\ & \iinit{\qbar_{a}}{}; U_M[\qbar,a]; \iif{}{a}{i}{ \iinit{\rbar}{}; \bt{ \iinit{\rbar_i}{}; P_i}; \iinit{\rbar_i^\prime}{}}; \iinit{a}{}  \hspace{-2cm}
        & \reason{Prop.~\ref{sequential-laws}(\ref{sequential-laws-commutativity})}\\
        \equiv\ & \iinit{\qbar_{a}}{}; U_M[\qbar,a]; \mathbf{if}\,[a]\left(\square i\rightarrow \bt{ \iinit{\rbar}{}; \iinit{\rbar_{i}}{}}; \vphantom{P_{ij}}\right. & \\ 
            &\!\!\left.C_i; \iif{}{\rbar_i}{j}{P_{ij}}; \bt{ \iinit{\rbar_{i}}{}; \iinit{\rbar_i^\prime}{}}\right) \mathbf{fi}; \iinit{a}{} 
        & \reason{Ind. hyp. Eq. (\ref{eqn:thm8.1_hyp})}\\
        \equiv\ & \iinit{\qbar_{a}}{}; U_M[\qbar,a]; \mathbf{if}\,[a]\left(\square i\rightarrow \bt{ \iinit{\rbar}{}}; C_i; \vphantom{P_{ij}}\right. & \\
            &\!\!\left.\rt{\iif{}{\rbar_i}{j}{P_{ij}}; \iinit{\rbar}{}}\right) \mathbf{fi}; \iinit{a}{} 
        & \reason{Prop.~\ref{init-laws}(\ref{init-laws-cancellation})}\\
        \equiv\ & \rt{ \iinit{\qbar_{a}}{}}; U_M[\qbar,a]; \rt{ \iinit{\rbar}{}}; \mathbf{if}\,[a]\left(\square i\rightarrow \vphantom{P_{ij{\downarrow}_{\rbar_i}}} \right. & \\
            &\!\!\left.\bt{C_i; \iif{}{\rbar}{j}{P_{ij{\downarrow}_{\rbar_i}}}; \iinit{\rbar}{}}\right) \mathbf{fi}; \iinit{a}{} 
        & \reason{Props.~\ref{init-laws}(\ref{init-laws-if-expansion}),~\ref{sequential-laws}(\ref{sequential-laws-commutativity},~\ref{sequential-laws-left-distributivity1})} \\
        \equiv\ & \iinit{\qbar_{a}}{}; U_M[\qbar,a]; \bt{\iif{}{a}{i}{C_i; \iskip}}; & \\
            & \iif{}{a}{i}{\iif{}{\rbar}{j}{P_{ij{\downarrow}_{\rbar_i}}}; \rt{ \iinit{\rbar}{}}}; \iinit{a}{} 
        & \reason{Props.~\ref{init-laws}(\ref{init-laws-cancellation}),~\ref{sequential-laws}(\ref{sequential-laws-unit-zero},~\ref{sequential-laws-commutativity},~\ref{sequential-laws-sequentiality})}\\
        \equiv\ & \iinit{\qbar_{a}}{}; U_M[\qbar,a]; \iqif{a}{\square |i\>}{C_i}; \bt{\iif{}{a}{i}{\iskip}; } \hspace{-0.5cm}& \\
        & \bt{\iif{}{a}{i}{\iif{}{\rbar}{j}{P_{ij{\downarrow}_{\rbar_i}}}}}; \rt{ \iinit{\rbar}{}; \iinit{a}{}} 
        & \reason{Props.~\ref{sequential-laws}(\ref{sequential-laws-right-distributivity}),~\ref{prop:interplay-qif-if}}\\
        \equiv\ & \iinit{\qbar_{a}}{}; U_M[\qbar,a]; \iqif{a}{\square |i\>}{C_i}; & \\ 
        & \bt{\iif{}{a}{i}{\iif{}{\rbar}{j}{P_{ij{\downarrow}_{\rbar_i}}}}}; \iinit{\qbar_a}{} 
        & \reason{Props.~\ref{init-laws}(\ref{init-laws-cancellation}),~\ref{sequential-laws}(\ref{sequential-laws-unit-zero},~\ref{sequential-laws-sequentiality})}\\
        \equiv\ & \iinit{\qbar_{a}}{}; (U_M[\qbar,a]; \iqif{a}{\square |i\>}{C_i}); & \\ 
        & \iif{}{\qbar_a}{k}{P_{k{\downarrow}_a k{\downarrow}_{\rbar_i}}}; \iinit{\qbar_a}{}.
        & \reason{Prop.~\ref{nested-if-law}(\ref{nested-if-law-reduction})}
    \end{align*}
    It is particularly interesting to note that the law in Proposition~\ref{prop:interplay-qif-if} is used in the 10th equivalence of the above equation. 
\end{proof}

\section{Formalisation in the Coq Proof Assistant}
\label{sec-coq}

We formalised the theory in which our laws are established and verified all of the laws presented in previous sections in the Coq proof assistant~\cite{coq}.
We achieved this by employing CoqQ~\cite{Zhou23}, a general-purpose framework for quantum program verification built upon the state-of-the-art mathematical libraries MathComp~\cite{Mathcomp} and MathComp-Analysis~\cite{mathcomp-analysis}. 
The development can be found at \url{https://github.com/coq-quantum/CoqQ/tree/main/src/example/qlaws}.

There are several mechanised approaches to quantum circuits and quantum programs; see~\cite{CVLreview, LSZreview} for a comprehensive review. 
Among them, targeting the verification of high-level quantum programs, CoqQ offers an extensive library on abstract linear algebra and analysis, e.g., existing theorems on convergence and fixed points are crucial for formalising our semantics of while and recursive programs. Recent development~\cite{Feng23a} enhances CoqQ by a formalised mechanism for quantum registers with automatic type checking, greatly facilitating the mature use of quantum variables.
Other tools, such as \qwire~\cite{Rand17, RPZ17} and SQIR~\cite{Hiet21,hietala2020proving}, have their respective focuses, as we will discuss in Section~\ref{sec-related}.

\subsection{Formalizing the theory}\label{sec-formal-th}
The existing qwhile language in CoqQ is not sufficient to describe and prove the laws proposed in this paper. The further developments of formalising the theory in this work include:
\begin{itemize}
    \item We implement the two layers of quantum circuits and purely quantum programs in our three-layer quantum programming framework (see Figure~\ref{fig 0}) from the bottom up. In particular, the purely quantum program layer is divided into quantum $\kwhile$-programs, quantum programs with recursion, and quantum programs with nondeterminism, so that we can prove the laws in different scenarios with slightly adjusted premises.
    As far as we know, this is the first time formalising the semantics of recursive quantum programs, in particular \textit{with mutual recursive calls}.
    \item We provide automatic checks on the sets of variables, such as the well-formedness and disjointness of the programs, to facilitate the use of the laws.
    The disjointness condition asserts that a quantum register should consist of a set of distinct variables, which is required due to the no-cloning theorem.
    Disjointness is further employed in (1) the well-formedness of a quantum $\kif$-statement and (2) side-conditions of multiple laws.
    To this end, we extend the register mechanism in~\cite{Feng23a}, making Coq automatically infer the disjointness conditions.
    \item \textit{Convex hull} is employed as the basis of developing the refinement laws of nondeterministic quantum programs.
    As a foundation for proving these laws, we implemented the theory of convex hulls with related basic properties of addition $+$ (i.e., $A+B = \{a+b\mid a\in A\ \&\ b\in B\}$) and composition $\circ$ (i.e., $A\circ B = \{a\circ b\mid a\in A\ \&\ b\in B\}$),
    such as $\mathit{Conv}(A+B) = \mathit{Conv} (A) + \mathit{Conv} (B)$ and $\mathit{Conv}(A \circ B) = \mathit{Conv}(\mathit{Conv} (A)\circ \mathit{Conv} (B))$.
\end{itemize}

\subsection{Verification of the laws}

Based on the developments described in the above subsection, we proved all of the laws of quantum programming introduced in this paper in Coq. More precisely: 
\begin{itemize} 
    \item The basic laws for quantum circuits in Section~\ref{sec-laws-circ} can be easily proved.
    More interestingly, we prove the existence of normal forms for quantum circuits in Coq.
    In detail, we first give a specific construction of normal form, which is defined inductively on the program syntax, and then prove that the construction is semantically equivalent to the original program.
    \item The laws for purely quantum programs are proved sound with respect to the semantics formalised in Subsection~\ref{sec-formal-th}. 
    Some laws are proved in a more general setting; for example, the rules for recursive programs hold even if more than one recursive functions are involved.
\end{itemize}

\subsection{Applications}
    Furthermore, the formalisation and proof of our laws of quantum programming, described in the previous two subsections, support formal verification of more complicated applications in Coq.
    As a simple example showing this prospect, we provide the proof of Example~\ref{illu-exam} in Coq, which can be found in the development.

\section{Related Work}\label{sec-related}

\subsection{Quantum Programming Framework}
We described a three-layer framework of quantum programming in Section~\ref{sec-framework} so that the laws of quantum programming developed in this paper can be grouped properly.
As pointed out in the Introduction, all ingredients of this framework are not new. At the first layer is a quantum circuit description language.
A basic difference between it and other quantum circuit languages like QWIRE~\cite{Rand17} is that the construct of quantum $\kif$-statement is included in the former (see Clause (4) in Definition~\ref{def-circ}) but not in the latter.
But quantum $\kif$-statement has been extensively studied in quantum programming literature~\cite{Alt05, Ying12, Yuan,  Bich, Voi23, Yuan24}. 
At the second layer of purely quantum programs, a quantum extension of classical while-language is employed, gradually expanded by adding recursion and nondeterministic choice.
The quantum $\kwhile$-language is standard, as exposed in the book~\cite{Ying16}.
The recursion considered in this paper is defined with classical control flow rather than quantum control flow.
This kind of recursion was carefully studied in Section 3.4 of~\cite{Ying16} and~\cite{Xu-Ying21}.
Nondeterministic quantum programs were first introduced in~\cite{Zu04}, and program logic for them was recently developed in~\cite{Feng23}.
The third layer of our framework is quantum programs embedded in a classical programming language, which is a common practice in quantum programming.
Indeed, all of these program constructs have already been widely adopted in mainstream quantum programming languages and platforms~\cite{Qiskit, Cirq, ProjectQ, Omer05, LanQ, GKMW10, GP13, Qsharp, Quipper, Scaffold, TKet, isQ}. Therefore, one can expect that the laws established in this paper can be applied to the compilation, transformation, and optimisation of quantum programs in these languages.
However, determining how to practically apply these laws in each case remains an important direction for future research.

\subsection{Algebraic reasoning about quantum programs}
Our laws of quantum programming, established within the framework discussed above, are essentially quantum generalisations of Hoare et al.'s algebraic laws of classical programming~\cite{Hoare87, Hoare88, Hoare93}. 

Several other approaches to algebraic reasoning about quantum programs have been proposed in the literature.
The ZX-calculus is a graphical language for reasoning about linear maps between qubits~\cite{ZX-book, DKP20, coecke2011interacting,PyZX}.
A rich theory of equational reasoning and algebraic laws has been developed around it.
It has found successful applications across a wide range of areas in quantum computation and quantum information, particularly in quantum circuit verification, compilation, synthesis, and optimisation.
The relationship between the ZX-calculus and the laws presented in this paper remains to be fully understood; for example, it is an open question how our laws—especially those concerning loops and recursion—can be represented within the ZX-calculus framework.

An interesting approach to quantum programming was proposed in~\cite{GKMW10, GP13}, where linear algebraic tools are used to provide basic insights into the idealised process of unitary evolution.
Clearly, exploring how the laws developed in this paper can be combined with the ideas of~\cite{GKMW10, GP13} is an interesting and worthwhile topic for future research.

An algebraic theory of unitary gates and quantum measurements was proposed in~\cite{Staton15}, from which an equational theory for quantum programs was extracted. 
A key difference between~\cite{Staton15} and this paper is that the former emphasises high-level theories, whereas the latter aims to derive concrete laws that can be directly applied to quantum program development.

The successful applications of KAT (Kleene Algebras with Tests)~\cite{Kozen} in reasoning about classical programs were extended to quantum programs and networks~\cite{Peng22, QKAT24}.
However, most of the laws of quantum programming derived in this paper were not obtained in~\cite{Staton15, Peng22}. 
A normal form of quantum $\kwhile$-programs consisting of only a single loop was proved in~\cite{Peng22, Yu23}.
However, it is different from our normal form (Theorem~\ref{thm-normal-prog}) because the latter is about finite quantum programs without loops.
Indeed, they can be combined to transform an arbitrary quantum $\kwhile$-program $P$ to a stronger normal form $P\equiv\iwhile{M}{\qbar}{Q}$ with a single loop and a flat $\kif$-statement $Q$ in the form of (\ref{normal-prog}).   

\subsection{Rewrite Rules used in Compilation and  Optimisation of Quantum Programs}
In the last few years, several quantum-circuit optimisers have been implemented using rewrite rules, including VOQC~\cite{Hiet21}, Quartz~\cite{Xu22}, and QUESO~\cite{Xu23}.
Some rewrite rules were also introduced in Giallar---a verification tool for the Qiskit compiler~\cite{Tao22}. 
The rewrite rules employed in these works are essentially algebraic laws for quantum circuits, but they are quite different from our laws for quantum circuits given in Section~\ref{sec-laws-circ} because the majority of our laws concern the quantum $\kif$-statement, which is not defined in the syntax of quantum circuits considered in~\cite{Hiet21, Xu22, Xu23, Tao22}.
Indeed, the laws given in~\cite{Hiet21, Tao22, Xu22, Xu23} and our laws presented in Section~\ref{sec-laws-circ} can be used complementarily for the optimisation of quantum circuits.
They can be further combined with our laws presented in Section~\ref{sec-laws-prog} for optimisation of quantum programs. 

The normal form approach to compiler design in~\cite{Hoare93} was already generalised in~\cite{Zu05} to the compilation of quantum programming language qGCL (quantum Guarded-Command Language)~\cite{Zu00}, which extends pGCL (probabilistic GCL)~\cite{He97} with constructs of initialisation, unitary transformation, and measurement.
The compiler described in~\cite{Zu05} mainly deals with transformations at the classical-quantum hybrid layer, with the layers of quantum circuits and purely quantum programs untouched.
The normal form employed in~\cite{Zu05} is inherited directly from that in~\cite{Hoare93, He97} for GCL via pGCL, and thus is fundamentally different from the normal forms presented in this paper (Theorems~\ref{normal-thm-cir} and~\ref{thm-normal-prog}).
Our normal forms can help to further compile the compiled programs in~\cite{Zu05} into a lower level. 

\subsection{Mechanised Approach for Quantum Programming}
There is an active line of work that develops formal methods and, more specifically, machine-checkable verification of quantum programs (see recent reviews~\cite{CVLreview, LSZreview}).
Here, we only briefly discuss those that are mostly related to our work: 

\textit{\textbf{Coq-based formalisation}}. 
\qwire~\cite{Rand17,RPZ17} formalises the circuit-like programming language and provides a denotational semantics of quantum circuits in terms of density matrices.
Its extension \reqwire~\cite{rand2019reqwire} is a verified compiler from classical functions to quantum circuits.
SQIR~\cite{Hiet21} is a low-level language for intermediate representation.
Its semantics is based on a density matrix representation of quantum states, and has been successfully applied to developing a verified optimiser of quantum circuits~\cite{Hiet21}, as well as semantics-based verification of quantum programs~\cite{hietala2020proving, PHT23}.
VyZX~\cite{VyZX22,VyZX23,shah2024vicarvisualizingcategoriesautomated} is a certified (in Coq) formalisation of the ZX-calculus~\cite{coecke2011interacting,PyZX}, which is proposed for reasoning about quantum circuits in a flexible graphical structure.

\textit{\textbf{Isabelle/HOL-based formalisation}}.
QHLProver~\cite{LZW19} provides the first formalisation of quantum Hoare logic~\cite{Ying12} and is used to verify examples such as Grover's algorithm.
Isabelle Marries Dirac~\cite{BLH21,bordg2020isabelle} is an ongoing effort to formalise quantum information theory and to provide verified quantum algorithms; the formalisation is based on a matrix representation for quantum circuits.
qrhl-tool~\cite{Unr19, Unr20} is a verification tool for (post-)quantum cryptography based on quantum relational Hoare logic.

\textit{\textbf{Other tools}}.
Building upon Why3~\cite{bobot2011why3}, \qbricks~\cite{chareton2020deductive} provides a highly automated verification framework for circuit-building quantum programs based on path-sum representations of quantum states~\cite{amy2018towards}. EasyPQC~\cite{BBF21} is an extension of EasyCrypt~\cite{barthe2012easycrypt} that aims to verify the security of post-quantum cryptography based on post-quantum relational Hoare logic.
The proof assistant Quantomatic~\cite{KZ15} has been successfully used for diagrammatic reasoning in quantum information.
A particularly interesting problem for future research is how the laws developed in this paper can be implemented in Quantomatic so that they can be used more conveniently for reasoning about quantum programs (with $\kif$-statements and $\kwhile$-loops).  

\section{Conclusion}\label{sec-concl}

A series of basic laws for quantum programming has been developed in this paper, generalising Hoare et al.'s fundamental laws for classical programming.
These laws are formally verified in the Coq proof assistant, ensuring that they can be confidently applied in the compilation, transformation, optimisation, analysis, and verification of quantum programs.

For further development and application of this line of research, we propose the following topics:
\begin{itemize}
    \item \textbf{\textit{Application in Quantum Compilers}}:
    As briefly discussed earlier, the laws presented in this paper—particularly the two normal forms and the realisation of tail recursion by loops—can be incorporated into optimising compilers for quantum programs.
    However, it remains unclear whether these laws will be practically useful for quantum circuit and program optimisation.
    We plan to collaborate with the isQ team~\cite{isQ} to implement these laws in the isQ compiler and evaluate their effectiveness using benchmarks of quantum circuits and programs.

    \item \textbf{\textit{Laws for Recursion with Quantum Control Flows}}:
    Quantum $\kif$-statement is the basic program construct representing quantum control flows.
    Recently, some recursion schemes with quantum control flows have been introduced into quantum programming; for example, recursion in linear-algebraic (quantum) lambda-calculus~\cite{Ben22, ADV17, DGMV19}, quanta-morphism (a structural form of quantum recursion implementing cycles and folds on lists with quantum control flow)~\cite{IBM-Q}, quantum recursion in Fock spaces~\cite{Ying16}, and recursively defined quantum circuits~\cite{YingZ23, YingZ24, ZhangY25}.
    A set of algebraic laws for quantum $\kif$-statements was established at the quantum circuit layer in Section~\ref{sec-laws-circ}.
    On the other hand, we proved several laws for loop and recursive quantum programs with classical control flows; in particular, Theorem~\ref{thm-tail} for tail recursion.
    A natural problem is how to extend these laws to quantum recursive programs with quantum control flows.
    It seems that this problem cannot be solved simply by adopting the approach used in Section~\ref{sec-recur}.  

    \item \textbf{\textit{Automated Verified Verifier/Optimiser}}:
    Although we have formalised the proposed laws in Coq, the development currently only supports interactive proofs, which significantly limits its usefulness.
    Automation is particularly important, especially for verifying or optimising large-scale circuits and programs.
    Building rewriting systems is one of the common approaches for equivalence checking.
    Designing heuristics or AI-assisted algorithms to optimise the application of different laws is also a potential direction.
\end{itemize}

\section*{Acknowledgement}
This research was partly supported by the National Key R\&D Program of China under Grant \linebreak No. 2023YFA1009403.

\bibliography{main}

\newpage

\appendix

{\centering\LARGE\textbf{Appendices}}

\section{Proofs of the Laws}

For readability, we omitted the detailed proofs of the laws developed in this paper from the main text.
For the reader's convenience, we present them here.

\subsection{Proofs of the Laws in Section~\ref{sec-laws-circ}}

\begin{proof}[Proof of Proposition~\ref{qif-law}]
    (\ref{qif-law-changing-basis})
    By assumption, we obtain:
    \begin{align*}
        \sem{\sqif{C_0}{q}{C_1}{|\psi_0\>}{|\psi_1\>}}
        &= |\psi_0\>_q\<\psi_0|\otimes \sem{C_0} +|\psi_1\>_q\<\psi_1|\otimes \sem{C_1}\\
        &= U|\varphi_0\>_q\<\varphi_0|U^\dag \otimes \sem{C_0} +U|\varphi_1\>_q\<\varphi_1| U^\dag \otimes \sem{C_1}\\ 
        &= U(|\varphi_0\>_q\<\varphi_0| \otimes \sem{C_0} +|\varphi_1\>_q\<\varphi_1|  \otimes \sem{C_1})U^\dag\\ 
        &= \sem{RHS}.
    \end{align*}
    
    (\ref{qif-law-symmetry})
    Directly by definition:
    \begin{align*}
      \sem{LHS} = |\phi_0\>\<\phi_0|\otimes\sem{C_0}+|\phi_1\>\<\phi_1|\otimes\sem{C_1} = \sem{RHS}.
    \end{align*}
    
    (\ref{qif-law-idempotence})
    It holds that $\sem{\sqif{C}{q}{C}{}{}}=|0\>\< 0|\otimes \sem{C}+|1\>\< 1|\otimes \sem{C}=(|0\>\< 0|+|1\>\< 1|)\otimes \sem{C}=\sem{C}.$
    
    (\ref{qif-law-distributivity})
    By definition, we have:
    \begin{align*}
        \sem{RHS}&=|0\>_{q_2}\< 0|\otimes \sem{\sqif{C}{q_1}{C_0}{}{}} + |1\>_{q_2}\< 1|\otimes \sem{\sqif{C}{q_1}{C_1}{}{}}\\
        &= |0\>_{q_2}\< 0|\otimes(|0\>_{q_1}\< 0|\otimes \sem{C} +|1\>_{q_1}\< 1|\otimes \sem{C_0})\\
        &\quad +|1\>_{q_2}\< 1|\otimes(|0\>_{q_1}\< 0|\otimes \sem{C}+  |1\>_{q_1}\< 1|\otimes \sem{C_1})\\ 
        &=|0\>_{q_1}\< 0|\otimes(|0\>_{q_2}\< 0|+|1\>_{q_2}\< 1|)\otimes \sem{C}\\
        &\quad +|1\>_{q_1}\< 1|\otimes(|0\>_{q_2}\< 0| \otimes \sem{C_0}+ |1\>_{q_2}\< 1| \otimes \sem{C_1})\\
        &=|0\>_{q_1}\< 0|\otimes \sem{C} +|1\>_{q_1}\< 1|\otimes(\sem{\sqif{C_0}{q_2}{C_1}{}{}})\\
        &=\sem{LHS}. 
    \end{align*}
     
    (\ref{qif-law-nested})
    This follows by a straightforward expansion using the definition.
\end{proof}

\begin{proof}[Proof of Proposition~\ref{circuit-seq-law}]
    (\ref{circuit-seq-law-unit}--\ref{circuit-seq-law-associativity})
    These follow immediately from the definition.
    
    (\ref{circuit-seq-law-sequentiality})
    We have:
    \begin{align*}
        \sem{LHS} &= (|0\>_q\< 0|\otimes\sem{D_0}+|1\>_q\< 1|\otimes\sem{D_1})(|0\>_q\< 0|\otimes\sem{C_0}+|1\>_q\< 1|\otimes \sem{C_1})\\
        &=|0\>_q\< 0|0\>_q\< 0|\otimes \sem{D_0}\sem{C_0} +|0\>_q\< 0|1\>_q\< 1|\otimes \sem{D_0}\sem{C_1}\\ 
        &\quad +|1\>_q\< 1|0\>_q\< 0|\otimes \sem{D_1} \sem{C_0} +|1\>_q\< 1|1\>_q\< 0|\otimes \sem{D_1} \sem{C_1}\\
        &=|0\>_q\< 0|\otimes\sem{C_0; D_0}+|1\>_q\< 1|\otimes\sem{C_1; D_1}\\
        &=\sem{RHS}. 
    \end{align*}

    (\ref{circuit-seq-law-distributivity})
    This follows by a calculation similar to (\ref{circuit-seq-law-sequentiality}). 
\end{proof}

\begin{proof}[Proof of Proposition~\ref{choice-laws}]
    (\ref{choice-laws-symmetry})
    This can be shown by a routine calculation:
    \begin{align*}
        \sem{LHS} &= (|\psi_0\>_q\<\psi_0|\otimes\sem{C_0} + |\psi_1\>_q\<\psi_1|\otimes\sem{C_1}) U_{q} \\
        &= U_q|\phi_0\>\<\phi_0| \otimes\sem{C_0} + U_q|\phi_1\>_q\<\phi_1|\otimes\sem{C_1} \\
        &= \sem{RHS}.
    \end{align*}

    (\ref{choice-laws-sequentiality})
    This follows immediately from Proposition~\ref{circuit-seq-law}(\ref{circuit-seq-law-sequentiality}).

    (\ref{choice-laws-distributivity})
    We first note that $q\notin\qvar(C)$ and $\qvar(D)=\{q\}$ imply $\qvar(C)\cap\qvar(D)=\emptyset$.
    Then by Proposition~\ref{circuit-seq-law}(\ref{circuit-seq-law-commutativity}) we obtain:
    \begin{align*}
        C; \sqif{C_0}{D[q]}{C_1}{}{}
        &\equiv C; D; \sqif{C_0}{q}{C_1}{}{}\\
        &\equiv D; C; \sqif{C_0}{q}{C_1}{}{}\\
        &\equiv D; \sqif{(C; C_0)}{q}{(C; C_1)}{}{}\\
        &\equiv \sqif{(C; C_0)}{D[q]}{(C; C_1)}{}{}. 
    \end{align*}
    Other equivalences in this clause can be proved in a similar manner. 
\end{proof}

\subsection{Proofs of the Laws in Section~\ref{sec-laws-prog}}

\begin{proof}[Proof of Proposition~\ref{init-laws}]
    (\ref{init-laws-cancellation})
    For any density operator $\rho$, by Proposition~\ref{sempro}(\ref{sempro-init}) we have:
    \begin{align*}
        \sem{ \iinit{\qbar}{|\varphi\>}; \iinit{\qbar}{|\psi\>}}(\rho)&=\sem{ \iinit{\qbar}{|\psi\>}}\left(\sum_n|\varphi\>_\qbar\< n|\rho|n\>_\qbar\<\varphi|\right)\\
        &=\sum_{n^\prime}|\psi\>_\qbar\< n^\prime|\left(\sum_n|\varphi\>_\qbar\< n|\rho|n\>_\qbar\<\varphi|\right)|n^\prime\>_\qbar\<\psi|\\
        &=\left(\sum_{n^\prime}\< n^\prime|\varphi\>\<\varphi|n^\prime\>\right)\left(\sum_n|\psi\>_\qbar\< n|\rho|n\>_\qbar\<\psi|\right)\\
        &=\sum_n|\psi\>_\qbar\< n|\rho|n\>_\qbar\<\psi|\\
        &=\sem{ \iinit{\qbar}{|\psi\>}}(\rho),
    \end{align*}
    because $$\sum_{n^\prime}\< n^\prime|\varphi\>\<\varphi|n^\prime\>=\sum_{n^\prime}|\<\varphi|n^\prime\>|^2=\|\varphi\|^2=1.$$

    When $\qbar_1$ and $\qbar_2$ are distinct, we observe:
    \begin{align*}
        \sem{\iinit{\qbar_1}{|\phi\>}; \iinit{\qbar_2}{|\psi\>}}(\rho)
        &= \sum_{n_2}|\psi\>_{\qbar_2}\<n_2|\left( \sum_{n_1}|\phi\>_{\qbar_1}\<n_1|\rho |n_1\>_{\qbar_1}\<\phi|\right) |n_2\>_{\qbar_2}\<\psi|\\
        &= \sum_{(n_1,n_2)} (|\phi\>|\psi\>)_{\qbar_1,\qbar_2}(\<n_1|\<n_2|)\rho(|n_1\>|n_2\>)_{\qbar_1,\qbar_2}(\<\phi|\<\psi|) \\
        &= \sem{\iinit{\qbar_1, \qbar_2}{|\phi\>}|\psi\>}(\rho).
    \end{align*}

    (\ref{init-laws-unitary-elimination})
    For any density operator $\rho$, by Proposition~\ref{sempro}(\ref{sempro-init},~\ref{sempro-unitary}) we have:
    \begin{align*}
        \sem{\iinit{\qbar}{|\psi\>}; C}(\rho)
        &= \sem{C}\left(\sum_n|\psi\>_\qbar\<n|\rho|n\>_\qbar\<\psi|\right)\sem{C}^\dag\\
        &= \sum_n(\sem{C}|\psi\>)_\qbar\<n|\rho|n\>_\qbar(\sem{C}|\psi\>)_\qbar^\dag\\
        &= \sem{\iinit{\qbar}{\sem{C}|\psi\>}}(\rho).
    \end{align*}

    (\ref{init-laws-qif-elimination})
    Let $A_i=|\varphi_i\>\<\varphi_i|\otimes\sem{C_i}$ for $i=0,1$.
    Then $|\psi\>\bot|\varphi_0\>$ implies: for any density operator $\rho$, 
    \begin{align*}
        A_0|\psi\>\< n|\rho|n\>\<\psi|A_0^\dag=A_0|\psi\>\< n|\rho|n\>\<\psi|A_1^\dag
        = A_1|\psi\>\< n|\rho|n\>\<\psi|A_0^\dag=0.
    \end{align*}
    Additionally, we notice that, since $q$ is a qubit and thus its state space is of dimension two, $|\psi\>$ and $|\varphi_1\>$ differ only in a phase coefficient, i.e., $|\psi\> = e^{i\theta}|\varphi_1\>$ with phase $\theta\in\mathbb{R}$. Therefore, the following equations hold:
    \[
        (\<\varphi_1|\psi\>|\varphi_1\>)(\<\varphi_1|\psi\>|\varphi_1\>)^\dag = |\varphi_1\>\<\varphi_1| = |\psi\>\<\psi|.
    \]
    Consequently, with Proposition~\ref{sempro}(\ref{sempro-init}) and (\ref{sempro-unitary}) we have:
    {
        \allowdisplaybreaks
    \begin{align*}
        \sem{LHS}(\rho)&=(A_0+A_1)\left(\sum_n|\psi\>\< n|\rho|n\>\<\psi|\right)(A_0^\dag+A_1^\dag)\\
        &=\sum_n A_1|\psi\>\< n|\rho|n\>\<\psi|A_1^\dag\\
        &=\sum_n (|\varphi_1\>\<\varphi_1|\psi\>\otimes \sem{C_1})\<n|\rho|n\>((|\varphi_1\>\<\varphi_1|\psi\>)^\dag\otimes \sem{C_1}^\dag)\\
        &=\sem{C_1}\left(\sum_n (\<\varphi_1|\psi\>|\varphi_1\>)\< n|\rho|n\>(\<\varphi_1|\psi\>|\varphi_1\>)^\dag\right) \sem{C_1}^\dag\\
        &=\sem{C_1}\left(\sum_n |\varphi_1\>\< n|\rho|n\>\<\varphi_1|\right) \sem{C_1}^\dag\\
        &=\sem{C_1}(\sem{ \iinit{q}{|\varphi_1\>}}(\rho))\\
        &=\sem{C_1}\left(\sum_n |\psi\>\< n|\rho|n\>\<\psi|\right) \sem{C_1}^\dag\\
        &=\sem{C_1}(\sem{ \iinit{q}{|\psi\>}}(\rho)).
    \end{align*}
    }

    More generally, let $|\varphi\> = \sum_i\lambda_i|\phi_i\>|\psi_i\>$, notice that:
    \begin{align*}
        \left(\sum_{i=1}^{d}|\phi_i\>\<\phi_i|\otimes U_i\right)|\varphi\> =
        \sum_i\sum_j\lambda_j|\phi_i\>\<\phi_i|\phi_j\>(U_i|\psi_j\>) = \sum_i\lambda_i|\phi_i\>(U_i|\psi_i\>),
    \end{align*}
    so we have
    \begin{align*}
        &\sem{\iinit{\qbar,\rbar}{\sum_i\lambda_i|\phi_i\>|\psi_i\>}; 
            \iqif{\qbar}{\square_{i=1}^d |\phi_i\>}{C_i}}(\rho) \\
        =\ & \left(\sum_i|\phi_i\>_\qbar\<\phi_i|\otimes (U_i)_\rbar\right)\left(\sum_k |\varphi\>_{\qbar,\rbar}\<k|\rho|k\>_{\qbar,\rbar}\<\varphi|\right)\left(\sum_i|\phi_i\>_\qbar\<\phi_i|\otimes (U_i)_\rbar\right)^\dag\\
        =\ & \sum_k  \left(\left(\sum_i |\phi_i\>\<\phi_i|\otimes U_i\right)|\varphi\>\right)_{\qbar,\rbar}{}_{\qbar,\rbar}\<k|\rho|k\>_{\qbar,\rbar}\left(\left(\sum_i |\phi_i\>\<\phi_i|\otimes U_i\right)|\varphi\>\right)_{\qbar,\rbar}^\dag \\
        =\ &\sem{\iinit{\qbar,\rbar}{\sum_i\lambda_i|\phi_i\>(U_i|\psi_i\>)}}(\rho).
    \end{align*}
    
    (\ref{init-laws-if-elimination})
    We only prove the second law, and the first can be shown in a similar way. First, from the assumption that $M=[|\psi\>]$ and $|\varphi\>\bot|\psi\>$, it follows that for any density operator $\rho$, 
    \begin{align*}
        M_0|\varphi\>\< n|\rho|n\>\<\varphi|M_0^\dag =(I-|\psi\>\<\psi|)|\varphi\>\< n|\rho|n\>\<\varphi|(I-|\psi\>\<\psi|)=|\varphi\>\< n|\rho|n\>\<\varphi|
    \end{align*}
    and $M_1|\varphi\>\< n|\rho|n\>\<\varphi|M_1^\dag =0$.
    Then using Proposition~\ref{sempro}(\ref{sempro-init}) and (\ref{sempro-if}) we have:
    \begin{align*}
        \sem{LHS}(\rho)&=\sem{\sif{P_0}{M[q]}{P_1}}\left(\sum_n|\varphi\>\< n|\rho|n\>\<\varphi|\right)\\
        &=\sum_n\sem{\sif{P_0}{M[q]}{P_1}}(|\varphi\>\< n|\rho|n\>\<\varphi|)\\ 
        &=\sum_n\left[\sem{P_0}(M_0|\varphi\>\< n|\rho|n\>\<\varphi|M_0^\dag)+\sem{P_1}(M_1|\varphi\>\< n|\rho|n\>\<\varphi|M_1^\dag)\right]\\ 
        &=\sum_n\sem{P_0}(|\varphi\>\< n|\rho|n\>\<\varphi|)\\
        &=\sem{P_0}\left(\sum_n|\varphi\>\< n|\rho|n\>\<\varphi|\right)\\
        &=\sem{RHS}(\rho).
    \end{align*}
    
    (\ref{init-laws-if-expansion})
    We calculate as follows according to Proposition~\ref{sempro}(\ref{sempro-if},\ref{sempro-init}):
    \begin{align*}
        \sem{LHS}(\rho) &= \sum_k |0\>_\rbar\<k| \Big(\sum_m(|m\>_\qbar\<m| \otimes \sem{P_m})\rho (|m\>_\qbar\<m|\otimes \sem{P_m})\Big) |k\>_\rbar\<0| \\
        &= \sum_k |0\>_\rbar\<k| (|k{\downarrow}_\qbar\>_\qbar\<k{\downarrow}_\qbar| \otimes \sem{P_{k{\downarrow}_\qbar}})\rho (|k{\downarrow}_\qbar\>_\qbar\<k{\downarrow}_\qbar|\otimes \sem{P_{k{\downarrow}_\qbar}}) |k\>_\rbar\<0| \\
        &= \sum_k (|0\>_\rbar\<k| \otimes \sem{P_{k{\downarrow}_\qbar}})\rho (|k\>_\rbar\<0|\otimes \sem{P_{k{\downarrow}_\qbar}}) \\
        &= \sum_k |0\>_\rbar\<k| \Big( \sum_{k'}(|k'\>_\rbar\<k'| \otimes \sem{P_{k{\downarrow}_\qbar}})\rho (|k'\>_\rbar\<k'|\otimes \sem{P_{k{\downarrow}_\qbar}})\Big) |k\>_\rbar\<0| \\
        &= \sem{RHS}(\rho).
    \end{align*}
    
\end{proof}

To prove other results in Section~\ref{sec-laws-prog}, we need some technical lemmas:  
\begin{lem}
\label{lemma-technique1}
    Let $A$ and $B$ be two linear operators.
    If for every $|\alpha\>$ there exists $k_{|\alpha\>}$ such that $$B|\alpha\> = k_{|\alpha\>} A|\alpha\>$$ then $B = cA$ for some $c\in \mathbb{C}$, i.e., $B \lrtimes_{|c|} A$.
\end{lem}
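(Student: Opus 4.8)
The plan is to show that the a priori $\alpha$-dependent proportionality scalars $k_\alpha$ collapse to a single constant. First I would dispose of the degenerate case $A = 0$: then $B|\alpha\> = k_\alpha A|\alpha\> = 0$ for every $|\alpha\>$, so $B = 0 = 1\cdot A$ and the claim holds (with $c=1$, or indeed any $c$). So assume $A \neq 0$ and fix a witness $|\alpha_0\>$ with $A|\alpha_0\> \neq 0$; I set $c := k_{\alpha_0}$ and aim to prove $B = cA$.

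The key reduction is the identity $(B - cA)|\alpha\> = B|\alpha\> - cA|\alpha\> = (k_\alpha - c)\,A|\alpha\>$, valid for every $|\alpha\>$ (with $k_\alpha$ any scalar furnished by the hypothesis). Hence it suffices to show that for each $|\alpha\>$ one has either $A|\alpha\> = 0$ or $k_\alpha = c$. A preliminary observation I would record is that $A$ and $B$ share kernels in the weak sense that $A|\alpha\> = 0$ implies $B|\alpha\> = k_\alpha A|\alpha\> = 0$; this makes the case $A|\alpha\> = 0$ automatic and will also power the collinear case below.

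The core is a case analysis, for a fixed $|\alpha\>$ with $A|\alpha\> \neq 0$, on the linear (in)dependence of $A|\alpha\>$ and $A|\alpha_0\>$. If they are linearly independent, I would apply $B$ and the hypothesis to the superposition $|\alpha\> + |\alpha_0\>$: expanding $B(|\alpha\>+|\alpha_0\>)$ two ways gives $k_\alpha A|\alpha\> + c\,A|\alpha_0\> = k_{\alpha+\alpha_0}\bigl(A|\alpha\> + A|\alpha_0\>\bigr)$, and matching coefficients against the independent pair $\{A|\alpha\>, A|\alpha_0\>\}$ forces $k_\alpha = k_{\alpha+\alpha_0} = c$. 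If instead $A|\alpha\> = \mu A|\alpha_0\>$ with $\mu \neq 0$, then $|\alpha\> - \mu|\alpha_0\>$ lies in $\ker A$, so by the kernel observation $B(|\alpha\> - \mu|\alpha_0\>) = 0$; this yields $B|\alpha\> = \mu B|\alpha_0\> = \mu c\,A|\alpha_0\> = c\,A|\alpha\>$, again giving $k_\alpha = c$. Combining the cases, $(B-cA)|\alpha\> = 0$ for all $|\alpha\>$, i.e.\ $B = cA$, which is precisely $B \lrtimes_{|c|} A$.

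I expect the collinear subcase to be the main (if modest) obstacle: there the superposition trick is unavailable because $\{A|\alpha\>, A|\alpha_0\>\}$ is dependent, so one cannot read off the scalar by coefficient matching. This is exactly where the kernel observation is indispensable, and it also covers the extreme situation in which $A$ has one-dimensional range (all images $A|\alpha\>$ mutually collinear), where the independent subcase never occurs. No completeness or topological input is needed: the whole argument is purely algebraic and finitary, taking place in the subspace spanned by $|\alpha\>$ and $|\alpha_0\>$.
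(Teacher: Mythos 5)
Your proof is correct, and it takes a genuinely different route from the paper's. The paper proves the lemma via a singular value decomposition $A = UDV^\dagger$: after the substitution $|\alpha\rangle \mapsto V|\alpha\rangle$, the hypothesis is read off column by column, giving the $i$-th column of $BV$ as $c_i d_i |u_i\rangle$; the superposition $|i\rangle + |j\rangle$ then forces $c_i = c_j$ whenever $d_i, d_j \neq 0$, where the needed linear independence is automatic because the $|u_i\rangle$ are orthonormal, and the columns with $d_i = 0$ are absorbed trivially, yielding $BV = cUD$, i.e.\ $B = cA$. You dispense with the SVD altogether: you fix a reference vector $|\alpha_0\rangle$ outside $\ker A$, set $c = k_{\alpha_0}$, and run the same superposition trick directly on arbitrary vectors, handling the collinear case $A|\alpha\rangle = \mu A|\alpha_0\rangle$ via your kernel observation (apply the hypothesis to $|\alpha\rangle - \mu|\alpha_0\rangle \in \ker A$). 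What the SVD buys the paper is that collinearity never arises: the canonical form diagonalizes the problem, so only orthogonal, hence independent, images ever need comparing. What your approach buys is elementarity and generality: it is coordinate-free, requires no decomposition theorem, and works verbatim in infinite-dimensional spaces, whereas the paper's proof as written presumes a finite-dimensional matrix form. Both arguments share the same engine --- applying the hypothesis to a sum of two vectors and matching coefficients against linearly independent images.
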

\begin{proof}
    Let $A = UDV^\dag$ be the singular value decomposition of $A$, i.e., $U$ and $V$ are unitary operators and $D = \mathrm{diag}(d_1,d_2,\cdots, d_n)$ where $d_i\ge 0$ for all $i$.
    Notice that: 
    \[
        \forall\,|\alpha\>,\ BVV^\dag |\alpha\> = k_{|\alpha\>} UDV^\dag |\alpha\> \quad\Longrightarrow\quad \forall\,|\alpha\>,\ BV|\alpha\> = k_{V^\dag|\alpha\>}UD|\alpha\>.
    \]
    We write $BV = (|v_1\>,|v_2\>,\cdots,|v_n\>)$ and $U = (|u_1\>,|u_2\>,\cdots,|u_n\>)$ in the column vector form.
    Considering $|\alpha\> = |i\>$, we have:
    \[
        \forall\, i,\ |v_i\> = c_id_i|u_i\>
    \]
    where $c_i = k_{V^\dag|i\>}$.
    For any $i\neq j$ such that $d_i\neq 0$ and $d_j\neq 0$, select $|\alpha\> = |i\>+|j\>$, we have 
    \begin{align*}
        BV|\alpha\> &= BV(|i\>+|j\>) = |v_i\>+|v_j\> = c_id_i|u_i\>+c_jd_j|u_j\> \\
        &= k_{V^\dag(|i\>+|j\>)}UD(|i\>+|j\>) = k_{V^\dag(|i\>+|j\>)}(d_i|u_i\>+d_j|u_j\>),
    \end{align*}
    which shows that $c_i = c_j = k_{V^\dag(|i\>+|j\>)}$.
    This leads to $c_i = c$ if $d_i\neq 0$ for some constant $c\in \mathbb{C}$.
    Furthermore, if $d_i = 0$, then $|v_i\> = 0 = cd_i|u_i\>$.
    Therefore
    \[
        BV = (cd_1|u_1\>,\cdots,cd_n|u_n\>) = cUD,
    \]
    and hence
    \[
        B = BVV^\dag = cUDV^\dag = cA.
    \]
\end{proof}

\begin{lem}
\label{lem:app perp and ltimes}
    For linear operators $A, B$, we have the following:
    \begin{enumerate}
        \item $A\perp B$ iff for every state $\rho$, $AB\rho B^\dag A^\dag = 0$.
        \item $A\ltimes B$ iff for every state $\rho$, $AB\rho B^\dag A^\dag = B\rho B^\dag$.
        \item
        The following two statements are equivalent:
        \begin{enumerate}
            \item $A_0\lrtimes_{c_0} B$ and $A_1\lrtimes_{c_1} B$ for some $c_0$ and $c_1$ with $c_0^2+c_1^2 = 1$;
            \item for every state $\rho$, $A_0\rho A_0^\dag + A_1\rho A_1^\dag = B\rho B^\dag$.
        \end{enumerate}
        As a direct corollary, the following two statements are equivalent: 
        \begin{enumerate}
            \item[(c)] $B\rtimes_{c_0} A_0$ and $B\rtimes_{c_1} A_1$ for some $c_0$ and $c_1$ with $c_0^2+c_1^2 = 1$;
            \item[(d)] for every state $\rho$, $BA_0\rho A_0^\dag B^\dag + BA_1\rho A_1^\dag B^\dag = B\rho B^\dag$.
        \end{enumerate}
        \item If $\{A_0,A_1\}$ is a binary measurement, then $A_0\ltimes B$ implies $A_1\perp B$; symmetrically, $A_1\ltimes B$ implies $A_0\perp B$.
    \end{enumerate}
\end{lem}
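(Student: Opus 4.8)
The plan is to prove all four parts by the same device: reduce each ``for all states $\rho$'' hypothesis to a statement about the action on pure states $\rho = |\alpha\>\<\alpha|$, and then invoke the already-proved Lemma \ref{lemma-technique1} to upgrade pointwise proportionality to a global scalar. The forward (operator $\Rightarrow$ state) directions of (1)--(3) are one-line substitutions using $|c|^2 = 1$ or $c_0^2+c_1^2=1$, so the real work lies in the converse directions and in (4).

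For (1), writing $C = AB$, the hypothesis becomes $C|\alpha\>\<\alpha|C^\dag = 0$, i.e. $\|C|\alpha\>\|^2 = 0$ for every $|\alpha\>$; hence $C = AB = 0$. For (2), the pure-state form of the hypothesis is the rank-one identity $(AB|\alpha\>)(AB|\alpha\>)^\dag = (B|\alpha\>)(B|\alpha\>)^\dag$. Since two positive rank-$\le 1$ operators $|u\>\<u|$ and $|v\>\<v|$ are equal exactly when their generating vectors agree up to a unit phase (or both vanish), I obtain $AB|\alpha\> = k_\alpha B|\alpha\>$ for each $|\alpha\>$. Lemma \ref{lemma-technique1} (with its ``$B$'' instantiated as $AB$ and its ``$A$'' as $B$) then gives $AB = cB$ for a single scalar $c$; substituting back into the hypothesis forces $(|c|^2-1)\,B\rho B^\dag = 0$ for all $\rho$, so $|c|=1$ (the case $B=0$ being trivial with $c=1$). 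This is precisely $A\ltimes B$.

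For (3), testing the hypothesis against $\rho = |\alpha\>\<\alpha|$ gives $(A_0|\alpha\>)(A_0|\alpha\>)^\dag + (A_1|\alpha\>)(A_1|\alpha\>)^\dag = (B|\alpha\>)(B|\alpha\>)^\dag$. The key step is to evaluate the quadratic form of both sides at any $|w\>$ orthogonal to $B|\alpha\>$: the right side vanishes, leaving $|\<w|A_0|\alpha\>|^2 + |\<w|A_1|\alpha\>|^2 = 0$, so both $A_0|\alpha\>$ and $A_1|\alpha\>$ lie in the line spanned by $B|\alpha\>$, giving $A_i|\alpha\> = k_{i,\alpha}B|\alpha\>$. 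Lemma \ref{lemma-technique1} yields $A_i = d_iB$, and substituting back forces $(c_0^2+c_1^2-1)\,B\rho B^\dag = 0$ with $c_i = |d_i|$, hence $c_0^2+c_1^2 = 1$. The corollary (c)$\Leftrightarrow$(d) then follows by applying this same equivalence verbatim with $A_i$ replaced by $BA_i$, since $B\rtimes_{c_i}A_i$ unfolds by definition to $BA_i\lrtimes_{c_i}B$. Finally, (4) needs no pure-state argument: sandwiching the completeness relation $A_0^\dag A_0 + A_1^\dag A_1 = I$ between $B^\dag$ and $B$ gives $(A_0B)^\dag(A_0B) + (A_1B)^\dag(A_1B) = B^\dag B$; if $A_0B = cB$ with $|c|=1$ then the first summand equals $B^\dag B$, so $(A_1B)^\dag(A_1B)=0$ and $A_1B = 0$, and symmetrically.

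I expect the delicate point to be the range argument in (3): one must justify carefully that a sum of two positive rank-$\le 1$ operators equal to a rank-$\le 1$ operator forces each summand's range into a common one-dimensional subspace, and then treat the degenerate cases $B|\alpha\> = 0$ and $B = 0$ uniformly so that Lemma \ref{lemma-technique1} applies without a gap. Everything else is routine once the pure-state reduction is in place.
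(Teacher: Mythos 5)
Your proposal is correct and follows essentially the same route as the paper's own proof: reduce each ``for all $\rho$'' hypothesis to pure states $\rho=|\alpha\>\<\alpha|$, use Lemma \ref{lemma-technique1} to promote the pointwise proportionality $A_i|\alpha\>=k_{i,\alpha}B|\alpha\>$ to a global scalar, recover the modulus condition by substituting back into the hypothesis (handling $B=0$ separately), and prove (4) by sandwiching the completeness relation $A_0^\dag A_0+A_1^\dag A_1=I$ between $B^\dag$ and $B$. The only difference is cosmetic: where the paper simply asserts that equality with a rank-one right-hand side forces both $A_0|\alpha\>$ and $A_1|\alpha\>$ into the span of $B|\alpha\>$, you justify this explicitly via the quadratic form at vectors orthogonal to $B|\alpha\>$, which is a slightly more careful rendering of the same step.
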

\begin{proof}
    (1) ``$\Rightarrow$'':
    This is immediate, since $A\perp B$ means $AB = 0$. 
    
    ``$\Leftarrow$'':
    Set $\rho = |\alpha\>\<\alpha|$ be arbitrary pure state.
    Then $AB\rho B^\dag A^\dag = (AB|\alpha\>)(AB|\alpha\>)^\dag = 0$ implies $AB|\alpha\>=0$, so $AB = 0$, i.e., $A\perp B$.

    (2) ``$\Rightarrow$'':
    $A\ltimes B$, so there exists $c\in\mathbb{C}$, such that $|c| = 1$ and $AB = cB$.
    Thus, we have $$AB\rho B^\dag A^\dag = cB\rho (cB)^\dag = |c|^2 B\rho B^\dag = B\rho B^\dag.$$ 
    
    ``$\Leftarrow$'':
    Let $\rho = |\alpha\>\<\alpha|$ be an arbitrary pure state.
    Then $$AB\rho B^\dag A^\dag = (AB|\alpha\>)(AB|\alpha\>)^\dag = (B|\alpha\>)(B|\alpha\>)^\dag$$ which implies $AB|\alpha\> = k_{|\alpha\>}B|\alpha\>$ where $k_{|\alpha\>}\in\mathbb{C}$ such that $|k_{|\alpha\>}| = 1$.
    We apply Lemma~\ref{lemma-technique1} and assert that there exists $c\in\mathbb{C}$ such that $AB=cB$.
    Back to the assumption, we have: $$AB\rho B^\dag A^\dag = cB\rho (cB)^\dag = |c|^2B\rho B^\dag$$ So, $|c| = 1$, i.e., $A\ltimes B$ if $B\neq 0$. If $B = 0$, then $A\ltimes B$ also holds trivially.

    (3) ``(a) $\Rightarrow$ (b)'':
    By assumption, there exists $d_0, d_1\in\mathbb{C}$ such that $|d_0| = c_0$, $|d_1| = c_1$, $A_0 = d_0 B$ and $A_1 = d_1 B$.
    So, $$A_0\rho A_0^\dag + A_1\rho A_1^\dag = |d_0|^2 B \rho B^\dag + |d_1|^2 B \rho B^\dag = (c_0^2+c_1^2)B \rho B^\dag = B \rho B^\dag.$$
    
    ``(b) $\Leftarrow$ (a)'':
    Let $\rho = |\alpha\>\<\alpha|$ be an arbitrary pure state.
    Then $$(A_0|\alpha\>)(A_0|\alpha\>)^\dag + (A_1|\alpha\>)(A_1|\alpha\>)^\dag = (B|\alpha\>)(B|\alpha\>)^\dag.$$
    Since the RHS is rank one, both LHS terms must be proportional to the same vector. Therefore, there exist $k_{0,|\alpha\>}, k_{1,|\alpha\>}\in \mathbb{C}$ such that $A_0|\alpha\> = k_{0,|\alpha\>}B|\alpha\>$ and $A_1|\alpha\> = k_{1,|\alpha\>}B|\alpha\>$.
    According to Lemma~\ref{lemma-technique1}, there exists $d_0, d_1\in\mathbb{C}$ such that $A_0 = d_0B$ and $A_1 = d_1B$.
    Back to the assumption, we have
    \[
        A_0\rho A_0^\dag + A_1\rho A_1^\dag = B\rho B^\dag = (|d_0|^2 + |d_1|^2) B\rho B^\dag,
    \]
    so $|d_0|^2 + |d_1|^2 = 1$ and $A_0\lrtimes_{|d_0|} B$ and $A_1\lrtimes_{|d_1|} B$ if $B \neq 0$.
    If $B = 0$, then it holds trivially.
    
    (4) By the assumption, there exists $c\in\mathbb{C}$ such that $|c| = 1$ and $A_0B = cB$.
    Note that 
    \begin{align*}
        (A_0B)^\dag (A_0B) + (A_1B)^\dag(A_1B) &= B^\dag (A_0^\dag A_0+A_1^\dag A_1) B = B^\dag B \\
        &= (cB)^\dag(cB) + (A_1B)^\dag(A_1B) = B^\dag B + (A_1B)^\dag(A_1B)
    \end{align*}
    which implies $(A_1B)^\dag(A_1B) = 0$.  So $A_1B = 0$, i.e., $A_1\perp B$.

\end{proof}

Now we are able to prove the remaining laws in Section~\ref{sec-laws-prog}.  

\begin{proof}[Proof of Proposition~\ref{if-laws}]
    The proofs of laws (\ref{if-laws-truth-falsity}) and (\ref{if-laws-complementation}) are straightforward and thus omitted. 
    
    (\ref{if-laws-idempotence})
    For any density operator $\rho$, it follows from the linearity of $\sem{P}$ and Proposition~\ref{sempro}(\ref{sempro-if}) that 
    \begin{align*}
        \sem{\sif{P}{M}{P}}(\rho)
        &=\sem{P}(M_0\rho M_0^\dag)+\sem{P}(M_1\rho M_1^\dag)\\
        &=\sem{P}(M_0\rho M_0^\dag+M_1\rho M_1^\dag)\\
        &=\sem{P}(\sem{[M]}(\rho))\\
        &=\sem{[M]; P}(\rho). 
    \end{align*}
    
    (\ref{if-laws-associativity})
    Since $M$ is a projective measurement, we have $M_0M_1=0$ and $M_1M_1=M_1$.
    Then for any $\rho$, using Proposition~\ref{sempro}(\ref{sempro-if}) we obtain: 
    \begin{align*}
        \sem{\sif{P}{M}{(\sif{Q}{M}{R})}}(\rho)
        &=\sem{P}(M_0\rho M_0^\dag)+\sem{\sif{Q}{M}{R}}(M_1\rho M_1^\dag)\\
        &=\sem{P}(M_0\rho M_0^\dag)+(\sem{Q}(M_0M_1\rho M_1^\dag M_0^\dag)+\sem{R}(M_1M_1\rho M_1^\dag M_1^\dag))\\
        &=\sem{P}(M_0\rho M_0^\dag)+\sem{R}(M_1\rho M_1^\dag)\\
        &=\sem{\sif{P}{M}{R}}(\rho). 
    \end{align*}
    Similarly, we can prove that $\sem{\sif{(\sif{P}{M}{Q})}{M}{R}}(\rho)= \sem{\sif{P}{M}{R}}(\rho).$
    
    (\ref{if-laws-if-elimination})
    For the first law, for any $\rho$, by Proposition~\ref{sempro}(\ref{sempro-if}) we have:
    \begin{align*}
        \sem{LHS}(\rho)
        &=\sem{\sif{P}{N}{Q}}(M_1\rho M_1^\dag)\\ 
        &=\sem{P}(N_0M_1\rho M_1^\dag N_0^\dag)+\sem{Q}(N_1M_1\rho M_1^\dag N_1^\dag)\\
        &=\sem{Q}(N_1M_1\rho M_1^\dag N_1^\dag)\\
        &=\sem{Q}(\sem{(K]}(\rho))\\
        &=\sem{RHS}(\rho).
    \end{align*}
    Here, the third equality comes from the assumption that $M\blacktriangleright N$, i.e., $N_0M_1=0$, and the fourth equality comes from the assumption that $K=M\smeet N$, i.e., $K_1=N_1M_1$.
    
    For the second law, since $M\propto N$, i.e., $M_1\ltimes N_1$, by Lemma~\ref{lem:app perp and ltimes}, we must have $M_1N_1\rho N_1^\dag M_1^\dag = N_1\rho N_1^\dag$ and $M_0N_1\rho N_1^\dag M_0^\dag = 0$ for all $\rho$. Thus, the following hold:
    \begin{align*}
        \sem{LHS}(\rho) 
        &= \sem{P}(N_0\rho N_0^\dag) + (\sem{Q}(M_0N_1\rho N_1^\dag M_0^\dag) + \sem{R}(M_1N_1\rho N_1^\dag M_1^\dag) \\
        &= \sem{P}(N_0\rho N_0^\dag) + \sem{R}(N_1\rho N_1^\dag) \\
        &= \sem{\sif{P}{N}{R}}(\rho).
    \end{align*}
    This completes the proof of the second law.
\end{proof}

Before presenting the proof of Proposition~\ref{nested-if-law}, we state a lemma that clarifies the conditions required in the laws of the proposition.
Indeed, this lemma is also needed in the proof of these laws. 

\begin{lem}
\label{lemma-equal-nested-if-cond}
The following equations hold:
\begin{itemize}
    \item $[N)\equiv \sif{[K)}{M}{[L)}$ iff $K_0M_0\lrtimes_{c_0} N_0$ and $L_0M_1\lrtimes_{c_1}N_0$ for some $c_0$ and $c_1$ with $c_0^2+c_1^2 = 1$;
    \item $(N]\equiv \sif{(K]}{M}{(L]}$ iff $K_1M_0\lrtimes_{c_0} N_1$ and $L_1M_1\lrtimes_{c_1}N_1$ for some $c_0$ and $c_1$ with $c_0^2+c_1^2 = 1$;
    \item $[N)\equiv [M]; [N)$ iff $N_0\rtimes_{c_0}M_0$ and $N_0\rtimes_{c_1}M_1$ with $c_0^2+c_1^2 = 1$;
    \item $(N]\equiv [M]; (N]$ iff $N_1\rtimes_{c_0}M_0$ and $N_1\rtimes_{c_1}M_1$ with $c_0^2+c_1^2 = 1$;
\end{itemize}
\end{lem}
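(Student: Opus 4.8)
The plan is to reduce each of the four stated equivalences to the operator-sum characterisation established in the lemma immediately preceding this one (its part~(3) together with the corollary (c)$\Leftrightarrow$(d)). The key observation is that program equivalence $\equiv$ means equality of the two super-operators on every input, and that the three tests occurring here have elementary Kraus forms. Unfolding Proposition~\ref{sempro}(5) gives $\llbracket [N)\rrbracket(\rho)=N_0\rho N_0^\dagger$ and $\llbracket (N]\rrbracket(\rho)=N_1\rho N_1^\dagger$ (because $\mathbf{abort}$ annihilates its branch), while $\llbracket [M]\rrbracket(\rho)=M_0\rho M_0^\dagger+M_1\rho M_1^\dagger$. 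Once both sides of each equivalence are written in this form, the statement becomes a pointwise identity between sums of Kraus terms, which is exactly the shape the preceding lemma resolves.

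For the first two equivalences I would expand the right-hand sides with Proposition~\ref{sempro}(5). For $[K)\vartriangleleft M\vartriangleright[L)$ this yields $\llbracket [K)\rrbracket(M_0\rho M_0^\dagger)+\llbracket [L)\rrbracket(M_1\rho M_1^\dagger)=(K_0M_0)\rho(K_0M_0)^\dagger+(L_0M_1)\rho(L_0M_1)^\dagger$, whose equality with $\llbracket [N)\rrbracket(\rho)=N_0\rho N_0^\dagger$ for all $\rho$ is precisely statement (b) of part~(3) of the preceding lemma with $A_0=K_0M_0$, $A_1=L_0M_1$ and $B=N_0$. Applying that lemma gives the advertised condition $K_0M_0\lrtimes_{c_0}N_0$ and $L_0M_1\lrtimes_{c_1}N_0$ with $c_0^2+c_1^2=1$. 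The equivalence for $(N]$ is identical after replacing the subscript $0$ on $K,L,N$ by $1$ throughout, using $\llbracket (N]\rrbracket(\rho)=N_1\rho N_1^\dagger$.

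The last two equivalences instead use the sequencing rule Proposition~\ref{sempro}(4). For $[M];[N)$ I compute $\llbracket [M];[N)\rrbracket(\rho)=\llbracket [N)\rrbracket\bigl(M_0\rho M_0^\dagger+M_1\rho M_1^\dagger\bigr)=(N_0M_0)\rho(N_0M_0)^\dagger+(N_0M_1)\rho(N_0M_1)^\dagger$, and equating this with $N_0\rho N_0^\dagger$ for all $\rho$ is exactly statement (d) of the corollary with $B=N_0$, $A_0=M_0$, $A_1=M_1$; its equivalent form (c) reads $N_0\rtimes_{c_0}M_0$ and $N_0\rtimes_{c_1}M_1$ with $c_0^2+c_1^2=1$, which is the claim (recall that $N_0\rtimes_c M_i$ unfolds to $N_0M_i\lrtimes_c N_0$). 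The fourth equivalence follows in the same way with $N_1$ in place of $N_0$.

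The proof is therefore mostly routine unfolding and pattern matching; the genuine content sits in the preceding operator lemma, whose proof already handles the delicate direction of recovering proportionality of operators from equality of their one-sided Kraus sums (via Lemma~\ref{lemma-technique1}). The only points requiring a little care are the bookkeeping of getting the correct Kraus pair into the correct slot $(A_0,A_1,B)$ in each of the four cases, and the passage from ``equality of super-operators on all density operators $\rho$'', which is what $\equiv$ demands, to the lemma's ``for all state $\rho$''. The latter is harmless: both sides are linear super-operators, so agreement on density operators is equivalent to agreement on all inputs, and it already suffices to test pure states, exactly as in the lemma's proof.
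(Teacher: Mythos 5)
Your proposal is correct and matches the paper's intent: the paper's own proof is just ``Direct calculation from the definition,'' and the operator-sum characterisation you invoke (part~(3) of the preceding lemma and its corollary, which in turn rests on Lemma~\ref{lemma-technique1}) is exactly the machinery the paper set up for this purpose. Your unfolding of the test semantics via Proposition~\ref{sempro} and the assignments of $(A_0,A_1,B)$ in each of the four cases are all accurate, so your write-up is simply the spelled-out version of the paper's one-line proof.
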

\begin{proof}
    Direct calculation from the definition and Lemma~\ref{lem:app perp and ltimes}.
\end{proof}

Now we are in a position to present: 
\begin{proof}[Proof of Proposition~\ref{nested-if-law}]
    (\ref{nested-if-law-reduction})
    For any density operator $\rho$, it follows from the assumptions $(N]\equiv \sif{(K]}{M}{(L]}$ and $[N)\equiv \sif{[K)}{M}{[L)}$ that 
    \[
        N_i\rho N_i^\dag=K_iM_0\rho M_0^\dag K_0^\dag+L_iM_1\rho M_1^\dag L_i^\dag\ (i=0,1).
    \]
    Then, by Proposition~\ref{sempro}(\ref{sempro-if}) we obtain
    \begin{align*}
        \sem{LHS}(\rho)
        &=\sem{\sif{P}{K}{Q}}(M_0\rho M_0^\dag)+\sem{\sif{P}{L}{Q}}(M_1\rho M_1^\dag)\\
        &=(\sem{P}(K_0M_0\rho M_0^\dag K_0^\dag)+ \sem{Q}(K_1M_0\rho M_0^\dag K_1^\dag))\\
        &\quad + (\sem{P}(L_0M_1\rho M_1^\dag L_0^\dag) +\sem{Q}(L_1M_1\rho M_1^\dag L_1^\dag))\\ &=\sem{P}(K_0M_0\rho M_0^\dag K_0^\dag+L_0M_1\rho M_1^\dag L_0^\dag)+ \sem{Q}(K_1M_0\rho M_0^\dag K_1^\dag+L_1M_1\rho M_1^\dag L_1^\dag)\\
        &=\sem{P}(N_0\rho N_0^\dag) +\sem{Q}(N_1\rho N_1^\dag)\\
        &= \sem{RHS}(\rho). 
    \end{align*}
    
    (\ref{nested-if-law-left-distributivity})
    For any $\rho$, by Proposition~\ref{sempro}(\ref{sempro-if}) we have:
    \begin{align*}
        \sem{LHS}(\rho)&=\sem{R}(N_0\rho N_0^\dag)+ \sem{\sif{P}{M}{Q}}(N_1\rho N_1^\dag)\\ 
        &= \sem{R}(N_0\rho N_0^\dag) + \sem{P}(M_0N_1\rho N_1^\dag M_0^\dag)+\sem{Q}(M_1N_1\rho N_1^\dag M_1^\dag)\\ 
        &=\sem{R}(N_0 (M_0\rho M_0^\dag +M_1\rho M_1^\dag) N_0^\dag) + \sem{P}(N_1M_0\rho M_0^\dag N_1^\dag)+\sem{Q}(N_1M_1\rho M_1^\dag N_1^\dag)\\ 
        &=\left(\sem{R}(N_0M_0\rho M_0^\dag N_0^\dag)  + \sem{P}(N_1M_0\rho M_0^\dag N_1^\dag)\right)\\
        &\quad +\left(\sem{R}(N_0M_1\rho M_1^\dag N_0^\dag)+\sem{Q}(N_1M_1\rho M_1^\dag N_1^\dag)\right)\\ 
        &=\sem{\sif{R}{N}{P}}(M_0\rho M_0^\dag) +  \sem{\sif{R}{N}{Q}}(M_1\rho M_1^\dag)\\ 
        &= \sem{RHS}(\rho). 
    \end{align*}
    Here, the third equality follows from the assumption that $[N)\equiv [M]; [N)$ and $M\diamond_L N$. 
    
    (\ref{nested-if-law-right-distributivity}) can be proved in a way similar to (\ref{nested-if-law-left-distributivity}). 
    
    (\ref{nested-if-law-left-distributivity-projection}) and (\ref{nested-if-law-right-distributivity-projection}) are special cases of (\ref{nested-if-law-left-distributivity}) and (\ref{nested-if-law-right-distributivity}), and are therefore omitted.
\end{proof}

\begin{proof}[Proof of Proposition~\ref{sequential-laws}]
    (\ref{sequential-laws-unit-zero}) and (\ref{sequential-laws-commutativity}) are trivial.
    
    (\ref{sequential-laws-associativity})
    By Proposition~\ref{sempro}(\ref{sempro-seq}), it holds that for any density operator $\rho$, \begin{align*}
        \sem{(P; Q); R}(\rho)
        &=\sem{R}(\sem{Q; R}(\rho))\\
        &=\sem{R}(\sem{Q}(\sem{P}(\rho)))\\
        &=\sem{Q; R}(\sem{P}(\rho))\\
        &=\sem{P; (Q; R)}(\rho). 
    \end{align*}
    
    (\ref{sequential-laws-sequentiality})
    Since $M$ is a projective measurement, so $M_m^\dag = M_m$ and $M_mM_m = M_m$ for $m = 0,1$, $M_0M_1 = M_1M_0 = 0$, thus:
    \begin{align*}
        \sem{LHS}(\rho) 
        &= \sem{Q_0}(M_0(\sem{P_0}(M_0\rho M_0^\dag) + \sem{P_1}(M_1\rho M_1^\dag))M_0^\dag) \\
        &\quad + \sem{Q_1}(M_1(\sem{P_0}(M_0\rho M_0^\dag) + \sem{P_1}(M_1\rho M_1^\dag))M_1^\dag) \\
        &= \sem{Q_0}(\sem{P_0}(M_0\rho M_0^\dag)) + \sem{Q_1}(\sem{P_1}(M_1\rho M_1^\dag)) \\
        &= \sem{RHS}(\rho).
    \end{align*}
    
    (\ref{sequential-laws-right-distributivity})
    For any density operator $\rho$, by Proposition~\ref{sempro}(\ref{sempro-seq}) and (\ref{sempro-if}) we have: 
    \begin{align*}
        \sem{(\sif{P_0}{M}{P_1}); P}(\rho)
        &=\sem{P}(\sem{(\sif{P_0}{M}{P_1})}(\rho))\\ 
        &=\sem{P}(\sem{P_0}(M_0\rho M_0^\dag)+\sem{P_1}(M_1\rho M_1))\\ 
        &=\sem{P}(\sem{P_0}(M_0\rho M_0^\dag))+\sem{P}(\sem{P_1}(M_1\rho M_1))\\
        &=\sem{P_0; P}(M_0\rho M_0^\dag)+\sem{P_1; P}(M_1\rho M_1)\\
        &=\sem{\sif{(P_0; P)}{M}{(P_1; P)}}.
    \end{align*}
    
    (\ref{sequential-laws-left-distributivity1}) is a special case of (\ref{sequential-laws-left-distributivity2}).
    
    (\ref{sequential-laws-left-distributivity2})
    For any $\rho$, using Proposition~\ref{sempro}(\ref{sempro-unitary}--\ref{sempro-if}) we obtain:
    \begin{align*}
        \sem{P; (\sif{P_0}{M}{P_1})}(\rho)
        &= \sem{\sif{P_0}{M}{P_1}}(\sem{P}\rho\sem{P}^\dag)\\
        &=\sem{P_0}(M_0\sem{P}(\rho)M_0^\dag)+\sem{P_1}(M_1\sem{P}(\rho) M_1^\dag)\\
        &=\sem{P_0}(\sem{P'}(N_0\rho N_0^\dag))+\sem{P_1}(\sem{P'}(N_1\rho N_1^\dag))\\
        &=\sem{P'; P_0}(N_0\rho N_0^\dag)+\sem{P'; P_1}(N_1\rho N_1^\dag)\\
        &=\sem{\sif{(P'; P_0)}{M}{(P'; P_1)}}(\rho). 
    \end{align*}
    Here, the third equality follows from the assumption that $P; [M)\equiv [N); P'$ and $P; (M]\equiv (N]; P'.$
\end{proof}

\begin{proof}[Proof of Proposition~\ref{prop:interplay-qif-if}]
    Since $M_i = |\psi_i\>\<\psi_i|$, we have:
    \begin{align*}
        \sem{LHS}(\rho) 
        &= \sum_i\sem{P_i}\left(M_i \left(\sum_j|\psi_j\>\<\psi_j|\otimes\sem{C_j}\right)\rho \left(\sum_j|\psi_j\>\<\psi_j|\otimes\sem{C_j}\right)^\dag   M_i^\dag\right) \\
        &= \sum_i\sem{P_i}\left( \left(\sum_j|\psi_i\>\<\psi_i|\cdot |\psi_j\>\<\psi_j|\otimes\sem{C_j}\right)\rho \left(\sum_j|\psi_j\>\<\psi_j|\cdot |\psi_i\>\<\psi_i| \otimes\sem{C_j}\right)^\dag\right)\\
        &= \sum_i\sem{P_i}\left( (|\psi_i\>\<\psi_i|\otimes\sem{C_i})\rho (|\psi_i\>\<\psi_i| \otimes\sem{C_i})^\dag\right) \\
        &= \sum_i\sem{P_i}\left(\sem{C_i} (M_i\rho M_i^\dag)\sem{C_i}^\dag\right) \\
        &= \sem{RHS}(\rho).
    \end{align*}
\end{proof}

\subsection{Proofs of the Laws in Section~\ref{sec-recur}}

\begin{proof}[Proof of Proposition~\ref{prop-fix}]
    It was shown in~\cite{Ying16} (see Proposition 3.3.6 therein) that for each quantum operation $\E\in\mathcal{QO}(\hs_\QVar)$, there exists a quantum program $P_\E$ such that $\sem{P_\E}=\E.$
    Therefore, we can define a mapping $\mathbb{F}: \mathcal{QO}(\hs_\QVar)\rightarrow \mathcal{QO}(\hs_\QVar)$ by 
    \begin{equation}
    \label{sem-rec-func}
        \mathbb{F}(\E)=\sem{F(P_\E)}.
    \end{equation}
    By induction on the structure of $F$, it is straightforward to show that (\ref{sem-rec-func}) is well-defined in the sense that it does not depend on the choice of $P_\E$, and $\mathbb{F}$ is continuous in the CPO $\left(\mathcal{QO}(\hs_\QVar),\sqsubseteq\right)$. 
    
    (1) We compute:
    \begin{align*}
        \sem{F(\mu X.F(X))}
        &= \mathbb{F}\left(\bigsqcup_{n=0}^\infty\sem{F^{(n)}}\right)
         = \bigsqcup_{n=0}^\infty\mathbb{F}\left(\sem{F^{(n)}}\right)\\
        &= \bigsqcup_{n=0}^\infty\sem{F^{(n+1)}}
         = \bigsqcup_{n=1}^\infty\sem{F^{(n)}}\\
        &= \bigsqcup_{n=0}^\infty\sem{F^{(n)}}
         = \sem{\mu X.F(X)}.
    \end{align*}
    
    (2) To prove $\mu X.F(X)\sqsubseteq P$, it suffices to show that $\sem{F^{(n)}}\sqsubseteq \sem{P}$ for every $n\geq 0$. This follows easily by induction on $n$. 
\end{proof}
    
\begin{proof}[Proof of Proposition~\ref{loop-law}] 
    (\ref{loop-law-fixpoint}) 
    \begin{align*}
        \sem{\sif{}{M}{(P; (\swhile{M}{P}))}}(\rho)
        &=M_0\rho M_0^\dag +\sem{P; (\swhile{M}{P})}\left(M_1\rho M_1^\dag\right)\\
        &=\E_0 (\rho) + \sum_{n=0}^\infty\left(\E_0\circ(\sem{P}\circ\E_1)^n\right)\left((\sem{P} \circ \E_1)(\rho)\right)\\ 
        &=\sum_{n=0}^\infty\left(\E_0\circ(\sem{P}\circ\E_1)^n\right)\left(\rho\right)\\ 
        &= \sem{(\swhile{M}{P})}.
    \end{align*}
    
    (\ref{loop-law-unfolding})
    For any density operator $\rho$, we first note that
    \begin{align*}
        &\E_0(N_1\rho N_1^\dag) = M_0N_1\rho N_1^\dag M_0^\dag = 0, \\
        &\E_1(N_1\rho N_1^\dag) = M_1 N_1\rho N_1^\dag M_1^\dag = N_1\rho N_1^\dag,
    \end{align*}
    from the assumption of $M\propto N$ and Lemma~\ref{lem:app perp and ltimes}.
    Together with Proposition~\ref{sempro}(\ref{sempro-seq}--\ref{sempro-while}), this yields:
    {
    \allowdisplaybreaks
    \begin{align*}
        \sem{\sif{}{N}{(P; (\swhile{M}{P}))}}(\rho)
        &=N_0\rho N_0^\dag +\sem{P; (\swhile{M}{P})}\left(N_1\rho N_1^\dag\right)\\
        &=N_0\rho N_0^\dag+\sum_{n=0}^\infty\left(\E_0\circ(\sem{P}\circ\E_1)^n\right)\left(\sem{P}(N_1\rho N_1^\dag)\right)\\ 
        &=N_0\rho N_0^\dag+\sum_{n=0}^\infty\left(\E_0\circ(\sem{P}\circ\E_1)^n\right)\left(\sem{P}(\E_1(N_1\rho N_1^\dag))\right)\\ 
        &=N_0\rho N_0^\dag+\sum_{n=0}^\infty\left(\E_0\circ(\sem{P}\circ\E_1)^{n+1}\right)\left(N_1\rho N_1^\dag\right)\\ 
        &=N_0\rho N_0^\dag+\E_0\left(N_1\rho N_1^\dag\right)+\sum_{n=1}^\infty(\E_0\circ(\sem{P}\circ\E_1)^n)(N_1\rho N_1^\dag)\\ 
        &=N_0\rho N_0^\dag+\sum_{n=0}^\infty\left(\E_0\circ(\sem{P}\circ\E_1)^n\right)\left(N_1\rho N_1^\dag\right)\\ 
        &=\sem{\sif{}{N}{(\swhile{M}{P})}}(\rho).
    \end{align*}
    }
    
    (\ref{loop-law-elimination})
    For any density operator $\rho$, we first note that
    \begin{align*}
        &\E_0(N_1\rho N_1^\dag)=M_0N_1\rho N_1^\dag M_0^\dag =(M^\bot)_0N_1\rho N_1^\dag(M^\bot)_1^\dag =N_1\rho N_1^\dag, \\
        &\E_1(N_1\rho N_1^\dag)=M_1 N_1\rho N_1^\dag M_1^\dag =(M^\bot)_0N_1\rho N_1^\dag(M^\bot)_0^\dag=0,
    \end{align*}
    from the assumption of $M^\bot\propto N$ and Lemma~\ref{lem:app perp and ltimes}.
    Therefore, by Proposition~\ref{sempro}(\ref{sempro-if},~\ref{sempro-while}) we have: 
    \begin{align*}
        \sem{\sif{}{N}{(\swhile{M}{P})}}(\rho)
        &=N_0\rho N_0^\dag +\sem{\swhile{M}{P}}\left(N_1\rho N_1^\dag\right)\\
        &=N_0\rho N_0^\dag + \sum_{n=0}^\infty\left(\E_0\circ(\sem{P}\circ\E_1)^n\right)\left(N_1\rho N_1^\dag\right)\\ 
        &=N_0\rho N_0^\dag+\E_0\left(N_1\rho N_1^\dag\right) + \sum_{n=1}^\infty\left(\E_0\circ(\sem{P}\circ\E_1)^{n-1}\right)\left(\sem{P}\left(\E_1\left(N_1\rho N_1^\dag\right)\right)\right)\\ 
        &=N_0\rho N_0^\dag+N_1\rho N_1^\dag+\sum_{n=1}^\infty\left(\E_0\circ(\sem{P}\circ\E_1)^{n-1}\right)(\sem{P}(0))\\ 
        &=N_0\rho N_0^\dag+N_1\rho N_1^\dag\\ 
        &=\sem{\sif{}{N}{\iskip}}(\rho).
    \end{align*}
    
    (\ref{loop-law-postcondition})
    For any density operator $\rho$, we first note that
    \begin{align*}
        &N_0\E_0(\rho) N_0^\dag=N_0 M_0\rho M_0^\dag N_0^\dag = M_0\rho M_0^\dag = \E_0(\rho), \\
        &N_1\E_0(\rho) N_1^\dag=N_1 M_0\rho M_0^\dag N_1^\dag =0,
    \end{align*}
    from the assumption of $N\gg M$ and Lemma~\ref{lem:app perp and ltimes}.
    Thus,
    \begin{align*}
        \sem{RHS}(\rho)
        &= N_0\Big(\sum_n\big(\E_0\circ (\sem{P}\circ\E_1)^n\big)(\rho)\Big)N_0^\dag
        + N_1\Big(\sum_n\big(\E_0\circ (\sem{P}\circ\E_1)^n\big)(\rho)\Big)N_1^\dag\\
        &= \sum_nN_0\E_0\Big(\big((\sem{P}\circ\E_1)^n\big)(\rho)\Big)N_0^\dag
        + \sum_nN_1\E_0\Big(\big((\sem{P}\circ\E_1)^n\big)(\rho)\Big)N_1^\dag \\
        &= \sum_n\E_0\Big(\big((\sem{P}\circ\E_1)^n\big)(\rho)\Big) \\
        &= \sem{LHS}(\rho).
    \end{align*}
\end{proof}

\subsection{Proofs of the Laws in Section~\ref{sec-nondet}}

\begin{proof}[Proof of Proposition~\ref{laws-nd}]
    The proofs of (\ref{laws-nd-commutativity}), (\ref{laws-nd-associativity}), and (\ref{laws-nd-idempotence}) are straightforward. 
    
    (\ref{laws-nd-distributivity-if})
    Let $\E_0, \E_1$ be the quantum operations defined by the measurement $M=\{M_0,M_1\}$; i.e., $\E_i(\rho)=M_i\rho M_i^\dag$ for all density operators $\rho$.
    Uusing Proposition~\ref{sempro}(\ref{sempro-if}), we have:  
    \begin{align*}
        \sem{\sif{(P\sqcup Q)}{M}{R}}
        &=\{\E\circ\E_0+\mathcal{F}\circ\E_1 \mid \E\in\sem{P\sqcup Q}\ {\rm and}\ \mathcal{F}\in\sem{R}\}\\
        &=\{\E\circ \E_0+\mathcal{F}\circ\E_1 \mid \E\in \sem{P}\cup\sem{Q}\ {\rm and}\ \mathcal{F}\in\sem{R}\}\\
        &=\{\E\circ \E_0+\mathcal{F}\circ\E_1 \mid \E\in \sem{P}\ {\rm and}\ \mathcal{F}\in\sem{R}\}\\
        &\quad \cup \{\E\circ \E_0+\mathcal{F}\circ\E_1 \mid \E\in \sem{Q}\ {\rm and}\ \mathcal{F}\in\sem{R}\}\\
        &= \sem{\sif{P}{M}{R}}\cup\sem{\sif{P}{M}{R}}\\
        &=\sem{(\sif{P}{M}{R})\sqcup (\sif{Q}{M}{R})}.
    \end{align*}
    This proves the first equivalence. 
    The second equivalence can be proved in a similar way. 
    
    (\ref{laws-nd-distributivity-seq})
    We prove only the first equivalence; the second follows analogously.
    It follows from Proposition~\ref{sempro}(\ref{sempro-seq}) that 
    \begin{align*}
        \sem{P; (Q\sqcup R)}
        &=\{\mathcal{F}\circ \E\mid\E\in \sem{P}\ {\rm and}\ \mathcal{F}\in \sem{Q\sqcup R}\}\\
        &=\{\mathcal{F}\circ \E\mid\E\in \sem{P}\ {\rm and}\ \mathcal{F}\in \sem{Q}\cup\sem{R}\}\\
        &=\{\mathcal{F}\circ \E\mid\E\in \sem{P}\ {\rm and}\ \mathcal{F}\in \sem{Q}\}\cup \{\mathcal{F}\circ \E\mid\E\in \sem{P}\ {\rm and}\ \mathcal{F}\in \sem{R}\}\\
        &=\sem{P; Q}\cup \sem{P; R}\\
        &=\sem{(P; Q)\sqcup (P; R)}. 
    \end{align*}
\end{proof}

\section{Detailed Proof of Theorem~\ref{thm-defer}}\label{proof-app}

To provide a detailed proof of Theorem~\ref{thm-defer}, we first introduce several lemmas. 
Let us first recall the following lemma from~\cite{NC00}:

\begin{lem} 
\label{lem:equivalent-measure}
    Consider any quantum measurement $M = \{M_i\}_{i\in I}$ on a Hilbert space $\hs$, and an auxiliary system with a Hilbert space $\hs^\prime$ of dimension $n = |I|$.
    Let $\{|i\>\}_{i\in I}$ be an orthonormal basis of $\hs^\prime$.
    Then there exists a unitary operator $U_M$ on $\hs\otimes \hs^\prime$ such that the measurement $M$ can be equivalently implemented by:
    \begin{enumerate}
        \item The auxiliary system is initialised in $|0\>$ (one of the basis states);
        \item Perform the unitary $U_M$ on the primary and auxiliary system;
        \item Measure the auxiliary system in the basis $\{|i\>\}_{i\in I}$. 
    \end{enumerate}
    If the initial state is $\rho$, then with probability $\tr(M_i\rho M_i^\dag)$, we obtain outcome $i$, and the corresponding post-measurement state is $|i\>\<i|\otimes M_i\rho M_i^\dag$.
\end{lem}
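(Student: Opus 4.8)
The plan is to construct the unitary $U_M$ explicitly as the extension of a natural isometry, and then to verify the claimed outcome probabilities and post-measurement states by a direct computation. First I would define $U_M$ on the subspace $\hs\otimes\{|0\>\}$ of $\hs\otimes\hs^\prime$ by
$$U_M(|\phi\>\otimes|0\>):=\sum_{i\in I}(M_i|\phi\>)\otimes|i\>,\qquad |\phi\>\in\hs,$$
and then extend it to all of $\hs\otimes\hs^\prime$. The idea behind this assignment is that it records each ``measurement branch'' $M_i|\phi\>$ against a distinct, mutually orthogonal ancilla label $|i\>$, so that a subsequent projective measurement of the ancilla in the basis $\{|i\>\}_{i\in I}$ reads off exactly the data of $M$.

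The key step is to check that this assignment is an isometry on $\hs\otimes\{|0\>\}$, and this is precisely where the completeness equation $\sum_{i\in I}M_i^\dag M_i=I$ is used. For any $|\phi\>,|\psi\>\in\hs$,
$$\<U_M(|\phi\>\otimes|0\>)\,|\,U_M(|\psi\>\otimes|0\>)\>=\sum_{i,j\in I}\<\phi|M_i^\dag M_j|\psi\>\<i|j\>=\sum_{i\in I}\<\phi|M_i^\dag M_i|\psi\>=\<\phi|\psi\>,$$
using the orthonormality $\<i|j\>=\delta_{ij}$ and completeness. Hence the image of $\hs\otimes\{|0\>\}$ is a subspace of $\hs\otimes\hs^\prime$ of dimension $\dim\hs$, and since $\dim(\hs\otimes\hs^\prime)=|I|\cdot\dim\hs\ge\dim\hs$, I can extend this isometry to a genuine unitary $U_M$ on the whole space by completing an orthonormal basis of the image to an orthonormal basis of $\hs\otimes\hs^\prime$ and assigning the leftover basis vectors of the domain accordingly.

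Finally I would verify the measurement statistics for an arbitrary input $\rho$ on $\hs$, reducing to pure states by linearity. Applying $U_M$ to $\rho\otimes|0\>\<0|$ gives
$$U_M(\rho\otimes|0\>\<0|)U_M^\dag=\sum_{i,j\in I}(M_i\rho M_j^\dag)\otimes|i\>\<j|,$$
and measuring the ancilla in the basis $\{|i\>\}$ with projectors $I\otimes|i\>\<i|$ yields outcome $i$ with probability $\tr(M_i\rho M_i^\dag)$ and unnormalised post-measurement state $(M_i\rho M_i^\dag)\otimes|i\>\<i|$, i.e. $|i\>\<i|\otimes M_i\rho M_i^\dag$ after reordering the tensor factors, exactly as claimed. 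There is essentially no genuine obstacle beyond the dimension bookkeeping in the isometry-to-unitary extension; the content is entirely the standard ancilla (Naimark) dilation of a general measurement, already recorded in \cite{NC00}, so the proof can alternatively just invoke that reference.
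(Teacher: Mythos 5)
Your proof is correct: the isometry computation via the completeness equation, the dimension-counting extension to a unitary, and the verification of the outcome probabilities and post-measurement states are all sound. The paper itself gives no proof of this lemma but simply recalls it from \cite{NC00}, and your construction is exactly the standard ancilla-dilation argument recorded there, so you have merely written out in full what the paper delegates to the reference.
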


In general, a POVM measurement can be implemented by the scheme of Lemma~\ref{lem:equivalent-measure}.
So, $U_M$ can be understood as necessary information to implement the measurement $M$.

Applying the above lemma to $\kif$-statements, we obtain: 
\begin{lem}[Equivalent implementation of if statement]
\label{lem:equivalent-if-statement}
    For an arbitrary if statement $\iif{M}{\qbar}{m}{P_m}$ where $M = \{M_i\}_{i\in I}$, let $q_a$ be an arbitrary fresh auxiliary variable with Hilbert space of dimension $|I|$ and initialised in basis state $|0\>$.
    Then:  
    \[
        \iif{M}{\qbar}{m}{P_m} \equiv U_M[\qbar,q_a]; \iif{}{q_a}{m}{P_m}.
    \]
    More precisely, if the auxiliary system is taken into account, then:
    \[
        \iinit{q_a}{}; \iif{M}{\qbar}{m}{P_m} \equiv 
        \iinit{q_a}{}; U_M[\qbar,q_a]; \iif{}{q_a}{m}{P_m}; \iinit{q_a}{}.
    \]
\end{lem}
\begin{proof}
    This is a direct calculation of the denotational semantics.
\end{proof}

The following proposition presents some generalised versions of laws (Sequentiality) and (Left-Distributivity) in Proposition~\ref{sequential-laws}, a special case of law (if-Expansion) in Proposition~\ref{init-laws}, as well as laws for \(\gateSWAP\).
They are needed in the proof of Theorem~\ref{thm-defer}.
\begin{prop}
\label{prop:extra-law} 
    \begin{enumerate}
        \item \label{prop:extra-law-if-splitting} (Sequentiality)
        Let $M$ be a projective measurement.
        If for all $m$, $\qbar\cap \qvar(P_m) = \emptyset$, then
        \[
            \iif{M}{\qbar}{m}{P_m; Q_m} \equiv 
            \iif{M}{\qbar}{m}{P_m}; \iif{M}{\qbar}{m}{Q_m}.
        \]
        \item \label{prop:extra-law-left-distributivity} (Left-Distributivity)
        If $\qbar\cap \qvar(P) = \emptyset$, then
        \[
            P; \iif{M}{\qbar}{m}{P_m} \equiv
            \iif{M}{\qbar}{m}{P; P_m}.
        \]
        \item \label{prop:extra-law-if-restrictive} (If-Restrictive)
        If $\qbar\cap \qvar(P) = \emptyset$, then
        \[
            \iif{M}{\qbar}{m}{P}; \iinit{\qbar}{|\phi\>} \equiv \iinit{\qbar}{|\phi\>}; P.
        \]
        \item \label{prop:extra-law-swapC} (\(\gateSWAP\)-Commutative)
        $\gateSWAP[\qbar_1,\qbar_2] \equiv \gateSWAP[\qbar_2,\qbar_1]$.
        Furthermore, if $\qvar(P)\subseteq \qbar_1$, then
        \[
            \gateSWAP[\qbar_1,\qbar_2]; P \equiv P[\qbar_2/\qbar_1]; \gateSWAP[\qbar_1,\qbar_2].
        \]
        \item \label{prop:extra-law-swap_cancel} (\(\gateSWAP\)-Cancellation)
        \[
            \iinit{\qbar_1}{|\phi_1\>}; \iinit{\qbar_2}{|\phi_2\>}; \gateSWAP[\qbar_1,\qbar_2] \equiv \iinit{\qbar_1}{|\phi_2\>}; \iinit{\qbar_2}{|\phi_1\>}.
        \]
    \end{enumerate}
\end{prop}
\begin{proof}
    Similar to the proofs of Proposition~\ref{sequential-laws}.
\end{proof}

Now we are ready to present the detailed proof of Theorem~\ref{thm-defer}. 
\begin{proof}[Proof of Theorem~\ref{thm-defer}]
    We proceed by induction on the structure of $P$.
    The strategy is that we prove Eq. (\ref{defer-2}) and $\qvar(C)\subseteq \qvar(P)\cup \qbar_a$ simultaneously. 

    \textbf{Case 1}.
    $P = \iskip$ or $\iabort$ or $P = C$ (quantum circuit).
    We introduce an auxiliary qubit $q_a$.
    Then:
    \begin{align*}
        & \iinit{q_a}{}; \bt{\iskip} \equiv \iinit{q_a}{}; \bt{\iskip; \iif{}{q_a}{m}{\iskip}}; \iinit{q_a}{}, \\
        & \iinit{q_a}{}; \bt{\iabort} \equiv \iinit{q_a}{}; \bt{\iskip; \iif{}{q_a}{m}{\iabort}}; \iinit{q_a}{}, \\
        & \iinit{q_a}{}; \bt{C} \equiv \iinit{q_a}{}; \bt{C; \iif{}{q_a}{m}{\iskip}}; \iinit{q_a}{}.
    \end{align*}
    It is worth noting that, in order to minimise the number of auxiliary qubits, it is not necessary to introduce $q_a$ for each $\iskip$, or $\iabort$, or $C$ if they are embedded within a branch of $\kif$-statement or sequential composition.

    \textbf{Case 2}.
    $P \equiv \iinit{q}{|\phi\>}$.
    We introduce an auxiliary quantum variable $q_a$ that has the same dimension as $q$.
    Let $U_{|\phi\>}$ be the unitary that prepares $|\phi\>$, i.e., $U_{|\phi\>}|0\> = |\phi\>$, noting that
    \[
        \iinit{q_a}{}; \bt{ \iinit{q}{|\phi\>}} \equiv \iinit{q_a}{}; \bt{(U_{|\phi\>}[q_a]; \gateSWAP[q_a,q]); \iif{}{q_a}{m}{\iskip}}; \iinit{q_a}{}
    \]
    by observing the following derivation:
    \begin{align*}
        RHS \equiv &\ \iinit{\rt{q_a}{}; } (\rt{U_{|\phi\>}[q_a]}; \gateSWAP[q_a,q]); \bt{\iif{}{q_a}{m}{\iskip}; \iinit{q_a}{}} \\
        \equiv &\ \rt{ \iinit{q_a}{}; U_{|\phi\>}[q_a]}; \gateSWAP[q_a,q]; \bt{ \iinit{q_a}{}; \iskip} \\
        \equiv &\ \iinit{q_a}{|\phi\>}; \bt{\gateSWAP[q_a,q]; \iinit{q_a}{}} \\
        \equiv &\ \bt{ \iinit{q_a}{|\phi\>}; \iinit{q}{}; \gateSWAP[q_a,q]} \\
        \equiv &\ \iinit{q_a}{}; \bt{ \iinit{q}{|\phi\>}}.
    \end{align*}
    
    \textbf{Case 3}.
    $P = P_1; P_2$.
    By the induction hypothesis, we assume that $P_1,P_2$ have the following normal forms:
    \begin{align*}
        & \iinit{\qbar_{a1}}{}; P_1 \equiv \iinit{\qbar_{a1}}{}; C_1; \iif{}{\qbar_{a1}}{m_1}{P_{m_1}}; \iinit{\qbar_{a1}}{} \\
        & \iinit{\qbar_{a2}}{}; P_2 \equiv \iinit{\qbar_{a2}}{}; C_2; \iif{}{\qbar_{a2}}{m_2}{P_{m_2}}; \iinit{\qbar_{a2}}{}.
    \end{align*}
    Since we have sufficient auxiliary variables and the construction does not depend on specific auxiliary variables, we can use disjoint auxiliary variables for each normal form, ensuring that $\qbar_{a1}\cap \qbar_{a2} = \emptyset$.
    Set $\qbar_a \triangleq \qbar_{a1},\qbar_{a2}$.
    Then we claim:
    \[
        \iinit{\qbar_{a}}{}; \bt{(P1; P2)} \\
        \equiv \iinit{\qbar_{a}}{}; \bt{(C_1; C_2); \iif{}{\qbar_{a}}{(m_1, m_2)}{P_{m_1m_2}}}; \iinit{\qbar_{a}}{}
    \]
    where $P_{m_1m_2} = \iskip$ if both $P_{m_1} = P_{m_2} = \iskip$, and $P_{m_1m_2} = \iabort$ otherwise (i.e., if $P_{m_1} =\iabort$ or $P_{m_2} =\iabort$).
    This can be proved by observing that:
    \begin{align*}
        & \iinit{\qbar_{a}}{}; \bt{(P1; P2)} \\
        \equiv\ &\iinit{\qbar_{a}}{}; \bt{ \iinit{\qbar_{a1}}{}; \iinit{\qbar_{a2}}{}; \bt{P1; P2}} \\
        \equiv\ &\iinit{\qbar_{a}}{}; \bt{(\iinit{\qbar_{a1}}{}; P1)}; \rt{(\iinit{\qbar_{a2}}{}; P2)} \\
        \equiv\ &\bt{ \iinit{\qbar_{a}}{}; \iinit{\qbar_{a1}}{}}; C_1; \iif{}{\qbar_{a1}}{m_1}{P_{m_1}}; \rt{ \iinit{\qbar_{a1}}{}}; \\
        & \bt{ \iinit{\qbar_{a2}}{}}; C_2; \iif{}{\qbar_{a2}}{m_2}{P_{m_2}}; \rt{ \iinit{\qbar_{a2}}{}} \\
        \equiv\ &\iinit{\qbar_{a}}{}; C_1; \bt{\iif{}{\qbar_{a1}}{m_1}{P_{m_1}}}; \\
        & \bt{C_2}; \iif{}{\qbar_{a2}}{m_2}{P_{m_2}}; \iinit{\qbar_{a}}{} \\
        \equiv &\iinit{\qbar_{a}}{}; (C_1; C_2); \bt{\iif{}{\qbar_{a1}}{m_1}{P_{m_1}}; } \\
        & \bt{\iif{}{\qbar_{a2}}{m_2}{P_{m_2}}}; \iinit{\qbar_{a}}{} \\
        \equiv\ &\iinit{\qbar_{a}}{}; (C_1; C_2); \bt{\iif{}{\qbar_{a1}}{m_1}{\iif{}{\qbar_{a2}}{m_2}{P_{m_1}; P_{m_2}}}};
        \iinit{\qbar_{a}}{} \\
        \equiv\ &\iinit{\qbar_{a}}{}; (C_1; C_2); \iif{}{\qbar_{a1},\qbar_{a2}}{(m_1, m_2)}{P_{m_1}; P_{m_2}}; \iinit{\qbar_{a}}{} \\
        \equiv\ &\iinit{\qbar_{a}}{}; \bt{(C_1; C_2); \iif{}{\qbar_{a}}{(m_1, m_2)}{P_{m_1m_2}}}; \iinit{\qbar_{a}}{}
    \end{align*}
    For the variable sets, it suffices to note that 
    \[
        \qvar(C_1; C_2) = \qvar(C_1)\cup \qvar(C_2)\subseteq (\qvar(P_1)\cup \qbar_{a1})\cup(\qvar(P_2)\cup \qbar_{a2}) = (\qvar(P_1; P_2)\cup \qbar_{a}).
    \]

    \textbf{Case 4}.
    $P = \iif{M}{\qbar}{i}{P_i}$.
    This case was already proved in Section~\ref{sec-app}.
    What remains is to check the condition on variable sets.
    Indeed, we have:
    \begin{align*}
        & \qvar(U_M[\qbar,a]; \iqif{a}{\square |i\>}{C_i}) 
        = \qbar\cup a\cup\bigcup\nolimits_i\qvar(C_i) \\
        \subseteq\ & q\cup \bigcup\nolimits_i\left(\qvar(P_i)\cup \rbar_i\right) \cup a 
        = \left(q\cup\bigcup\nolimits_i\qvar(P_i)\right)\cup \left(a\cup \bigcup\nolimits_i\rbar_i\right)
        = \qvar(P) \cup \qbar_a.
    \end{align*}
\end{proof}

\section{A Discussion about Preservation of Refinement by Loop}\label{ref-loop}

In Section~\ref{Sec-Refine}, we showed that refinement is preserved by sequential composition, $\kif$-statement and nondeterministic choice. Here, we further conjecture:

\begin{conjecture}[Refinement rule for loops]
    If $P\sqsubseteq Q$, then 
    \[
        \swhile{M}{P}\sqsubseteq \swhile{M}{Q}.
    \]
\end{conjecture}

At this point, however, we have not yet proved it.
Let us examine where the difficulty arises.
By definition, we have: 
\[
    \sem{\swhile{M}{P}}=\left\{\sum_{i=0}^\infty\E_0\circ(\mathcal{F}_i\circ\E_1)\circ...\circ(\mathcal{F}_1\circ\E_1):\mathcal{F}_1,...,\mathcal{F}_i, ...\in\sem{P}\right\}.
\]
Hence, we need to show that for any $\mathcal{F}_1,...,\mathcal{F}_i,...\in\sem{P}$, we have
\[
    \sum_{i=0}^\infty\E_0\circ(\mathcal{F}_i\circ\E_1)\circ...\circ(\mathcal{F}_1\circ\E_1)\in\ \overline{\mathit{Con}(\sem{Q})}.
\]
Indeed, since $P\sqsubseteq Q$, for each $i$, it follows from $\mathcal{F}_i\in\sem{P}$ that $\mathcal{F}_i\in \overline{\mathit{Con}(\sem{Q})}$, and 
\[
    \mathcal{F}_i=\sum_{j_i}p_{ij_i}G_{ij_i}
\]
for some (finite or countably infinite) probability distribution $\{p_{ij_i}\}$ and $G_{ij_i}\in\sem{Q}.$
Therefore, we have: 
\begin{align}\label{final-swap}
    &\sum_{i=0}^{\infty}\E_0\circ(\mathcal{F}_i\circ\E_1)\circ...\circ(\mathcal{F}_1\circ\E_1) =\sum_{i=0}^\infty\sum_{j_1,...,j_i}\left(\prod_{t=1}^ip_{ti_t}\right)\mathcal{E}_0\circ\left(\mathcal{G}_{ij_i}\circ\mathcal{E}_1\right)\circ\cdots\circ\left(\mathcal{G}_{1j_1}\circ\mathcal{E}_1\right)
\end{align}
However, it is not immediately clear how to interchange the order of big product and big sum; that is, it is not known whether the RHS of (\ref{final-swap}) is equivalent to $\sum_{j_1,j_2,...}p_{j_1,j_2,...}\sum_{i=0}^\infty\mathcal{E}_0\circ\left(\mathcal{G}_{ij_i}\circ\mathcal{E}_1\right)\circ\cdots\circ\left(\mathcal{G}_{1j_1}\circ\mathcal{E}_1\right)$ for some probability distributions $\{p_{j_1,j_2,\cdots}\}$.

\section{A ZX-Calculus Proof of Example~\ref{exam:circuit opt}}
\label{app:zx}
In this section, we demonstrate how Example~\ref{exam:circuit opt} can be proved using the ZX-calculus, thus providing a concrete comparison with our approach.  
The rules we use are from the standard ZX-calculus, as shown in Figure~\ref{fig:zx-rules}, drawn from~\cite{PyZX}.

\begin{figure}[ht]
\centering
\begin{tabular}{|c|}
\hline \\[-0.2cm]
\quad \scalebox{0.93}{\tikzfig{figure/zxrule}} \quad\ \\[2.3cm]
\hline 
\end{tabular}
\caption{\label{fig:zx-rules} A convenient presentation of the ZX-calculus rules from~\cite{PyZX}. These rules hold
for all $\alpha, \beta \in [0, 2\pi)$, and each rule also holds with the colours interchanged.}
\end{figure}

By rewriting the circuits as ZX-diagrams, the equivalence of the two circuits is translated into the equivalence of the following two ZX-diagrams:
\begin{equation}
\label{eqn: zx-example}
\tikzfig{figure/fig1}\quad\tikzfig{figure/eq}\quad
\tikzfig{figure/fig3}.
\end{equation}

Applying the fusion rule ($f$) to merge all spiders of the same type, we immediately obtain the simplification of the LHS:
\begin{align*}
\tikzfig{figure/fig1}\quad&
\tikzfig{figure/fig2}.
\end{align*}

The simplification of the RHS proceeds in three steps. First, we apply the $\pi$-commutation rule ($\pi$) to swap the intermediate $X$ spider with the subsequent $Z$ spider. Then, we use the strong complementarity rule ($b$) to interchange the $X$ and $Z$ spiders. Finally, we apply the fusion rule ($f$) to obtain the same ZX-diagram as above:
\begin{align*}
\tikzfig{figure/fig3}\quad
\tikzfig{figure/fig4}\quad&
\tikzfig{figure/fig5}\quad
\tikzfig{figure/fig2}.
\end{align*}

\end{document}